\theparentequation\alph{equation}}
\newcommand{\subequationsformat}{\theparentequation.\arabic{equation}}
\newcommand{\e}{\mathrm{e}}
\newtheorem{definition}{\textsc{Definition}}[section]
\newtheorem{theorem}{\textsc{Theorem}}[section]
\newtheorem{lemma}{\textsc{Lemma}}[section]
\newtheorem{remark}{\textsc{Remark}}[section]
\begin{document}

\begin{center}
{\Large All meromorphic solutions of a 3D Lotka-Volterra system: detecting partial integrability}
\vskip 3mm
Techheang Meng\footnote{techheang.meng.21@ucl.ac.uk} and Rod Halburd\footnote{r.halburd@ucl.ac.uk}
\vskip 3 mm
Department of Mathematics, University College London, \\ Gower Street, London WC1E 6BT
\end{center}

\begin{abstract}
For an autonomous system of ordinary differential equations, the existence of a meromorphic general solution is equivalent to the Painlev\'e property, which is widely used to detect integrability. We find all meromorphic solutions of a multi-parameter three-dimensional Lotka-Volterra system.  Some cases correspond to particular choices of the parameters for which only some solutions are meromorphic, while the general solution is branched.  The main difficulty is to prove that all meromorphic solutions have been found.  The proof relies on a detailed study of local series expansions combined with value distribution results from Nevanlinna theory.
\end{abstract}

\section{Introduction}
\label{introduction}
In 1889, Kowalevskaya \cite{kowalong} famously found all choices of parameters in the equations of motion for a spinning top for which the general solution, considered over the complex numbers, is meromorphic.  The result included the previously known integrable symmetric top together with two other known integrable cases due to Euler and Lagrange, as well as one new case, now called the Kowalevskaya top, which she was able to integrate \cite{kowashort} using Riemann theta functions.  In fact, in an earlier letter to Mittag-Leffler from December 1884 (see \cite{KM} pages 80--82), Kowalevskaya described unpublished work in which she analysed the system of equations
\begin{equation}
\label{kowalevskaya-LV}
	\dot\omega_j=\omega_j\sum_{k=1}^3 a_{jk}\omega_k,\quad j=1,2,3,
\end{equation}
where $a_{jk}$ are constants.  She observed that if $a_{12}a_{23}a_{31}=a_{13}a_{32}a_{21}$, then there is a meromorphic solution depending on three arbitrary parameters such that all poles are simple.  Kowalevskaya commented that the general solution can be expressed explicitly in terms of elliptic functions.

The idea that a meromorphic general solution suggests that an ODE is in some sense integrable was extended by Painlev\'e and his school to non-autonomous equations.  Solutions of non-autonomous equations can have fixed as well as movable singularities.  A fixed singularity of a solution is one that occurs at a value of the independent variable at which the equation itself, not just the solution, is singular.  The location of movable singularities depend on initial conditions.  An ODE is said to possess the {\em Painlev\'e property} if all solutions are single-valued about all movable singularities.

In this paper, we will study the following third-order Lotka-Volterra system:
\begin{subequations}
\label{eq:1}
	\begin{align}
    \label{eq:1,1}
	x'=x(Cy+z+\lambda),\\
    \label{eq:1,2}
	y'=y(Az+x+\mu),\\
    \label{eq:1,3}
	z'=z(Bx+y+\nu),
	\end{align}
\end{subequations}
where $A$, $B$, $C$, $\lambda$, $\mu$ and $\nu$ are complex parameters.  The case $\lambda=\mu=\nu=0$ is a special case of the system
(\ref{kowalevskaya-LV}) studied by Kowalevskaya, and is often referred to as the $ABC$ Lotka-Volterra system.

The Painlev\'e property for the Lotka-Volterra system (\ref{eq:1}) was analysed in \cite{bountis84,lvsystem90}.
In this paper, we will determine all meromorphic solutions of the system  (\ref{eq:1}), even in cases for which the general solution is not meromorphic. Hence we provide another example of extending the ideas of Kowalevskaya and Painlev\'e to partially integrable systems in the sense that solutions with nice singularity structure can be characterised (in our case, found explicitly), even when the equation is not fully integrable.  The main challenge is to show that all meromorphic solutions have been found.   Some meromorphic solutions of (\ref{eq:1}) and some general solutions (corresponding to the integrable cases) have already been found (see \cite{lvsystem90}).

The starting point for Kowalevskaya's analysis, and what has become known as the Painlev\'e test, is to look for sufficiently general Laurent series expansions of solutions.  In many cases, an obstruction to such an expansion can be used to prove the existence of a branched solution.  When looking for equations with the Painlev\'e property, an equation can be discarded as soon as branching around a movable singularity is discovered in any solution. 

A more subtle analysis is required to prove that branching occurs in all solutions other than some particular solutions. Nevanlinna theory \cite{hayman1964meromorphic}, which describes the value distribution of meromorphic functions, provides the extra global information required.  Nevanlinna theory has many applications to differential equations \cite{laine93}.  One important such application is to show that all meromorphic solutions (or sometimes all {\em admissible} meromorphic solutions, namely all meromorphic solutions that are in some sense more complicated than the coefficients in an equation) must take a particular value many times. If this value is a singularity of the equation, then by studying the appropriate Laurent series expansion, necessary conditions can then be deduced that must apply to all such solutions. Another application concerns equations with the {\em finiteness property} defined by Eremenko in \cite{eremenkoln} but used in several previous works \cite{hille:78,eremenko1986meromorphic,eremenko2005}, which applies to certain autonomous equations for which any expansion of a solution around a pole takes one of only a finite number of forms. In such cases, the Laurent series expansions contain no free parameters.  This property does not apply in many of the cases that arise in our analysis of the system (\ref{eq:1}) as the expansions have non-negative integer resonances.  

The Hayman equation $ww''-(w')^2=\alpha(z)w +\beta(z) w'+\gamma(z)$, where $\alpha$, $\beta$ and $\gamma$ are meromorphic functions, is a simple ODE for which the resonances in an expansion about a zero of $w$ can occur at arbitrary values determined by the coefficients.  If the resonances are non-integer, then the equation has the finiteness property, but even in the constant coefficient case, resonances can be arbitrarily large integers.  When $\alpha$, $\beta$ and $\gamma$ are constants, all meromorphic solutions were found in Chiang and Halburd \cite{chiang2003meromorphic}. In the case of general meromorphic coefficients, all admissible meromorphic solutions were found in Halburd and Wang \cite{halburdwang14}.  The main idea is to show that, provided the resonance does not occur in the first couple of terms in the Taylor series expansion, a rational function of $w$ and $w'$ can be constructed that, as a meromorphic function in $z$, is ``small'' in the sense of Nevanlinna compared to $w$.  This function can then be determined explicitly in terms of the coefficients and possibly an arbitrary constant, showing that $w$ must be an admissible solution of a first-order equation, which can be solved.

Some of these ideas are key to our analysis, however the problem of finding all meromorphic solutions of (\ref{eq:1}) is novel in several ways.  The first is that it is a system of equations rather than a single scalar equation, but more importantly, it is a system with a number of different leading-order behaviours, often generating Laurent series expansions with non-negative integer resonances.  In general, our analysis needs to vary depending on which of these singularities is present.  

The Lotka-Volterra system (\ref{eq:1}) is non-singular for any initial condition where $x$, $y$ and $z$ are all finite. So constraints arising from the existence of a Laurent series expansion will only arise when at least one of the dependent variables has a pole.
To this end, we look for formal series solutions of the form
\begin{equation}
\label{eq:2}
\begin{aligned}
    x=\sum_{k=0}^\infty x_k(t-t_0)^{p_x+k},\quad y=\sum_{k=0}^\infty y_k(t-t_0)^{p_y+k},\quad z=\sum_{k=0}^\infty z_k(t-t_0)^{p_z+k},
\end{aligned}    
\end{equation}
where at least one of $p_x,p_y,p_z$ is negative and $x_0y_0z_0\neq0$. It has been shown in \cite{lvsystem90} that
\begin{equation}
\label{orderpole}
(p_x,p_y,p_y)=(-1,-1,-1),\ (-1,-1,\gamma),\ (\alpha,-1,-1)\mbox{\ \ or\ \ }(-1,\beta,-1),
\end{equation}
where $\alpha,\beta,\gamma\in \mathbb{Z}_0^+$. We refer to these cases respectively as the $\mathbf{p_0}$ poles, $\mathbf{p_z}$ poles, $\mathbf{p_x}$ poles, and $\mathbf{p_y}$ poles. Here, $\mathbb{Z}_0^+$ is the set of non-negative integers. We will denote  the $x_k$ term in (\ref{eq:2}) by  $x_k,\hat{x}_k$, $\breve{x}_k$, $\check{x}_k$ for $\mathbf{p_0}$, $\mathbf{p_z}$, $\mathbf{p_x}$, $\mathbf{p_y}$ poles respectively. Further details of the coefficients of (\ref{eq:2}) and other poles will be given in section \ref{painleveana}. We will see that the discussion of meromorphic solutions of (\ref{eq:1}) depends on whether $ABC+1$ vanishes and on the relationships between $\lambda,\mu,$ and $\nu$. Accordingly, we divide the analysis into three sections (sections \ref{holosolu}-\ref{abc+1not0}). The main result of this paper is the following theorem.
\begin{theorem}
\label{mainthm}
Let $D=ABC+1$, $\mathrm{j}=-(1+\mathrm{i}\sqrt{3})/2$ and $K,L\in\mathbb{C}$. Then all meromorphic solutions of (\ref{eq:1}) are listed in Table \ref{tab1} or can be obtained from them by applying one of the following transformations:
\begin{subequations}
\label{transmaps}
\begin{align}
    \label{cyclemaps1}
    &\pi_1(x,y,z,B,C,A,\lambda,\mu,\nu)=(y,z,x,C,A,B,\mu,\nu,\lambda),\\
    \label{cyclemaps2}
    &\pi_2(x,y,z,B,C,A,\lambda,\mu,\nu)=(z,x,y,A,B,C,\nu,\lambda,\mu),\\
    \label{swapmaps1}
    &\pi_{xy}(x,y,z,B,C,A,\lambda,\mu,\nu)=(Cy,Bx,Az,1/C,1/B,1/A,\mu,\lambda,\nu),\\
    \label{swapmaps2}
    &\pi_{yz}(x,y,z,B,C,A,\lambda,\mu,\nu)=(Bx,Az,Cy,1/B,1/A,1/C,\lambda,\nu,\mu),\\
    \label{swapmaps3}
    &\pi_{zx}(x,y,z,B,C,A,\lambda,\mu,\nu)=(Az,Cy,Bx,1/A,1/C,1/B,\nu,\mu,\lambda),\\
    \label{conjugatemap}
    &\pi_{c}(x,y,z,B,C,A,\lambda,\mu,\nu)=(\bar{x},\bar{y},\bar{z},\bar{B},\bar{C},\bar{A},\bar{\lambda},\bar{\mu},\bar{\nu}) ,
    \end{align}
\end{subequations}
where $\bar{\lambda}$ is the conjugate of $\lambda$ and $\bar{f}$ is defined by $\bar f(z)=\overline{f(\bar z)}$.
The first column in Table \ref{tab1} lists the types of singularities that $x,y,$ and $z$ must admit in each case, with no other poles present.
The 2D L-V poles in the first column in Table \ref{tab1} corresponds to the two-dimensional L-V system.
\begin{table}[ht]
\caption{Canonical meromorphic solutions of (\ref{eq:1})}
\label{tab1}
\vspace{-0.5cm}
\begin{center}
\begin{tabular}{|m{4.1em}|m{2.8em}|m{27em}|}
     \hline
      \text{Poles} & \text{Section} & \text{Meromorphic solutions}  \\    \hline
      \multirow{2}{5em}{None} & \multirow{2}{1.5em}{\ref{holosolu}} & $\{x,y,z\in\mathbb{C}\},\{x=y=0,z=K\e^{\nu t}\}$,$\{A=x=0,z'=z(y+\nu),$\\
      & & $y'=\mu y\}$,$\{D=0,\lambda=\mu=\nu,z=-Cy=BCx=BCK\e^{\lambda t}\}$\\
      \hline
      $\mathbf{p_0}$ & \ref{0-3pole} & $D=0,B=(C-1)/C,$ (\ref{0-3poleeq:01}),(\ref{0-3poleeq:02}),(\ref{0-3poleeq:1})\\
      \hline
      \multirow{2}{5em}{$\mathbf{p_z}$} & \multirow{2}{1.5em}{\ref{0-x-y=0}} & $D=\gamma=0,\lambda=\mu,\{x'=x(x+L\e^{\nu t}+\lambda),y=C^{-1}x,z=L\e^{\nu t}\}$,\\
      & & $\{y=C\nu(Ke^{-C\nu t}-C)^{-1},x=K\e^{-C\nu t}y, z=-\lambda,\lambda\neq0\}$\\
      \hline
      2D L-V & \ref{0-x-y=0} & $\lambda=\mu,z=0,x'=x(x+K\e^{\lambda t}+\lambda),y=C^{-1}(x+K\e^{\lambda t})$\\
      \hline
      \multirow{2}{5em}{$\mathbf{p_z},\mathbf{p_x}$} & \multirow{2}{1.5em}{\ref{0-x-y-y-z==0}} & $A=C^{-1}=\gamma+1,B=1,\lambda=\mu=\nu,x'=x(x+K\e^{\lambda t}+\lambda),$\\
      & & $z=(K\e^{\lambda t}+x)(CL\e^{-A\lambda t}x^{A}+1)^{-1},y=L\e^{-A\lambda t}x^{A}z$\\
      \hline
      \text{$\mathbf{p_z},\mathbf{p_x},\mathbf{p_y}$} & \ref{0-2pole} & $D=0,\lambda=\mu=\nu$, see section \ref{0-2pole}\\
      \hline
      $\mathbf{p_0}$ & \ref{neq03*} & $D\neq0$, (\ref{neq03*eq:1}),(\ref{neq03*eq:2.1}),(\ref{neq03*eq:2.2}),(\ref{neq03*eq:2.3})\\
      \hline
      \multirow{2}{5em}{$\mathbf{p_z}$} & \multirow{2}{1.5em}{\ref{neq0x-y}} & $B=-C^{-1},\lambda\neq\mu,\nu(\mu-A\lambda)=0,z=(\mu-\lambda)(1-A)^{-1},$\\
      & & $y'=y(Cy+C\nu+(\mu-A\lambda)(1-A)^{-1}),x=C(y+\nu)$\\
      \hline
      $\mathbf{p_z},\mathbf{p_0}$ & \ref{neq0x-y3} & $D\neq0,B=-C^{-1}-\gamma,$(\ref{3x-y3eq:22}),(\ref{3x-y3eq:25}),(\ref{3x-y3eq:28,31}),(\ref{l=mx-y3eq:1}),(\ref{l=mx-y3eq:4}),(\ref{l=mx-y3eq:6})\\
      \hline
      \text{$\mathbf{p_z},\mathbf{p_x}$} & \ref{neq0x-yy-z} & $D\neq0$, (\ref{l=my-zz-zeq:5}),(\ref{l=my-zz-zsubeq:5.1}),(\ref{l=my-zz-zsubeq:5.2}),(\ref{l=m=nx-yy-zeq:5})\\
      \hline
      \text{$\mathbf{p_z},\mathbf{p_x},\mathbf{p_0}$} & \ref{l=mx-yy-z3} & $D\neq0,\lambda=\mu\neq\nu,A=-C^{-1}=B+\gamma$,(\ref{l=mx-yy-z3eq:11}),(\ref{l=mx-yy-z3eq:12}),(\ref{l=mx-yy-z3eq:13}),(\ref{l=mx-yy-z3eq:16})\\
      \hline
      \text{$\mathbf{p_z},\mathbf{p_x},\mathbf{p_y}$} & \ref{l=mx-yy-zz-x} & $\lambda=\mu\neq\nu,B=C=-A^{-1},C^2+C\gamma+1=0$ (\ref{l=mx-yy-zz-xeq:11}),(\ref{l=mx-yy-zz-xeq:20}),(\ref{l=mx-yy-zz-xeq:24}) \\
      \hline
      \text{$\mathbf{p_z},\mathbf{p_x},\mathbf{p_y}$} & \ref{l=m=nx-yy-zz-x} & $D\neq0,\lambda=\mu=\nu,$ see sections \ref{neq0x-yy-z} and \ref{l=m=nx-yy-zz-x}\\
      \hline
\end{tabular}    
\end{center}
\vspace{-0.5cm}
\end{table}
\end{theorem}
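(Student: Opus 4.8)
The plan is to reduce the classification to a bounded number of parameter regimes and pole configurations, and then, in each of them, to use value-distribution theory to collapse (\ref{eq:1}) to an explicitly solvable first-order equation. First I would observe that the maps (\ref{transmaps}) --- the two cyclic rotations, the three ``swaps'', and the conjugation $\pi_c$ --- generate a finite group acting on the systems (\ref{eq:1}) and carrying meromorphic solutions to meromorphic solutions, so it is enough to list one representative per orbit, which is what the rows of Table \ref{tab1} do. The purely holomorphic case (no pole in any of $x,y,z$) is handled separately: from $x'/x=Cy+z+\lambda$ together with its two cyclic images, an entire-function/logarithmic-derivative argument followed by elementary ODE analysis pins down exactly the four families in the first row. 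For the rest, assume the solution is not entire, so at least one component has a pole.

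Second, I would invoke the leading-order result (\ref{orderpole}) of \cite{lvsystem90}: near any pole the Laurent expansion (\ref{eq:2}) has one of the shapes $\mathbf{p_0},\mathbf{p_x},\mathbf{p_y},\mathbf{p_z}$, with the non-simple exponents in $\mathbb{Z}_0^+$. The key point is that the mere existence of one pole of a given type forces the Painlev\'e compatibility (resonance) conditions for that type to hold identically --- otherwise the expansion would require logarithmic terms and the solution would be branched there. These compatibility conditions are algebraic relations among $A,B,C,\lambda,\mu,\nu$, and sorting them according to whether $D=ABC+1$ vanishes and according to the coincidences among $\lambda,\mu,\nu$ is exactly what produces the parameter constraints in the rows of Table \ref{tab1}. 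In parallel, the lemma on the logarithmic derivative applied to $x'/x=Cy+z+\lambda$ (cyclically) gives $m(r,Cy+z)=S(r)$ and its permutations, where $S(r)$ is small relative to $T(r)=T(r,x)+T(r,y)+T(r,z)$; this confines the growth of the solution to its poles. If it has only finitely many poles it is integrated directly (these yield the exponential-type entries of the table), and otherwise it has infinitely many generic poles of at least one allowed type, which is the situation handled next.

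Third, and case by case, I would build from $x,y,z$ a rational auxiliary function $\phi$ --- guided by the known invariants of (\ref{eq:1}), e.g.\ monomials in $x,y,z$ or combinations such as $Bx+y+\nu$ --- designed so that the leading singularities cancel at generic poles of the solution; then $N(r,\phi)=S(r)$, and together with $m(r,\phi)=S(r)$ this yields $T(r,\phi)=S(r)$. Substituting back into the system shows $\phi$ satisfies a first-order autonomous equation whose only solutions of growth $S(r)$ are constants or simple exponentials $K\e^{ct}$, and the value of the remaining constant is fixed by matching against a generic Laurent expansion. The resulting relation reduces (\ref{eq:1}) to a single first-order equation for one component --- typically of Riccati or Bernoulli type, or one integrable by a single quadrature --- whose general solution is elementary or elliptic; comparing once more with the admissible Laurent data isolates precisely the solutions listed in the relevant section. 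One then checks that each tabulated family really is a meromorphic solution and that the list is closed under (\ref{transmaps}).

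The principal obstacle is the coexistence and interaction of several pole types together with the non-negative integer resonances, which is exactly why Eremenko's finiteness property is unavailable here: the Laurent expansions carry genuine free parameters, so the auxiliary function $\phi$ must be re-engineered in each of the configurations $\mathbf{p_0}$, $\mathbf{p_z}$, $\{\mathbf{p_z},\mathbf{p_x}\}$, $\{\mathbf{p_z},\mathbf{p_0}\}$, $\{\mathbf{p_z},\mathbf{p_x},\mathbf{p_y}\}$, and so on, and one must simultaneously exclude the possibility that a would-be meromorphic solution mixes incompatible pole types or conceals a branch point at a non-generic pole. Pushing this bookkeeping through --- so that in every parameter/pole case the meromorphic solutions are either completely accounted for or the general solution is provably branched --- is where essentially all of the effort goes.
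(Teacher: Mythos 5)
Your opening moves match the paper's: the symmetry group generated by (\ref{transmaps}), the separate treatment of the entire solutions, the leading-order classification (\ref{orderpole}) with its resonance conditions, and the logarithmic-derivative estimate giving $m(r,u)=S(r,v)$ (Lemmas \ref{lem1}--\ref{lem2}) are all exactly the paper's scaffolding, as is the idea of building auxiliary functions whose poles cancel at generic singularities so that $T(r,\phi)=S(r)$. The genuine gap is in your closing mechanism. You assert that in each configuration the small auxiliary function satisfies a first-order autonomous equation forcing it to be constant or exponential, and that this ``reduces (\ref{eq:1}) to a single first-order equation for one component --- typically of Riccati or Bernoulli type''. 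That is how a few cases end (e.g.\ sections \ref{neq0x-y}, \ref{neq0x-yy-z}, and the $g=0$ branch of \ref{neq0x-y3}), but it is not how the bulk of the $D\ne0$ multi-pole cases are finished, and it cannot be: precisely because the expansions carry free resonance parameters, a single constant auxiliary function does not collapse the system to first order. What the paper actually does is use the constancy of the auxiliary function (via Lemma \ref{lem3}, and Lemma \ref{lemmafurtherfunction} when the local coefficients of the auxiliary function still depend on the free parameter $\zeta$, so that the coefficients $a_i$ must first be tuned to kill that dependence) to show that the resonance parameter at each pole type satisfies a polynomial equation of bounded degree, hence that there are only finitely many admissible Laurent expansions. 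This finiteness, combined with the slow-growth estimates, is what forces $x,y,z$ into Class $W$ (Lemmas \ref{subanslem} and \ref{anslem}, following Eremenko), with explicitly bounded pole counts $l,m,n,p$ per fundamental region. The classification is then completed by substituting the finite-parameter rational-in-$\tau$ ans\"atze (\ref{anslemeq:1})--(\ref{anslemeq:2}) into the system and checking consistency by resultant computations $R_C(\tilde P_1,\tilde P_2)$, degree counts on linear combinations such as $I=-A^{-1}y+z-Bx$, and non-vanishing of the determinants $D(j)$; this is also the mechanism by which whole configurations (sections \ref{neq0x-yy-zz-x}, \ref{neq0x-yy-zz-x3}, \ref{l=m=nx-yy-zz-x3}) are shown to admit no solutions at all, something your sketch does not supply.

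Two smaller points. First, your statement that finitely many poles lets one ``integrate directly'' hides a real step: one needs the Eremenko-type argument in Lemma \ref{subanslem}$(c)$--$(d)$ to pass from finitely many poles in a fundamental region plus $m(r,x)=S(r,x)$ to membership in Class $W$ with $\deg P\le\deg Q$. Second, the solutions in, say, (\ref{l=mx-y3eq:1})--(\ref{l=mx-y3eq:6}) and (\ref{l=mx-yy-z3eq:11})--(\ref{l=mx-yy-z3eq:16}) are not all obtained from Riccati or Bernoulli reductions; several are pinned down only by matching the global ansatz against the local data at every pole type simultaneously. Without the finiteness-property $\Rightarrow$ Class $W$ $\Rightarrow$ finite ansatz pipeline, your plan cannot certify that the list in Table \ref{tab1} is complete.
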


All the meromorphic solutions of (\ref{eq:1}) have explicit forms as described in Theorem \ref{mainthm}. The proof of Theorem \ref{mainthm} is provided in sections \ref{holosolu}-\ref{abc+1not0}.

\section{Local series analysis}
\label{painleveana}
The expansions described in this section have been obtained previously in \cite[§7]{lvsystem90}.  However, it is important to recall that we are not only concerned with the cases in which the Lotka-Volterra system has the Painlev\'e property and, as such, we also need to consider expansions with fewer than three arbitrary parameters.

If at least one of $x$, $y$, $z$ has a pole at $t=t_0$, then they have Laurent series expansions of the form (\ref{eq:2}), where the possible values of $p_x$, $p_y$, $p_z$ are given by
(\ref{orderpole}).
We begin by trying to develop the Laurent series expansion in the case $(p_x,p_y,p_z)=(-1,-1,-1)$.  
Equation (\ref{eq:2}) now takes the form
\begin{equation}
    \label{3x-y3eq:1}
    {\begin{aligned}
        x=\sum_{k=0}^\infty x_k(t-t_0)^{k-1},\quad y=\sum_{k=0}^\infty y_k(t-t_0)^{k-1},\quad z=\sum_{k=0}^\infty z_k(t-t_0)^{k-1}.
    \end{aligned}}
\end{equation}
Substituting (\ref{3x-y3eq:1}) into (\ref{eq:1}) and equating coefficients of $(t-t_0)^{-2}$ yields
\begin{equation}
\label{eq:3}
    \begin{pmatrix}
    0 & C & 1\\
    1 & 0 & A\\
    B & 1 & 0
    \end{pmatrix}
    \begin{pmatrix}
    x_0\\y_0\\z_0
    \end{pmatrix}
    =\begin{pmatrix}
    -1\\-1\\-1
    \end{pmatrix},
\end{equation}
and for $k\ge 1$, equating coefficients of $(t-t_0)^{k-2}$ yields
\begin{equation}
\label{eq:5}
    \begin{pmatrix}
    k & -Cx_0 & -x_0\\
    -y_0 & k & -Ay_0\\
    -Bz_0 & -z_0 & k
    \end{pmatrix}
    \begin{pmatrix}
    x_k\\y_k\\z_k
    \end{pmatrix}
    =\begin{pmatrix}
    P_k\\Q_k\\R_k
    \end{pmatrix},
\end{equation}
where $P_k,Q_k,R_k$ are polynomials in lower-indexed terms.
The determinant of the matrix in (\ref{eq:3}) is $D=ABC+1$. If $D\neq0$, then the leading-order coefficients are uniquely determined:
\begin{equation}
\label{eq:4}
    (x_0,y_0,z_0)=\left(\frac{-CA+A-1}{D},\frac{-AB+B-1}{D},\frac{-BC+C-1}{D}\right).
\end{equation}
If $D=0$, then there is a resonance condition at leading-order, which we see from equation (\ref{eq:3}) is
\begin{equation}
    \label{eq:8}
    BA-B+1=CB-C+1=AC-A+1=0,
\end{equation} 
where we have used the fact that $ABC+1=0$.

The determinant of the coefficient matrix in equation (\ref{eq:5}) is 
\begin{equation}
\label{eq:7}
    \Delta(k)=(k+1)\left(k^2-k-Dx_0y_0z_0\right).
\end{equation}
The roots of (\ref{eq:7}) are called resonances.  If $k$ is a positive integer resonance then equation (\ref{eq:5}) shows that some linear combination of $P_k$, $Q_k$ and $R_k$ must vanish, which is known as a resonance condition.  If this condition is not satisfied, there is no Laurent series of the required form. If it is satisfied then equation (\ref{eq:5}) with this value of $k$ cannot be used to determine $x_k$, $y_k$ and $z_k$ uniquely in terms of lower-indexed coefficients.
This corresponds to an undetermined parameter in the series expansion at this level. Note that this parameter can sometimes be determined by a later resonance condition.
We see that $k=0$ is a resonance if and only if $D=0$, in which case the resonance condition is given by equation (\ref{eq:8}).
The resonance at $k=-1$ corresponds to the freedom in the choice of the parameter $t_0$.  
Whenever $\Delta(k)\ne 0$ for some positive integer $k$, then $x_k$, $y_k$ and $z_k$ are uniquely determined in terms of lower-indexed coefficients. If $\Delta(k)\ne0$ for all non-negative integers $k$, then the Laurent series expansion is unique.

If $D=0$, the resonances are $\pm1,0$ where (\ref{eq:8}) and $B\lambda x_0-BC\mu y_0-\nu z_0=0$ are the resonance conditions for $k=0,1$ respectively.
If $D\neq0$, then the resonances are $-1,k_1,k_2$ where $k_1+k_2=1$ and $k_1k_2=-Dx_0y_0z_0$. Hence there is at most one positive integer resonance, which must be greater than $1$. 
If a positive integer resonance occurs at $s\geq2$, then $x_i,y_i,z_i,$ for $i<s$ are known, and exactly one of $x_s,y_s,z_s$ is undetermined, say $x_s$. We will refer to this $x_s$ as a parameter. So $x_k,y_k,z_k,$ for $k\geq s$ can be expressed in terms of $x_s$. Once $x_s$ is known, then $x_k,y_k,z_k$ for $k\geq0$ are determined. Moreover, if there is no positive integer resonance, then $x,y,z$ have unique expansions about $\mathbf{p_0}$ poles.

Now consider the case $(p_x,p_y,p_z)=(-1,-1,\gamma)$. The cases $(p_x,p_y,p_z)=$ $(\alpha,-1,-1)$ and $(-1,\beta,-1)$ can be obtained similarly. About $\mathbf{p_z}$ poles, we have
\begin{equation}
    \label{3x-y3eq:3}
    {\begin{aligned}
        x=\sum_{k=0}^\infty \hat{x}_k(t-t_0)^{k-1},\quad y=\sum_{k=0}^\infty \hat{y}_k(t-t_0)^{k-1},\quad z=\sum_{k=0}^\infty \hat{z}_k(t-t_0)^{k+\gamma},
    \end{aligned}}
\end{equation}
where $\hat{x}_0\hat{y}_0\hat{z}_0\neq0$.
A similar argument as in the case $(-1,-1,-1)$ yields the resonances at $k=-1,0,1$. Thus, $\hat{x}_k,\hat{y}_k,\hat{z}_k,$ for $k\geq2$ can be expressed uniquely in terms of $\hat{x}_i,\hat{y}_i,\hat{z}_i$ for $i\leq k-1$. When $k=0$, we have that $C\ne 0$ and
        \begin{equation}
        \label{eq:13}
        C\hat{y}_0=-1,\quad \hat{x}_0=-1,\quad -B-C^{-1}=\gamma,
        \end{equation}
where $-B-1/C=\gamma$ is the resonance condition, and $\hat{z}_0$ is a parameter.
When $k=1$,
\begin{itemize}
    \item if $\gamma>0$ or $\gamma=0=D$, then the resonance condition is
    \begin{equation}
    \label{eq:13.1}
    \lambda=\mu
    \end{equation}
    and exactly one of $\hat{x}_1,\hat{y}_1,\hat{z}_1$ must be a parameter, say $\hat{x}_1$. Thus, $\hat{x}_k,\hat{y}_k,\hat{z}_k, $ for $k\geq1$ can be expressed in terms of $\hat{x}_1,\hat{z}_0$.
    \item if $\gamma=0$ and $D\neq0$, the resonance condition is
    \begin{equation}
    \label{eq:13.2}
    \hat{z}_0=\frac{\mu-\lambda}{1-A},\quad \lambda\neq\mu\mbox{\ \ where\ \ }A\ne 1.
    \end{equation}
    Again, exactly one of $\hat{x}_1,\hat{y}_1,\hat{z}_1$ must be a parameter, say $\hat{x}_1$. Thus, $\hat{x}_k,\hat{y}_k,\hat{z}_k,$ for $k\geq1$ can be expressed in terms of $\hat{x}_1$.
\end{itemize}
\begin{remark}
\label{coexistsabc1=0}
When $D=0$, it follows from (\ref{eq:8}) and (\ref{eq:13}) that $x,y,z$ cannot have $\mathbf{p_0}$ poles if $\mathbf{p_z}$ poles exists (analogously, with the $\mathbf{p_x},\mathbf{p_y}$ poles).  
\end{remark}
\begin{remark}
\label{remarklinearatleast3}
If $a_xx+a_yy+a_zz$ is entire for some $a_x,a_y,a_z\in\mathbb{C}$ not all zero, then there are at most two types of singularities (the proof is just checking the leading order terms).    
\end{remark}
\begin{remark}
\label{zerosxyz}
If $x(t_0)=0$, then either $x=0$ or $t_0$ is a $\mathbf{p_x}$ pole (similarly, for $y,z$). 
\end{remark}
\begin{proof}
Assuming that $t_0$ is not a $\mathbf{p_x}$ pole, then $x,y,z$ are regular at $t_0$. Repeatedly differentiating (\ref{eq:1,1}) yields $x^{(k)}(t_0)=0$ for $k\geq0$ i.e. $x=0$.    
\end{proof}

\section{Main lemmas}
\label{nevanlinna}
We need some results from Nevanlinna theory, which we state without proofs \cite{laine93,hayman1964meromorphic}. 
\begin{definition}
\label{defnevan}
    Let $f$ be meromorphic function, and $n(r,f)$ be the number of poles of $f$ counted with multiplicities inside $\{t\in\mathbb{C}:|t|\leq r\}$. We define the folloowing:
    \begin{align*}
    &\text{Nevanlinna counting function:} &  &N(r,f)=\int_0^r\frac{n(s,f)-n(0,f)}{s}\dd s+n(0,f)\ln r,\\
    &\text{Proximity function:} & &m(r,f)=\frac{1}{2\pi}\int_0^{2\pi}\max\{0,\ln|f(r\e^{i\theta})|\}\dd\theta,\\
    &\text{Nevanlinna charateristic:} & &T(r,f)=m(r,f)+N(r,f).
    \end{align*}
\end{definition}
\begin{definition}
\label{defsmall}{
Let $G(r)$ be a non-negative function such that $G(r)=o(T(r,f))$ as $r\to\infty$ outside a set of finite Lebesgue measure. We denote $G(r)=S(r,f)$. Any such function $G$ is called a small function with respect to $f$ in the Nevanlinna sense.  We also call a meromorphic function $g$ small with respect to $f$ if $T(r,g)$ is small with respect to $f$}
\end{definition}
\begin{definition}
\label{defasymp}
For functions $P_1(r),P_2(r)$, if there exists $\omega_1,\omega_2>0$ satisfying $\omega_1 P_1\leq P_2\leq\omega_2 P_1$ for all large $r>0$ outside a finite Lebesgue measure set, we write $P_1\asymp P_2$. 
\end{definition}
\begin{theorem}
\label{firstfund}(Nevanlinna's First Main Theorem)
    Let $a\in\mathbb{C}$, then 
    \[T(r,f)=T\left(r,1/(f-a)\right)+O(1)\quad
    \mbox{\ as\ \ \ }r\to\infty.
    \]
\end{theorem}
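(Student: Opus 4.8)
The plan is to obtain the identity from Jensen's formula, which carries the genuine analytic content; the passage from $a=0$ to general $a$ is then a routine perturbation estimate using the elementary properties of $\log^+$. So first I would establish Jensen's formula: if $f$ is meromorphic on $\{|t|\le r\}$ with leading Laurent coefficient $c$ at the origin, and with zeros $\{a_\mu\}$ and poles $\{b_\nu\}$ in the disc (listed with multiplicity and, for the moment, none on $|t|=r$), then
\[
\frac{1}{2\pi}\int_0^{2\pi}\log\bigl|f(re^{i\theta})\bigr|\,d\theta
= \log|c| + \sum_{\mu}\log\frac{r}{|a_\mu|} - \sum_{\nu}\log\frac{r}{|b_\nu|}.
\]
The cleanest route is to reduce to the zero-free case using finite Blaschke products for the disc of radius $r$: writing $f = g\cdot\prod_\mu B_{a_\mu}\big/\prod_\nu B_{b_\nu}$ with $B_c(t)=\dfrac{r(t-c)}{r^2-\bar c t}$ and $g$ holomorphic and nonvanishing on the closed disc, one checks $|B_c|\equiv 1$ on $|t|=r$, so the integral for $f$ equals the integral for $g$; for $g$ the function $\log|g|$ is harmonic and the mean value property gives $\frac1{2\pi}\int\log|g(re^{i\theta})|\,d\theta=\log|g(0)|$; finally $\log|g(0)|$ unwinds to the right-hand side above by evaluating the $B_c$ at $0$ (the cases $f(0)=0$ or $f(0)=\infty$ are absorbed into the $n(0,\cdot)\ln r$ terms built into Definition \ref{defnevan}, and zeros/poles on $|t|=r$ are handled by continuity or by noting that the integrals $\int_0^{2\pi}\log|re^{i\theta}-a_\mu|\,d\theta$ still converge and evaluate correctly).

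Next I would translate Jensen's formula into Nevanlinna language. Pointwise on $|t|=r$ one has $\log x=\log^+x-\log^+(1/x)$ for $x>0$, so the left-hand integral equals $m(r,f)-m(r,1/f)$. The sum over zeros is exactly $N(r,1/f)$ and the sum over poles is exactly $N(r,f)$ — this is precisely the identity $\sum_{|a_\mu|\le r}\log(r/|a_\mu|)=\int_0^r\frac{n(s)-n(0)}{s}\,ds+n(0)\ln r$, which is why the $n(0,\cdot)\ln r$ correction appears in the definition of $N$. Hence Jensen's formula reads $m(r,f)-m(r,1/f)=\log|c|+N(r,1/f)-N(r,f)$, i.e.\ $T(r,f)=T(r,1/f)+\log|c|$, which is the theorem in the case $a=0$, and in fact with an explicit $O(1)$ constant.

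For general $a\in\mathbb{C}$, apply the $a=0$ case to the meromorphic function $f-a$ to get $T(r,f-a)=T(r,1/(f-a))+O(1)$. It remains to compare $T(r,f-a)$ with $T(r,f)$: subtracting a constant does not move poles, so $N(r,f-a)=N(r,f)$ exactly, and from $\log^+|u+v|\le\log^+|u|+\log^+|v|+\log 2$ one gets $m(r,f-a)\le m(r,f)+\log^+|a|+\log 2$ and, symmetrically, $m(r,f)\le m(r,f-a)+\log^+|a|+\log 2$. Therefore $|T(r,f-a)-T(r,f)|\le\log^+|a|+\log 2=O(1)$, and combining the three relations yields $T(r,f)=T(r,1/(f-a))+O(1)$ as $r\to\infty$. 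I expect the only real obstacle to be a clean, fully rigorous proof of Jensen's formula — specifically the bookkeeping of zeros and poles at the origin and on the circle $|t|=r$ and the justification that the relevant singular integrals converge and evaluate as claimed; everything after that is the elementary $\log^+$ arithmetic together with the observation that translating by a constant does not create or destroy poles. (Since the paper states this result without proof and cites \cite{laine93,hayman1964meromorphic}, the above merely reconstructs the standard argument found there.)
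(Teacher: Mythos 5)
The paper states this result without proof, deferring to the cited references \cite{laine93,hayman1964meromorphic}, and your argument is precisely the standard one found there: Jensen's formula established via Blaschke factors and the mean value property, translated into the $m$/$N$ notation using $\log x=\log^+x-\log^+(1/x)$, followed by the elementary $\log^+$ estimates showing $|T(r,f-a)-T(r,f)|\le\log^+|a|+\log 2$. The reconstruction is correct, including the bookkeeping of the $n(0,\cdot)\ln r$ term for a zero or pole at the origin.
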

\begin{theorem}
\label{th:1}
Let $U$ be $N$, $m$ or $T$ in Definition \ref{defnevan}, and let $f,g$ be meromorphic.  Then $U(r,fg)\leq U(r,f)+U(r,g)+O(1),~ U(r,f+g)\leq U(r,f)+U(r,g)+O(1)$.
\end{theorem}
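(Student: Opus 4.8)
The plan is to establish the three cases $U=m$, $U=N$ and $U=T$ in turn, the case $U=T$ being immediate once the other two are in hand, since $T=m+N$ by Definition \ref{defnevan}.

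\emph{The proximity function.} First I would use two elementary inequalities for $\ln^+a:=\max\{0,\ln a\}$ with $a\ge 0$, namely $\ln^+(ab)\le\ln^+a+\ln^+b$ and $\ln^+(a+b)\le\ln^+a+\ln^+b+\ln 2$; each follows by distinguishing whether the argument on the left is $\le 1$ or $>1$ and using $\ln x\le\ln^+x$ together with $a+b\le 2\max\{a,b\}$. Taking $a=|f(r\e^{\mathrm{i}\theta})|$, $b=|g(r\e^{\mathrm{i}\theta})|$ and averaging over $\theta\in[0,2\pi)$ then gives $m(r,fg)\le m(r,f)+m(r,g)$ and $m(r,f+g)\le m(r,f)+m(r,g)+\ln 2$, the constant $\ln 2$ being absorbed into the $O(1)$.

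\emph{The counting function.} Next I would work at the level of $n(r,\cdot)$. At any point $t_0$ the order $\operatorname{ord}_{t_0}$ (negative at a pole, positive at a zero) satisfies $\operatorname{ord}_{t_0}(fg)=\operatorname{ord}_{t_0}f+\operatorname{ord}_{t_0}g$ and $\operatorname{ord}_{t_0}(f+g)\ge\min\{\operatorname{ord}_{t_0}f,\operatorname{ord}_{t_0}g\}$, so in both cases the pole multiplicity $\max\{0,-\operatorname{ord}_{t_0}\}$ of the combined function is at most the sum of the pole multiplicities of $f$ and $g$ at $t_0$ (it can be strictly smaller when a zero cancels a pole). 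Summing over $|t_0|\le s$ gives $n(s,fg)\le n(s,f)+n(s,g)$ and $n(s,f+g)\le n(s,f)+n(s,g)$ for every $s\ge 0$. Feeding this into the identity $N(r,\phi)=\sum_{0<|a|\le r}\ln(r/|a|)+n(0,\phi)\ln r$ (sum over poles with multiplicity), whose summands are all non-negative for $r\ge 1$, transfers the inequality to $N$ with no extra constant.

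\emph{Conclusion and difficulty.} Finally, adding the $m$- and $N$-inequalities and using $T=m+N$ yields $T(r,fg)\le T(r,f)+T(r,g)+O(1)$ and $T(r,f+g)\le T(r,f)+T(r,g)+O(1)$, completing all three cases. There is no genuine obstacle here: this is one of the standard building blocks of Nevanlinna theory, and the only points requiring a little care are tracking (and then discarding into $O(1)$) the constant $\ln 2$ coming from sums, and handling the $n(0,\cdot)$ term in Definition \ref{defnevan}, which causes no trouble because the subadditivity of $n(s,\cdot)$ holds at every radius, including $s=0$.
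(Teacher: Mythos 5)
Your proof is correct. Note, however, that the paper does not prove this statement at all: it is listed in Section~\ref{nevanlinna} among the ``results from Nevanlinna theory, which we state without proofs,'' with references to Hayman and Laine. Your argument --- the $\ln^+$ subadditivity inequalities averaged over the circle for $m$, the pointwise pole-multiplicity bound $\max\{0,-\operatorname{ord}(fg)\}\leq\max\{0,-\operatorname{ord}f\}+\max\{0,-\operatorname{ord}g\}$ (and likewise for $f+g$ via $\operatorname{ord}(f+g)\geq\min\{\operatorname{ord}f,\operatorname{ord}g\}$) fed into the summation form of $N$, and then $T=m+N$ --- is exactly the standard textbook proof found in those references, and all the details, including the $\ln 2$ constant and the $n(0,\cdot)$ term, are handled correctly.
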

\begin{theorem}
\label{th:rational}
    Let $f$ be meromorphic. Then $f$ is transcendental (i.e., non-rational) iff $T(r,f)/\ln r\to\infty$.
\end{theorem}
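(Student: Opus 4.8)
\textit{Proof proposal.} This is a classical fact of Nevanlinna theory (see \cite{hayman1964meromorphic,laine93}); the plan is to prove the two implications separately and, for the harder direction, to reduce to the case of an entire function. For the easy direction, suppose $f=P/Q$ is rational with $P,Q$ polynomials (in lowest terms). Its finite poles are the zeros of $Q$, so $n(r,f)$ is eventually constant (equal to $\deg Q$, counted with multiplicity) and hence $N(r,f)=\deg Q\cdot\ln r+O(1)$; moreover $|f(t)|\le A|t|^{d}$ for $|t|$ large, where $d=\max(0,\deg P-\deg Q)$, so that $m(r,f)\le d\ln r+O(1)$. Therefore $T(r,f)=O(\ln r)$, so $T(r,f)/\ln r$ stays bounded and in particular does not tend to $\infty$.

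For the converse I would argue contrapositively: assume $T(r,f)/\ln r\not\to\infty$, i.e.\ there are $r_j\to\infty$ and a constant $C>0$ with $T(r_j,f)\le C\ln r_j$, and deduce that $f$ is rational. First I would show $f$ has only finitely many poles. Fixing $s>0$ and using that $n(\cdot,f)$ is non-decreasing, the definition of $N$ in Definition \ref{defnevan} gives $N(r_j,f)\ge\bigl(n(s,f)-n(0,f)\bigr)\ln(r_j/s)$ for $r_j>s$; since $N(r_j,f)\le T(r_j,f)\le C\ln r_j$, dividing and letting $j\to\infty$ yields $n(s,f)\le n(0,f)+C$. As $s$ was arbitrary, $f$ has finitely many poles, of total multiplicity $M\le n(0,f)+C$. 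Let $R$ be a polynomial of degree $M$ vanishing at the poles of $f$ to the appropriate orders; then $g:=Rf$ is entire, and by Theorem \ref{th:1} together with the easy direction applied to $R$ we obtain $T(r_j,g)\le T(r_j,f)+T(r_j,R)+O(1)=O(\ln r_j)$. Since $g$ is entire, $N(r,g)=0$, so $m(r_j,g)=T(r_j,g)=O(\ln r_j)$.

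It then remains to show that an entire function $g$ with $m(r_j,g)=O(\ln r_j)$ along some $r_j\to\infty$ must be a polynomial, after which $f=g/R$ is rational and the proof is complete. For this I would invoke the standard comparison, a consequence of the Poisson--Jensen formula, $\log^+M(r,g)\le\frac{R+r}{R-r}\,T(R,g)$ for $0<r<R$, where $M(r,g)=\max_{|t|=r}|g(t)|$. Taking $R=r_j$ and $r=r_j/2$ gives $\log^+M(r_j/2,g)\le 3\,T(r_j,g)=O(\ln r_j)$, so $M(\rho_j,g)\le c\,\rho_j^{\,\kappa}$ along $\rho_j:=r_j/2\to\infty$ for suitable constants $c,\kappa$. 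Writing $g(t)=\sum_{k\ge 0}a_kt^k$, the Cauchy estimate $|a_k|\le M(\rho_j,g)/\rho_j^{\,k}\le c\,\rho_j^{\,\kappa-k}$ forces $a_k=0$ for every $k>\kappa$, so $g$ is a polynomial. The result is essentially textbook, so I do not anticipate a genuine obstacle; the only points needing a little care are the Poisson--Jensen comparison between $M(r,g)$ and $T(r,g)$ for entire $g$ (taken here as a known ingredient of Nevanlinna theory) and the fact that the hypothesis only controls $T$ along a subsequence $r_j$ — but since every step above is carried out along that same subsequence, no monotonicity or convexity argument in $r$ is needed.
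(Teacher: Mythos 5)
Your proof is correct and complete: the easy direction via the explicit count $T(r,P/Q)=\max(\deg P,\deg Q)\ln r+O(1)$, and the converse via bounding the number of poles from $N\le T$, clearing them with a polynomial, and then using the Poisson--Jensen comparison $\log^+M(r,g)\le\frac{R+r}{R-r}T(R,g)$ together with Cauchy estimates is exactly the standard textbook argument, and your handling of the subsequence $r_j$ is careful and sound. The paper itself states this theorem without proof, citing \cite{hayman1964meromorphic,laine93}, so there is no internal argument to compare against; your write-up supplies precisely the classical proof those references contain.
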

\begin{theorem}
\label{th:2}
Let $f$ be a non-constant meromorphic function. Then  $m\left(r,f^{(k)}/f\right)=S(r,f)$ for all $k\in\mathbb{N}$ and hence  $T(r,f^{(n)})=O(T(r,f))$.
\end{theorem}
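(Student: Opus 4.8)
The statement to be proved is the classical Lemma on the Logarithmic Derivative, $m(r,f^{(k)}/f)=S(r,f)$, together with its corollary $T(r,f^{(n)})=O(T(r,f))$. The plan is to reduce everything to the case $k=1$ --- namely the estimate $m(r,f'/f)=S(r,f)$ --- and then to deduce the general $k$ and the growth bound on $T(r,f^{(n)})$ by a short bootstrap using Theorems~\ref{firstfund} and~\ref{th:1}.

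For the case $k=1$ I would proceed in two stages. \emph{Stage one (a pointwise bound).} Start from the Poisson--Jensen representation of $\log|f(z)|$ on a disc $|z|<R$, in which the zeros $a_\mu$ and poles $b_\nu$ of $f$ inside the disc enter through terms $\log\bigl|R(z-a_\mu)/(R^2-\bar a_\mu z)\bigr|$. Differentiating in $z$ produces the explicit formula
\[
\frac{f'(z)}{f(z)}=\frac{1}{2\pi}\int_0^{2\pi}\frac{2Re^{i\phi}}{(Re^{i\phi}-z)^2}\,\ln\bigl|f(Re^{i\phi})\bigr|\,d\phi+\sum_\mu\Bigl(\frac{1}{z-a_\mu}-\frac{\bar a_\mu}{R^2-\bar a_\mu z}\Bigr)-\sum_\nu\Bigl(\frac{1}{z-b_\nu}-\frac{\bar b_\nu}{R^2-\bar b_\nu z}\Bigr).
\]
Estimating $|f'/f|$ on $|z|=r<R$, taking $\log^+$, integrating in $\theta$, and using $\log^+(ab)\le\log^+a+\log^+b$, Theorem~\ref{firstfund} (to replace $m(R,1/f)$ and the pole counts by $T(R,f)$), together with the standard estimates for the integral against the Poisson-type kernel and for sums of the form $\sum 1/(z-a_\mu)$, one arrives at a bound of the shape
\[
m(r,f'/f)\le K\Bigl(\log^+T(R,f)+\log^+\tfrac{1}{R-r}+\log^+R+\log^+\tfrac1r+1\Bigr),\qquad 0<r<R,
\]
with an absolute constant $K$.

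\emph{Stage two (a growth lemma).} Since $f$ is non-constant, $r\mapsto T(r,f)$ is continuous, non-decreasing and tends to infinity, so a Borel-type growth lemma yields $T\bigl(r+\tfrac{1}{T(r,f)},\,f\bigr)\le 2T(r,f)$ for all $r$ outside a set of finite Lebesgue measure. Putting $R=r+1/T(r,f)$ in the displayed bound collapses every term on the right to $O(\log^+T(r,f)+\log^+r)$, which is $S(r,f)$; this is the case $k=1$. For the corollary, write $f'=(f'/f)\cdot f$, so $T(r,f')\le T(r,f'/f)+T(r,f)+O(1)$ by Theorem~\ref{th:1}; every pole of $f'/f$ is simple and sits at a zero or pole of $f$, hence $N(r,f'/f)\le N(r,f)+N(r,1/f)\le 2T(r,f)+O(1)$ using Theorem~\ref{firstfund}, and together with $m(r,f'/f)=S(r,f)$ this gives $T(r,f')=O(T(r,f))$. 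Iterating, $T(r,f^{(j)})=O(T(r,f))$ for every $j\ge0$. Finally $m(r,f^{(k)}/f)\le\sum_{j=1}^{k}m(r,f^{(j)}/f^{(j-1)})$, each summand equals $S(r,f^{(j-1)})$ by the $k=1$ case applied to $f^{(j-1)}$ (the term is just $0$ when $f^{(j-1)}$ is constant), and since $T(r,f^{(j-1)})=O(T(r,f))$ one has $S(r,f^{(j-1)})=S(r,f)$; hence $m(r,f^{(k)}/f)=S(r,f)$.

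The main obstacle, and really the only delicate point, is the bookkeeping of exceptional sets: the bound in stage one holds for all $r<R$, but the growth lemma introduces an $r$-exceptional set of finite measure, and both the induction on $j$ and the telescoping over $k$ risk enlarging it; one checks that finitely many unions of finite-measure sets stay finite, so that for each fixed $n$ and $k$ the conclusion holds outside a single such set. The secondary technical points --- justifying term-by-term differentiation of the Poisson--Jensen sums and handling the cases $f(0)=0$ or $f(0)=\infty$ separately --- are routine and do not affect the estimates.
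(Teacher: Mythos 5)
The paper does not prove this statement: it is the classical Lemma on the Logarithmic Derivative, listed in section~\ref{nevanlinna} among the "results from Nevanlinna theory, which we state without proofs," with a citation to Hayman and Laine. Your sketch is precisely the standard argument from those references --- the differentiated Poisson--Jensen formula giving the pointwise bound, the Borel growth lemma with $R=r+1/T(r,f)$ to absorb the $\log^+\frac{1}{R-r}$ term, and the telescoping $m(r,f^{(k)}/f)\le\sum_j m(r,f^{(j)}/f^{(j-1)})$ together with $N(r,f'/f)\le N(r,f)+N(r,1/f)$ for the corollary --- so it matches the proof the paper implicitly relies on, and the structure is correct. One small inaccuracy: the step asserting that $O\bigl(\log^+T(r,f)+\log^+r\bigr)=S(r,f)$ fails when $f$ is a non-constant \emph{rational} function, since then $T(r,f)\asymp\log r$ and $\log^+ r$ is not $o(T(r,f))$; that case must be dispatched separately by noting that $f'/f$ is rational and vanishes at infinity, whence $m(r,f'/f)=o(1)$ directly. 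With that caveat (and your exceptional-set bookkeeping, which is handled correctly), the proposal is sound.
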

\begin{definition}
\label{defdiffpoly}
Let $f$ be a meromorphic function. We define a differential polynomial as $P(t,f)=\sum_{\lambda\in I}a_\lambda f^{\lambda_0}(f')^{\lambda_1}\cdots(f^{n_\lambda})^{\lambda_{n_\lambda}}$ where $a_\lambda$ are meromorphic functions and $I$ is a finite set. We call $\Lambda=\max_{\lambda\in I}\{\lambda_0+\cdots+\lambda_{n_\lambda}\}$ the total degree of $P$. The term $a_\lambda f^{\lambda_0}(f')^{\lambda_1}\cdots(f^{n_\lambda})^{\lambda_{n_\lambda}}$ is said to be dominant if $\Lambda=\lambda_0+\cdots+\lambda_{n_\lambda}$.
\end{definition}
\begin{theorem}[Clunie's Lemma]
\label{th:4}
Let $f$ be a non-constant meromorphic function. Let $P(t,f)=\sum_{\lambda\in I}a_\lambda f^{\lambda_0}\cdots(f^{n_\lambda})^{\lambda_{n_\lambda}},Q(t,f)=\sum_{\omega\in J}b_\omega f^{\omega_0}\cdots(f^{n_\omega})^{\omega_{n_\omega}}$ be differential polynomials. Suppose that $Q$ has the total degree $\Lambda$. If $f$ satisfies $f^nP(t,f)=Q(t,f)$ such that $T(r,a_\lambda)=S(r,f),T(r,b_{\Lambda})=S(r,f)$, and $\Lambda\leq n$, then $m(r,P(t,f))=S(r,f)$.
\end{theorem}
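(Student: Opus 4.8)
The plan is to prove the bound on $m(r,P)$ by splitting the integration over the circle $|t|=r$ into the arc on which $f$ is bounded and the arc on which $f$ is large, estimating $P$ directly on the former and using the relation $P=Q/f^{n}$ on the latter. Throughout, write $\ln^+x=\max\{0,\ln x\}$, so that $m(r,g)=\frac{1}{2\pi}\int_0^{2\pi}\ln^+|g(r\e^{i\theta})|\,\dd\theta$ for any meromorphic $g$. For each multi-index $\lambda\in I$ set $d_\lambda=\lambda_0+\lambda_1+\cdots+\lambda_{n_\lambda}\geq0$ (the degree of the corresponding monomial) and factor out the bare power of $f$,
\[
a_\lambda f^{\lambda_0}(f')^{\lambda_1}\cdots(f^{(n_\lambda)})^{\lambda_{n_\lambda}}=a_\lambda\,f^{d_\lambda}\prod_{j\geq1}\left(\frac{f^{(j)}}{f}\right)^{\lambda_j},
\]
and similarly for the monomials of $Q$, whose degrees satisfy $e_\omega=\omega_0+\cdots+\omega_{n_\omega}\leq\Lambda$. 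The whole argument rests on Theorem \ref{th:2}, which supplies $m(r,f^{(j)}/f)=S(r,f)$, together with the subadditivity of $m$ under products and sums from Theorem \ref{th:1} (equivalently, the pointwise inequalities $\ln^+(xy)\leq\ln^+x+\ln^+y$ and $\ln^+\sum_i x_i\leq\sum_i\ln^+x_i+O(1)$, the $O(1)$ coming from the finite cardinalities of $I$ and $J$).

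First I would fix $r$ and partition $[0,2\pi)$ into $E_1=\{\theta:|f(r\e^{i\theta})|\leq1\}$ and $E_2=\{\theta:|f(r\e^{i\theta})|>1\}$, so that $m(r,P)=\frac{1}{2\pi}\big(\int_{E_1}+\int_{E_2}\big)\ln^+|P|\,\dd\theta$. On $E_1$ one has $|f|\leq1$ and $d_\lambda\geq0$, hence $|f^{d_\lambda}|\leq1$, and the factorization above gives
\[
\ln^+|P|\leq\sum_{\lambda\in I}\Big(\ln^+|a_\lambda|+\sum_{j\geq1}\lambda_j\ln^+\Big|\frac{f^{(j)}}{f}\Big|\Big)+O(1).
\]
Integrating over $E_1$ and applying Theorem \ref{th:2} to each logarithmic-derivative term, together with the smallness $T(r,a_\lambda)=S(r,f)$ of the coefficients, bounds this contribution by $S(r,f)$.

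On $E_2$ I would instead substitute $P=Q/f^{n}$. There $|f|>1$, and since every monomial of $Q$ has degree $e_\omega\leq\Lambda\leq n$, the surviving factor satisfies $|f|^{e_\omega-n}\leq1$, whence
\[
|P|=\frac{|Q|}{|f|^{n}}\leq\sum_{\omega\in J}|b_\omega|\,|f|^{e_\omega-n}\prod_{j\geq1}\Big|\frac{f^{(j)}}{f}\Big|^{\omega_j}\leq\sum_{\omega\in J}|b_\omega|\prod_{j\geq1}\Big|\frac{f^{(j)}}{f}\Big|^{\omega_j}.
\]
Taking $\ln^+$, integrating over $E_2$, and again invoking Theorem \ref{th:2} and the smallness of the coefficients $b_\omega$ bounds this piece by $S(r,f)$ as well. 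Adding the two estimates yields $m(r,P)\leq S(r,f)$, that is, $m(r,P(t,f))=S(r,f)$.

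The hypothesis doing the essential work is $\Lambda\leq n$: it is exactly what makes the exponent $e_\omega-n$ nonpositive on $E_2$, so that the high powers of $f$ appearing in $Q$ are absorbed by the factor $f^{-n}$ and contribute nothing to the proximity function. The only point requiring genuine care is the bookkeeping that keeps each monomial's surviving exponent of $f$ of the correct sign on each set (nonnegative on $E_1$, nonpositive on $E_2$); once this is arranged, the estimate reduces entirely to the logarithmic-derivative lemma (Theorem \ref{th:2}) and the subadditivity of $m$ (Theorem \ref{th:1}), both already available. I do not anticipate any obstacle beyond this careful splitting, since all the heavy analytic input is furnished by Theorem \ref{th:2}.
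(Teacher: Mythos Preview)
Your argument is the standard proof of Clunie's Lemma and is correct: the $E_1/E_2$ split together with the factorization $a_\lambda f^{d_\lambda}\prod_j(f^{(j)}/f)^{\lambda_j}$ reduces everything to the logarithmic-derivative estimate (Theorem~\ref{th:2}) and the smallness of the coefficients, exactly as you describe. The paper itself does not prove this theorem; it is listed among the results from Nevanlinna theory that are ``state[d] without proofs'' and referenced to Laine~\cite{laine93} and Hayman~\cite{hayman1964meromorphic}, so there is nothing to compare against beyond noting that your write-up is the classical one found in those sources.
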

\begin{theorem}
\label{th:5}
Let $R(f)=P(f)/Q(f)$ where $P(f)=a_0+a_1f+\cdots+a_pf^p$ and $Q(f)=b_0+b_1f+\cdots+b_qf^q$ such that $\gcd(P,Q)=1$. If $T(r,a_i)=S(r,f)$ and $T(r,b_j)=S(r,f)$ for $i=0,\dots,p$ and $j=0,\dots,q$, then $T(r,R(f))=\max\{p,q\}T(r,f)+S(r,f)$.
\end{theorem}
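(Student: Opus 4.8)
The plan is to prove the two inequalities $T(r,R(f))\le d\,T(r,f)+S(r,f)$ and $T(r,R(f))\ge d\,T(r,f)+S(r,f)$, where $d=\max\{p,q\}$. First I would normalise: since $\gcd(P,Q)=1$ and $T(r,1/R(f))=T(r,R(f))+O(1)$ by Theorem~\ref{firstfund}, replacing $R$ by $1/R$ interchanges $P$ and $Q$ and fixes $d$, so I may assume $p\ge q$, hence $d=p$ and $a_p\not\equiv0$. The small meromorphic functions form a field (closed under the operations of Theorem~\ref{th:1}, and $T(r,1/g)=T(r,g)+O(1)$), so I may perform Euclidean division of polynomials in $w$ over this field and write $P=S_0Q+T_0$ with $\deg_w S_0=p-q$ and $\deg_w T_0<q$; since each division step preserves the gcd, $\gcd(T_0,Q)=1$. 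This yields the decomposition $R(f)=S_0(f)+T_0(f)/Q(f)$, reducing everything to a polynomial part and a proper rational part.

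The kernel is the polynomial case: for $G=\sum_{i=0}^{m}g_if^i$ with small coefficients and $g_m\not\equiv0$, I claim $T(r,G(f))=m\,T(r,f)+S(r,f)$. The upper bound follows from Theorem~\ref{th:1}, since poles of $G(f)$ occur only at poles of $f$ (of multiplicity at most $m$ times that of $f$) or at poles of the $g_i$, giving $N(r,G(f))\le m\,N(r,f)+S(r,f)$, while $|G(f)|\le (m+1)(\max_i|g_i|)\max(1,|f|)^m$ gives $m(r,G(f))\le m\,m(r,f)+S(r,f)$. For the matching lower bound I would argue on the set $\{|f|\ge M\}$, where $|G(f)|\ge\tfrac12|g_m|\,|f|^m$ once $M$ exceeds the (small) coefficient ratios, so that $m(r,G(f))\ge m\,m(r,f)-S(r,f)$; and at each pole of $f$ away from the singularities of the $g_i$, $G(f)$ has a pole of exact order $m$ times that of $f$ because $g_m\not\equiv0$, giving $N(r,G(f))\ge m\,N(r,f)-S(r,f)$.

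Next I would treat the proper part and show $T(r,T_0(f)/Q(f))=T(r,Q(f))+S(r,f)=q\,T(r,f)+S(r,f)$. Here the gcd hypothesis is essential: since $\gcd(T_0,Q)=1$, at every zero of $Q(f)$ the numerator $T_0(f)$ is nonzero, so $T_0(f)/Q(f)$ has a pole of the full order there and $N(r,T_0(f)/Q(f))=N(r,1/Q(f))+S(r,f)$; moreover $\deg_w T_0<\deg_w Q$ forces $T_0(f)/Q(f)\to0$ as $f\to\infty$, so this quotient is large only where $Q(f)$ is small, whence $m(r,T_0(f)/Q(f))\le m(r,1/Q(f))+S(r,f)$. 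Adding and using Theorem~\ref{firstfund} together with the polynomial case applied to $Q$ yields the claim. Finally, in $R(f)=S_0(f)+T_0(f)/Q(f)$ the polynomial part is large exactly where $|f|$ is large while the proper part is large only where $|f|$ is bounded (near the roots of $Q$); these regions are disjoint, so there is no cancellation and $T(r,R(f))=(p-q)\,T(r,f)+q\,T(r,f)+S(r,f)=d\,T(r,f)+S(r,f)$.

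The main obstacle is the lower bound, specifically controlling the proximity function: the upper estimates are immediate from Theorem~\ref{th:1}, but the reverse inequalities require the domain-splitting arguments on $\{|f|\ge M\}$ and on neighbourhoods of the roots of $Q(f)$, together with the verification that the two pieces of $R(f)$ do not cancel. This is exactly where $\gcd(P,Q)=1$ enters; without it a common factor would cancel in $R$, the pole count at the zeros of $Q(f)$ would drop, and $T(r,R(f))$ would fall strictly below $d\,T(r,f)$ (for instance $R=(f-1)(f-2)/\big((f-1)(f-3)\big)$ reduces to degree one). I would also record at the outset that the crude algebraic bound coming from $c_df^d=-\sum_{i<d}c_if^i$ with $c_i=a_i-R(f)b_i$ shows that $T(r,f)$ and $T(r,R(f))$ have comparable growth, so that $S(r,R(f))=S(r,f)$ and the error terms above are unambiguous.
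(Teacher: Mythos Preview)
The paper does not prove this statement at all: it is listed among the results from Nevanlinna theory that are ``state[d] without proofs'' with a citation to Laine~\cite{laine93} and Hayman~\cite{hayman1964meromorphic}. In those references the result is the classical Valiron--Mohon'ko theorem (see, e.g., Laine, Theorem~2.2.5), so there is no in-paper argument to compare against.

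Your outline follows the standard strategy (reduce to $p\ge q$, split off a polynomial part by Euclidean division over the field of small functions, treat the polynomial and proper-fraction pieces separately), and the upper bounds are fine via Theorem~\ref{th:1}. The soft spots are all in the lower bounds for the proximity function. When you write ``$|G(f)|\ge\tfrac12|g_m|\,|f|^m$ once $M$ exceeds the (small) coefficient ratios'', those ratios $|g_i/g_m|$ are \emph{functions} of the point on $|t|=r$, not constants, and on any given circle they may be arbitrarily large; the fact that $T(r,g_i/g_m)=S(r,f)$ only controls an averaged quantity, so you cannot simply pick a single threshold $M$. The usual fix is to split the circle $|t|=r$ into the set where $|g_m|$ is not too small and all $|g_i/g_m|$ are bounded, versus its complement, and to show that the complement contributes only $S(r,f)$ to $m(r,\cdot)$ because $m(r,1/g_m)$ and $m(r,g_i/g_m)$ are small. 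The same issue recurs in your treatment of $T_0(f)/Q(f)$: the assertion that ``at every zero of $Q(f)$ the numerator $T_0(f)$ is nonzero'' is false as stated, since the resultant of $T_0$ and $Q$ is a small meromorphic function that may well vanish at isolated points; what is true is that the exceptional set of such coincidences contributes only $S(r,f)$ to the counting function, and this needs to be said. Finally, the ``no cancellation'' step combining $S_0(f)$ and $T_0(f)/Q(f)$ is again a proximity-function argument that requires the same kind of circle-splitting rather than a bare disjointness claim. None of these are fatal, but as written the lower-bound half is a sketch rather than a proof; the cleanest route is simply to cite Valiron--Mohon'ko as the paper does.
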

\begin{theorem}
\label{co:7}
Given a non-constant meromorphic solution $f$ of $ a_0+a_1f+\cdots+a_pf^p=0$ such that $T(r,a_i)=S(r,f)$, then $a_0=a_1=\cdots=a_p=0.$
\end{theorem}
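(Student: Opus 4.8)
The plan is to argue by contradiction, reducing to a genuine polynomial identity in $f$ of positive degree and then invoking Theorem \ref{th:5}. Suppose the $a_i$ are not all identically zero and let $q=\max\{i:a_i\not\equiv 0\}$. If $q=0$, the hypothesised relation collapses to $a_0\equiv 0$, contradicting the choice of $q$; hence $q\ge 1$ and $f$ satisfies $a_0+a_1f+\cdots+a_qf^q=0$ with $a_q\not\equiv 0$.

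Next I would apply Theorem \ref{th:5} with $P(f)=a_0+a_1f+\cdots+a_qf^q$ and $Q(f)\equiv 1$, so that $R(f)=P(f)/Q(f)$ is a rational function of $f$ with $\gcd(P,Q)=1$ (trivially, since $Q$ is a nonzero constant), numerator of exact degree $q$ because $a_q\not\equiv 0$, denominator of degree $0$, and all coefficients small with respect to $f$ by hypothesis. Theorem \ref{th:5} then gives $T(r,R(f))=\max\{q,0\}\,T(r,f)+S(r,f)=q\,T(r,f)+S(r,f)$.

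On the other hand, $f$ solves $a_0+\cdots+a_pf^p=0$ and $a_i\equiv 0$ for $i>q$, so $R(f)=P(f)\equiv 0$ and $T(r,R(f))=O(1)$. Comparing the two expressions for $T(r,R(f))$ yields $q\,T(r,f)=S(r,f)+O(1)$ as $r\to\infty$ outside a set of finite Lebesgue measure. Since $f$ is non-constant we have $T(r,f)\to\infty$, so both $S(r,f)$ and the $O(1)$ term are $o(T(r,f))$ along that range of radii; hence $q\,T(r,f)=o(T(r,f))$, which is impossible for $q\ge 1$. Therefore all the $a_i$ vanish identically.

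The argument is essentially a one-line consequence of the Valiron--Mohon'ko estimate in Theorem \ref{th:5}, so I do not anticipate a serious obstacle; the only points needing care are the initial reduction to the top non-vanishing coefficient (so that the degree appearing in Theorem \ref{th:5} is genuinely $q$ rather than a formal upper bound) and the observation that non-constant meromorphic functions have $T(r,f)\to\infty$. A self-contained alternative would be an induction on $p$ that repeatedly differentiates the relation and eliminates the highest power of $f$, ending with a relation $b_0+b_1f\equiv 0$ with $b_0,b_1$ small: if $b_1\not\equiv 0$ then $f=-b_0/b_1$ is small, contradicting non-constancy, and $b_1\equiv 0$ then forces $b_0\equiv 0$, after which one unwinds the elimination. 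This route is more laborious, and I would only resort to it if Theorem \ref{th:5} were unavailable.
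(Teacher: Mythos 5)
Your argument is correct and is exactly the intended derivation: the paper states Theorem \ref{co:7} without proof as one of the standard Nevanlinna-theoretic facts, and it is meant as an immediate corollary of the Valiron--Mohon'ko estimate in Theorem \ref{th:5}, which is precisely how you obtain it. Your care in first passing to the top non-vanishing coefficient $a_q$ (so that the degree in Theorem \ref{th:5} is genuinely $q$) and in noting $T(r,f)\to\infty$ for non-constant $f$ covers the only points where the short argument could go wrong.
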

\begin{theorem}[See \cite{zhangliao13}]
\label{th:11}
Let $f$ and $P$ be as in Theorem \ref{th:4}. If $P$ has only one dominant term (Definition \ref{defdiffpoly}) and $f$ satisfies $P=0$, where $T(r,a_\lambda)=S(r,f)$ and $N(r,f)=S(r,f)$, then $f$ is rational.
\end{theorem}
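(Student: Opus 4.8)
The plan is to argue by contradiction. Suppose $f$ is transcendental (if $f$ is rational there is nothing to prove). Since $S(r,f)=o(T(r,f))$ for transcendental $f$, it suffices to prove $T(r,f)=S(r,f)$, which is then absurd and forces $f$ to be rational by Theorem~\ref{th:rational}. Write the unique dominant term as $M_0=a_0f^{\lambda_0}(f')^{\lambda_1}\cdots(f^{(n)})^{\lambda_n}$ with $\lambda_0+\cdots+\lambda_n=\Lambda$, and rewrite $P=0$ as $M_0=-\sum_{i\ge1}M_i$, where every remaining monomial $M_i$ has total degree $d_i\le\Lambda-1$. Because $N(r,f)=S(r,f)$, each derivative obeys $N(r,f^{(j)})=S(r,f)$ (the poles of $f^{(j)}$ occur only at poles of $f$), so $N(r,M_0)=N(r,M_i)=S(r,f)$; consequently $T(r,f)=m(r,f)+N(r,f)$ reduces the whole problem to showing $m(r,f)=S(r,f)$.

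I would first record the standard proximity estimate for differential polynomials: if $R(t,f)$ has total degree $d$ with coefficients small relative to $f$, then $m(r,R)\le d\,m(r,f)+S(r,f)$. This follows from Theorem~\ref{th:2} by writing each monomial as $a f^{d'}\prod_j(f^{(j)}/f)^{\lambda_j}$ with $d'\le d$ and splitting the circle $|t|=r$ according to whether $|f|\le1$ or $|f|>1$: where $|f|>1$ the undifferentiated powers are dominated by $|f|^{d}$, and the logarithmic-derivative factors contribute only $S(r,f)$. Applying this to $R=\sum_{i\ge1}M_i$, whose total degree is at most $\Lambda-1$, gives the upper bound
\[
m(r,M_0)\le(\Lambda-1)\,m(r,f)+S(r,f).
\]

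For a matching lower bound I would use the factorisation (valid since $f^{(j)}\not\equiv0$, otherwise $f$ is a polynomial)
\[
f^{\Lambda}=\frac{M_0}{a_0}\prod_{j\ge1}\left(\frac{f}{f^{(j)}}\right)^{\lambda_j}.
\]
Taking proximity functions, using $m(r,f^{\Lambda})=\Lambda m(r,f)$, subadditivity (Theorem~\ref{th:1}) and $m(r,1/a_0)=S(r,f)$ (Theorem~\ref{firstfund}), gives $\Lambda m(r,f)\le m(r,M_0)+\sum_{j\ge1}\lambda_j\,m(r,f/f^{(j)})+S(r,f)$. Combining with the upper bound cancels the leading $\Lambda m(r,f)$ and leaves
\[
m(r,f)\le\sum_{j\ge1}\lambda_j\,m\!\left(r,\tfrac{f}{f^{(j)}}\right)+S(r,f).
\]
If the dominant term is a pure power $a_0f^{\Lambda}$, the sum on the right vanishes and we obtain $m(r,f)=S(r,f)$ at once; this already settles the cases arising in the paper where the dominant leading-order term carries no derivative.

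The genuinely hard case is when the dominant monomial involves derivatives, and I expect this to be the main obstacle. Here each term is estimated by $m(r,f/f^{(j)})\le T(r,f^{(j)}/f)+O(1)=N(r,f^{(j)}/f)+S(r,f)\le j\,\overline N(r,1/f)+S(r,f)$, since the poles of $f^{(j)}/f$ lie only at the zeros of $f$ (of order at most $j$) and at the poles of $f$ (which are $S(r,f)$), where $\overline N(r,1/f)$ counts distinct zeros. Everything therefore reduces to the bound $\overline N(r,1/f)=S(r,f)$, i.e.\ to showing that a solution with few poles also has few zeros. To obtain this I would feed the equation back at the zeros of $f$: a leading-order analysis at a zero of order $m$, of the kind carried out in Section~\ref{painleveana}, shows that only finitely many orders $m$ are admissible and that the lowest-degree terms of $P$ must balance there, so the zeros of $f$ are confined to the zero set of a differential polynomial of strictly smaller total degree; Clunie's Lemma (Theorem~\ref{th:4}) applied to this balanced relation then forces the corresponding counting function, and hence $\overline N(r,1/f)$, to be $S(r,f)$. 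Substituting this back yields $m(r,f)=S(r,f)$, so $T(r,f)=S(r,f)$ and the contradiction is complete. Reconciling the admissible local expansions at the zeros with the global Nevanlinna estimate is the delicate point on which the whole argument turns.
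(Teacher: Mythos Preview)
The paper does not prove this theorem; it is quoted from \cite{zhangliao13} among the Nevanlinna-theoretic preliminaries that are explicitly ``state[d] without proofs''. So there is no in-paper proof to compare against, and your proposal must stand on its own.

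Your argument is clean and correct up to and including the inequality
\[
m(r,f)\le\sum_{j\ge1}\lambda_j\,m\!\left(r,\tfrac{f}{f^{(j)}}\right)+S(r,f),
\]
and in particular the case where the dominant monomial is a pure power $a_0f^{\Lambda}$ is fully settled. That is in fact the only case the paper ever invokes (in Lemma~\ref{lemmafurtherfunction} the dominant term is $a_1h^2$), so for the paper's purposes your proof is already adequate.

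The gap is in your Step~9, the general case where the dominant monomial carries derivatives. You reduce to proving $\overline N(r,1/f)=S(r,f)$ and then sketch a local argument: analyse the equation at a zero of $f$, observe that only finitely many vanishing orders are admissible, and conclude that the zeros of $f$ are trapped in the zero set of a differential polynomial of strictly smaller total degree, to which Clunie applies. This last inference does not follow. At a zero of $f$ the terms that balance are those of \emph{lowest order of vanishing}, which is governed by the weight $\lambda_0 m+\lambda_1(m-1)+\cdots$, not by the total degree $\sum\lambda_j$; the unique term of maximal total degree need not participate in this balance at all, and the surviving relation among lower-degree terms need not be of the form $f^n P=Q$ with $\deg Q\le n$ required by Clunie. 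Moreover, even if only finitely many zero-orders $m$ occur, that says nothing about how \emph{many} zeros there are, which is what $\overline N(r,1/f)$ measures. So the bridge from ``finitely many admissible local pictures'' to ``$\overline N(r,1/f)=S(r,f)$'' is missing a genuine idea. You flag this yourself as ``the delicate point'', and it is: as written, the argument does not close. If you want to push this route through, you would need to use the \emph{global} equation to manufacture an auxiliary relation of Clunie type in which the zeros of $f$ appear as poles of a controlled quantity; alternatively, consult \cite{zhangliao13} directly, as their mechanism for handling the derivative factors is different from what you outline.
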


Now we present our main lemmas concerning meromorphic solutions of (\ref{eq:1}). Below, $x,y,z$ are the meromorphic solutions of (\ref{eq:1}). We first prove some inequalities related the growth of $x,y,z$.

\begin{lemma}
    \label{lem1}
    If $x,y,z\notin\mathbb{C}$, then $m(r,u)+S(r,u)\leq m(r,v)+S(r,v),$ for $u,v\in\{x,y,z\}$. 
\end{lemma}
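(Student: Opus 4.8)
The plan is to extract, from each of the three equations in (\ref{eq:1}), a relation of the form $u'/u = (\text{linear in }x,y,z) + \text{const}$, and then to use the logarithmic derivative estimate (Theorem \ref{th:2}) together with the subadditivity of the proximity function (Theorem \ref{th:1} and Theorem \ref{th:2}) to turn this into a chain of inequalities among the $m(r,\cdot)$. Concretely, from (\ref{eq:1,1}) we have $Cy + z = x'/x - \lambda$, so $m(r,Cy+z) \le m(r,x'/x) + O(1) = S(r,x)$; similarly $m(r,Az+x) = S(r,y)$ from (\ref{eq:1,2}) and $m(r,Bx+y) = S(r,z)$ from (\ref{eq:1,3}). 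The point is that each such combination of two of the variables is small relative to the variable that was differentiated.

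Next I would bootstrap these three estimates into pairwise comparisons of $m(r,x)$, $m(r,y)$, $m(r,z)$. For instance, since $z = (x'/x - \lambda) - Cy$, Theorem \ref{th:1} gives $m(r,z) \le m(r,Cy) + m(r,x'/x) + O(1) \le m(r,y) + S(r,x)$; and symmetrically $m(r,y) \le m(r,z) + S(r,x)$ by rearranging the same equation for $y$ (using $C \ne 0$, which holds whenever $\mathbf{p_z}$-type poles are relevant; the degenerate case $C=0$ would be handled by noting $y'/y = Az+x+\mu$ directly and arguing separately, or is excluded because then $x'=x(z+\lambda)$ decouples). Doing this for all three equations yields, for each pair $(u,v)$ taken from $\{x,y,z\}$, an inequality $m(r,u) \le m(r,v) + S(r,w)$ where $w$ is the third variable. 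To convert the error term $S(r,w)$ into $S(r,u)$ or $S(r,v)$ — which is what the statement requires — I would first establish that $T(r,x)$, $T(r,y)$, $T(r,z)$ are all comparable, i.e. $T(r,u) \asymp T(r,v)$: this follows by combining the six inequalities (three $m$-bounds plus the pole-counting function estimates $N(r,u) \le N(r,v) + N(r,w) + O(1)$, which one gets from the local series analysis in Section \ref{painleveana} showing every pole of one variable forces poles of the others of bounded order, or more directly from $N(r, x'/x) = \overline{N}(r,x) + \overline{N}(r,1/x)$ bounds). Once $T(r,x) \asymp T(r,y) \asymp T(r,z)$, any quantity that is $S(r,w)$ is automatically $S(r,u)$ for every $u \in \{x,y,z\}$, and the desired inequality $m(r,u) + S(r,u) \le m(r,v) + S(r,v)$ follows immediately from the pairwise bound.

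The main obstacle I anticipate is the comparability step $T(r,x)\asymp T(r,y)\asymp T(r,z)$: the clean logarithmic-derivative bounds only control the proximity functions $m(r,\cdot)$, whereas the characteristic $T = m + N$ also involves the counting functions, so I need independent control of $N(r,x)$, $N(r,y)$, $N(r,z)$ against each other. This is exactly where the local series analysis of Section \ref{painleveana} is essential: the classification (\ref{orderpole}) of admissible leading orders shows that at any pole $t_0$ of, say, $x$, the orders $p_y, p_z$ are bounded (indeed $p_y \in \{-1\}\cup\mathbb{Z}_0^+$ etc.), so up to bounded multiplicity and finitely many exceptional configurations the pole divisors of $x,y,z$ have comparable counting functions; moreover a pole of $x$ of order $>1$ forces $x,y,z$ to share that pole with $y,z$ of order $-1$, again controlled. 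Assembling this carefully — possibly splitting off the cases where one variable is identically zero or where $a_x x + a_y y + a_z z$ is entire (Remark \ref{remarklinearatleast3}), which are covered by the hypothesis $x,y,z \notin \mathbb{C}$ being false or by Table \ref{tab1} directly — gives $N(r,u)\asymp N(r,v)$, and hence $T(r,u)\asymp T(r,v)$, closing the argument.
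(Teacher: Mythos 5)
The paper's proof is a single line: eliminating $z$ between (\ref{eq:1,1}) and (\ref{eq:1,2}) gives $x=y'/y-Ax'/x+ACy+A\lambda-\mu$, whence $m(r,x)\le m(r,y)+S(r,x)+S(r,y)$ by Theorems \ref{th:1} and \ref{th:2}; the substitution involves no division by $A$ or $C$, so no case analysis is needed. You in fact already have this argument in your hands: chaining your two displayed one-step estimates, $m(r,x)\le m(r,z)+S(r,y)$ (from $x=y'/y-Az-\mu$) and $m(r,z)\le m(r,y)+S(r,x)$ (from $z=x'/x-Cy-\lambda$), immediately yields $m(r,x)\le m(r,y)+S(r,x)+S(r,y)$, which is exactly the assertion --- the lemma allows error terms in both $u$ and $v$, so nothing further is required. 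Instead you record only the single-step bounds in the form $m(r,u)\le m(r,v)+S(r,w)$ with $w$ the third variable, and then set out to repair the error term by proving $T(r,x)\asymp T(r,y)\asymp T(r,z)$.

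That repair step is a genuine gap. First, the claim $N(r,u)\asymp N(r,v)$ is false in general: by (\ref{orderpole}), at a $\mathbf{p_x}$ pole the variable $x$ is \emph{regular} (it has $p_x=\alpha\ge 0$) while $y$ and $z$ have simple poles, so poles are not shared; worse, one variable can be entire and non-constant while the others have infinitely many poles (e.g.\ $z=L\e^{\nu t}$ with $x$ satisfying a Riccati equation, as in section \ref{0-x-y=0}), so that $N(r,z)\equiv 0$ while $N(r,x)$ grows. Second, the comparability of the characteristics is essentially Lemma \ref{lem2}, which the paper establishes \emph{later}, under the additional hypotheses $D\ne 0$, the presence of at least one pole, and a nondegeneracy condition, and by a different mechanism (a quadratic relation in $y$ over $\mathbb{C}[x]$, not pole counting); moreover Lemma \ref{lem1} is applied in section \ref{holosolu} to entire non-constant solutions, where your pole-divisor considerations are vacuous and the three characteristics need not be comparable. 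So the final assembly of your argument both fails as stated and is circular relative to the paper's structure. The fix is simply to delete the comparability step and keep the two-step chain.
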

\begin{proof}
    Without loss of generality, $u=x,v=y$. From (\ref{eq:1,1}) and (\ref{eq:1,2}), we have $x=y'/y-Ax'/x+ACy+A\lambda-\mu$, so Theorems \ref{th:1} and \ref{th:2} imply $m(r,x)+S(r,x)\leq m(r,y)+S(r,y)$.
\end{proof}
Now we show that the proximity functions $m(r,x)$, $m(r,y)$ and $m(r,z)$ have slow-growth when $D\neq0$.
\begin{lemma}
\label{lem2}
Let $x,y,z\notin\mathbb{C}$ and $D\neq0$. Assume that at least one of $x$, $y$, $z$ has a pole.
    \begin{enumerate}[label=(\alph*)]
        \item If $(x'-\lambda x)(y'-\mu y)(z'-\nu z)\not\equiv0$, then $T(r,x)\asymp T(r,y)\asymp T(r,z)$ and $m(r,u)=S(r,v), $ for $u,v\in\{x,y,z\}$. 
        \item If $z'=\nu z$, then $T(r,x)\asymp T(r,y)$ and $m(r,u)=S(r,w)$, where $u\in\{x,y,z\}$ and $w\in\{x,y\}$.
    \end{enumerate}
\end{lemma}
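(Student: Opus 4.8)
The plan is to extract, from the system (\ref{eq:1}), algebraic-differential identities that express each of $x,y,z$ as a small function of the others, and then close the loop. Start from the logarithmic-derivative form of the equations: $x'/x=Cy+z+\lambda$, $y'/y=Az+x+\mu$, $z'/z=Bx+y+\nu$. By Theorem \ref{th:2}, $m(r,x'/x)=S(r,x)$, and similarly for $y,z$, so each of $Cy+z$, $Az+x$, $Bx+y$ has proximity function that is small with respect to $x$, $y$, $z$ respectively. Since $D=ABC+1\neq0$, the linear map sending $(x,y,z)$ to $(Cy+z,\,Az+x,\,Bx+y)$ is invertible, so we can solve for $x,y,z$ as fixed linear combinations of these three ``small-proximity'' quantities. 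The immediate consequence, via Theorem \ref{th:1} and Theorem \ref{th:2}, is the chain of inequalities among the $m(r,\cdot)$'s; combined with Lemma \ref{lem1} (which already gives $m(r,u)+S(r,u)\le m(r,v)+S(r,v)$ cyclically, forcing all the $m$'s to differ only by small terms) we get $m(r,x)=S(r,\cdot)+m$-terms of the others. The real content is to show these common proximity terms are genuinely $S$, i.e. negligible against $T$.

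For part (a), I would use the extra hypothesis $(x'-\lambda x)(y'-\mu y)(z'-\nu z)\not\equiv 0$. Rewrite each equation as $x'-\lambda x = x(Cy+z)$, etc. The key is to locate the poles. Because the system is polynomial and the only singularities are the pole types in (\ref{orderpole}), a pole of $x$ is simple and is shared (as a pole) by at least one of $y,z$; conversely zeros of $x$ (other than $x\equiv 0$) are not singularities of the system by Remark \ref{zerosxyz}. The plan is to show that $N(r,x)$, $N(r,y)$, $N(r,z)$ are comparable: each pole of $x$ forces a pole in $y$ or $z$ of controlled order and vice versa, using the leading-order analysis of Section \ref{painleveana} (the four pole types and their orders). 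This gives $N(r,x)\asymp N(r,y)\asymp N(r,z)$. Together with $m(r,x)=m(r,y)+S=m(r,z)+S$ from the first paragraph, and $T=N+m$, we obtain $T(r,x)\asymp T(r,y)\asymp T(r,z)$. Finally, to upgrade ``$m(r,x)$ equals $m$ of the others up to $S$'' into ``$m(r,x)=S(r,\cdot)$'', I would exploit the nonvanishing hypothesis: consider the product or an appropriate combination and apply Clunie's Lemma (Theorem \ref{th:4}) or the Clunie-type Theorem \ref{th:11} to a relation of the shape $x^n\cdot(\text{diff.\ poly.})=(\text{diff.\ poly.})$ obtained by eliminating $y,z$, forcing $m(r,x)=S(r,x)$; then the common $m$-term is $S$ and the stated conclusion follows.

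For part (b), the hypothesis $z'=\nu z$ means $z=z_0\e^{\nu t}$ is a (small, entire, zero-free) function unless $z\equiv 0$; in either case $T(r,z)=S(r,x)=S(r,y)$, which immediately gives $m(r,z)=S(r,w)$ for $w\in\{x,y\}$. With $z$ small, the first two equations become $x'/x=Cy+(\text{small})$ and $y'/y=x+(\text{small})$, a closed $2$D-type subsystem; the same pole-counting argument (now only two pole types survive, the 2D L--V poles) gives $N(r,x)\asymp N(r,y)$, and combined with $m(r,x)=m(r,y)+S$ from Lemma \ref{lem1} we get $T(r,x)\asymp T(r,y)$. The remaining assertion $m(r,x)=S(r,w)$, $m(r,y)=S(r,w)$ follows as in part (a) by a Clunie-type argument applied to the $2$D subsystem (e.g. from $y\cdot y'' - (y')^2 = y^2(x'/x)\cdot(\dots)$-type relations after eliminating $x$).

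The main obstacle I anticipate is the pole-counting step, i.e. rigorously establishing $N(r,x)\asymp N(r,y)\asymp N(r,z)$: one must rule out, or quantitatively control, situations where $x$ has enormously many poles at which $y$ (or $z$) has only a high-order \emph{zero} rather than a pole --- which is exactly where the $\mathbf{p_x},\mathbf{p_y},\mathbf{p_z}$ poles with large integer $\alpha,\beta,\gamma$ enter. I expect the resolution to come from the resonance structure in Section \ref{painleveana}: the orders $\alpha,\beta,\gamma$ are fixed integers determined by the parameters (not free), so at each such pole the orders of the three functions are bounded by a constant depending only on $A,B,C$, making the counting functions comparable up to a multiplicative constant, which is all that $\asymp$ requires. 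Getting the bookkeeping clean across all four pole types, and then feeding it correctly into the Clunie-type lemmas to kill the residual $m$-terms, is where the care is needed.
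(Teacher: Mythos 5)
Your opening step coincides with the paper's: since $D\neq0$ the coefficient matrix in (\ref{eq:3}) is invertible, so each of $x,y,z$ is a constant linear combination of $x'/x$, $y'/y$, $z'/z$ and $1$, and Theorem \ref{th:2} gives $m(r,u)=S(r,x)+S(r,y)+S(r,z)$ for every $u\in\{x,y,z\}$. That is fine, and once $T(r,x)\asymp T(r,y)\asymp T(r,z)$ is known, this identity alone already yields $m(r,u)=S(r,v)$; no Clunie-type argument is needed to ``kill residual $m$-terms'', so that strand of your plan is superfluous.

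The genuine gap is in how you propose to prove $T(r,x)\asymp T(r,y)\asymp T(r,z)$. Pole-sharing only gives that every pole of $x$ is a pole of $y$ or of $z$ (and cyclically), i.e.\ $N(r,x)\le N(r,y)+N(r,z)$ and its two analogues; this does not imply pairwise comparability of the counting functions. The obstruction you flag at the end is real, but your proposed fix (that the orders $\alpha,\beta,\gamma$ are bounded) misses the point: the difficulty is not the multiplicity of the poles (they are all simple anyway) but their location. Nothing in the local classification prevents, say, every singularity from being of type $\mathbf{p_x}$, in which case $x$ has no poles at all while $y$ and $z$ have infinitely many, and $N(r,x)\asymp N(r,y)$ fails outright. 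The paper closes this by a different mechanism: eliminating $z$ and $y'$ from (\ref{eq:1}) produces the identity (\ref{lem2eq:2}), a polynomial relation of degree at most $2$ in $y$ whose coefficients are differential polynomials in $x$ alone (through $f=x'/x-\lambda$). When $C(AC+1)\ne0$, Theorems \ref{th:5} and \ref{th:2} give $2T(r,y)\le T(r,y)+O(T(r,x))$; when $C(AC+1)=0$ the relation degenerates to $P(x)y=Q(x)$ with $P\not\equiv0$ precisely because $fD\ne0$ --- and this is exactly where the hypothesis $(x'-\lambda x)(y'-\mu y)(z'-\nu z)\not\equiv0$ enters, a role your proposal never assigns to it. Either way $T(r,y)=O(T(r,x))$, and symmetry finishes the argument. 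Your part (b) also glosses a point: for $z=z_0\e^{\nu t}$ the claim $T(r,z)=S(r,x)$ is not automatic (it amounts to $r=o(T(r,x))$); it follows by setting $u=z$ in the proximity identity above, since $z$ has no poles and hence $T(r,z)=m(r,z)$.
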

\begin{proof}
Let $u\in\{x,y,z\}$. Since $D\neq0$, (\ref{eq:1}) implies $u=a_{ux}x'x^{-1}+a_{uy}y'y^{-1}+a_{uz}z'z^{-1}+a_{u}$
where $a_{ux},a_{uy},a_{uz},a_u\in\mathbb{C}$. Theorem \ref{th:2} shows that
\begin{equation}
    \label{eq:27}
    m(r,u)=S(r,x)+S(r,y)+S(r,z).
\end{equation}
Assume that $(x'-\lambda x)(y'-\mu y)(z'-\nu z)\not\equiv0$. Eliminating $z,y'$ in (\ref{eq:1}) yields 
\begin{equation}
    \label{lem2eq:2}
    {\begin{aligned}
     C(AC+1)y^2=\left\{(AC+1)f+C\left(\mu-\nu+x-Bx\right)\right\}y+(Bx+\nu)f+f',
\end{aligned}}
\end{equation}
where $f=x'/x-\lambda$.
If $C(AC+1)\neq0$, then Theorem \ref{th:2} implies $2T(r,y)\leq  T(r,y)+O(T(r,x))$ i.e. $T(r,y)=O(T(r,x))$. If $C(AC+1)=0$, then (\ref{lem2eq:2}) becomes $P(x)y=Q(x)$ where $P,Q\in \mathbb{C}[x]$. Here, $P=0$ implies $fD=0$. Hence, $y=P(x)/Q(x)$ and $T(r,y)=O(T(r,x))$. In either case, $T(r,y)=O(T(r,x))$. A similar argument yields $T(r,x)=O(T(r,y))$. Therefore, $T(r,x)\asymp T(r,y)$. Analogously, $T(r,z)\asymp T(r,y)$ and from (\ref{eq:27}), it follows that $m(r,u)=S(r,v)$. If $z'=\nu z$, then the proof for $T(r,x)\asymp T(r,y)$ still holds since both $x,y,$ must have poles by hypothesis. Since $z'=\nu z$, we have $m(r,z)=T(r,z)$. Setting $u=z$ in (\ref{eq:27}) gives $m(r,z)=S(r,w)$. Again, from (\ref{eq:27}), $m(r,u)=S(r,w)$.
\end{proof}
In Lemma \ref{lem2}, $T(r,x)\asymp T(r,y)$ implies that if a meromorphic function $f$ satisfies $T(r,f)=S(r,x)$, then $T(r,f)=S(r,y)$ and vice-versa. This will be useful in sections \ref{abc+1=0} and \ref{abc+1not0}, where we can treat a small function with respect to either $x$ or $y$.

Now we prove Lemma \ref{lem3}, which will be used to show that the constructed functions are constant. First, we require the following definition.
\begin{definition}
\label{defsingpole}
Let $S_0,S_1,S_2,S_3$ be subsets of $\mathbb{C}$ whose elements are the $\mathbf{p_0},\mathbf{p_x},\mathbf{p_y},\mathbf{p_z}$ poles, respectively. 
For $i\in\{0,1,2,3\}$, we define
\[
N_i(r)=\int_0^r\frac{\# (S_i\cap\{t:|t|\leq s\})-\# (S_i\cap\{0\})}{s}\dd s+\# (S_i\cap\{0\})\ln r.
\]
where $\#S$ denotes the cardinality of the set $S$. Since the poles of $x,y,z$ are simple, there is no need to consider multiplicities in $\# (S_i\cap\{t:|t|\leq r\})$. 
\end{definition}
\begin{lemma}
\label{lem3}
Suppose that $m(r,u)=S(r,y)$ and $u\in\{x,y,z\}$.
Let $f\in \mathbb{C}[x,y,z]$ such that $N(r,f)=S(r,y)$. If for some $i$ and $f_0$ we have $f=f_0+O(t-t_0)$ for all $t_0\in S_i$, then either $f=f_0$ or $N_i(r)=S(r,y)$. 
\end{lemma}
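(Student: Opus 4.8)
The plan is to reduce the dichotomy to a single auxiliary meromorphic function, namely $g=1/(f-f_0)$, and to show that its poles are already numerous enough to account for $N_i$. First I would dispose of the two degenerate cases: if $f\equiv f_0$ there is nothing to prove, and if $S_i=\emptyset$ then $N_i\equiv 0=S(r,y)$. So from now on assume $f\not\equiv f_0$ and $S_i\neq\emptyset$; note that then $f-f_0$ is a non-constant meromorphic function, since a nonzero constant cannot vanish at a point of $S_i$, and in particular $g$ is a well-defined meromorphic function.

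The first substantive step is to show that $f$ is itself small with respect to $y$. Writing $f\in\mathbb{C}[x,y,z]$ as a finite sum of monomials $c\,x^ay^bz^c$ and applying Theorem \ref{th:1} to the resulting products and sums gives $m(r,f)=O\bigl(m(r,x)+m(r,y)+m(r,z)\bigr)+O(1)$, which by the hypothesis $m(r,u)=S(r,y)$ for $u\in\{x,y,z\}$ is $S(r,y)$. Combined with the hypothesis $N(r,f)=S(r,y)$ this yields $T(r,f)=S(r,y)$. Applying the First Main Theorem (Theorem \ref{firstfund}) with $a=f_0$ then gives $T(r,g)=T(r,f)+O(1)=S(r,y)$.

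It remains to count the poles of $g$. For every $t_0\in S_i$ the hypothesis $f=f_0+O(t-t_0)$ says that $f$ is holomorphic at $t_0$ with $f(t_0)=f_0$, so $f-f_0$ has a zero at $t_0$ and hence $g$ has a pole at $t_0$; therefore, at each point, the jump of the pole-counting function $n(r,g)$ is at least the jump of $\#(S_i\cap\{|t|\le r\})$, so the latter counting function is dominated by $n(r,g)$ pointwise. Since the integral transform appearing in Definitions \ref{defnevan} and \ref{defsingpole} is linear and sends non-negative non-decreasing counting functions to non-negative functions for large $r$, this yields $N_i(r)\le N(r,g)\le T(r,g)=S(r,y)$, whence $N_i(r)=S(r,y)$. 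I do not expect a genuine obstacle: the whole argument is a short Nevanlinna estimate once one picks $g=1/(f-f_0)$, and the only points deserving care are peeling off the two degenerate cases and observing that the hypothesis ``$f=f_0+O(t-t_0)$ at every $\mathbf{p_i}$ pole'' is precisely what converts the smallness of $f$ into the bound on $N_i$.
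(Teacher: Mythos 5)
Your proof is correct and follows essentially the same route as the paper's: both arguments deduce $T(r,f)=S(r,y)$ from the smallness of the proximity and counting functions of $f$ as a polynomial in $x,y,z$, then apply the First Main Theorem to $1/(f-f_0)$ and bound $N_i(r)$ by the counting function of its poles on $S_i$. The only difference is that you spell out the estimate $m(r,f)=S(r,y)$ via Theorem \ref{th:1}, which the paper leaves implicit.
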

\begin{proof}
If $f\neq f_0$, then $f$ takes the value $f_0$ on $S_i$ i.e. $N(r,1/(f-f_0))\geq N_i(r)$. Since $N(r,f)=S(r,y)$ and $f\in \mathbb{C}[x,y,z]$, it follows that $T(r,f)=S(r,y)$. Using Theorem \ref{firstfund}, we obtain $S(r,y)=T(r,f)+O(1)=T(r,1/(f-f_0))\geq N_i(r)$.
\end{proof}

\begin{definition}
\label{deflmnp}
A function $f$ belongs to Class W, denoted by $f\in W$, if $f$ is rational, elliptic, or simply-periodic of the form $f=R(\tau)$, where $R$ is a rational function of $\tau=\e^{\delta t}$ for some $\delta\neq0$. 

For a rational function $R=P/Q$ where $P,Q$ are polynomials such that $\gcd(P,Q)=1$, we denote $\deg R=\max(\deg P,\deg Q)$.

When $f$ is simply-periodic in Class $W$, then $f(\tau)=P(\tau)/Q(\tau)$ where $P,Q$ are polynomials such that $\gcd(P,Q)=1$ and $f$ is said to take:
\begin{enumerate}[label=(\alph*)]
    \item min-form if $\deg P=0$
    \item mid-form if $0<\deg P<\deg Q$
    \item or max-form if $\deg P=\deg Q$.
\end{enumerate}

When $x,y,z\in W$, then we define $\mathfrak{D}$ to be the fundamental region. We denote $l,m,n,p$ as $\#S_0,\#S_1,\#S_2,\#S_3$ in $\mathfrak{D}$, respectively. 
\end{definition}
\begin{remark}
\label{remarkminmaxmid}
Suppose that $f=R(\e^{\delta t})$ takes min-form. By considering $f$ as a rational function in $\tilde{\tau}=\e^{-\delta t}$, then it can be seen that $f=\tilde{R}(\e^{-\delta t})$ for some rational $\tilde{R}$, takes the max-form, and vice-versa. This means that if there is a case where $f$ takes min-form, there is no need to discuss the case where $f$ takes max-form, and vice-versa.
\end{remark}
\begin{remark}
\label{remarksphere}
Recall that every non-constant rational function $R$ takes all values in $\mathbb{C}\cup\{\infty\}$ $\deg R$-times counting multiplicities on $\mathbb{C}\cup\{\infty\}$.  
\end{remark}

Another important lemma is the following.
\begin{lemma}
\label{anslem}
Suppose that $D\neq0$ and $x,y,z\in W$. Let $\chi_i=1$ if $S_i\neq\emptyset$; otherwise $\chi_i=0$. Let 
\begin{equation}
\label{anslemeq:00}
M_x=l\chi_{0}+n\chi_{2}+p\chi_{3},\quad V_x=lx_0\chi_{0}+p\hat{x}_0\chi_{3}+n\Check{x}_0\chi_{2},
\end{equation}
where $x_k$, $\hat{x}_k$, $\breve{x}_k$ and $\check{x}_k$ are the $x_k$ term in (\ref{eq:2}) of the $\mathbf{p_0},\mathbf{p_z},\mathbf{p_x},\mathbf{p_y}$ poles, respectively.
\begin{enumerate}[label=(\alph*)]
\item The functions $x,y,z$ are all elliptic, all simply-periodic, or all rational. Here simply-periodic functions are those functions that have exactly one independent period.
\item If $x$ is elliptic, then $\lambda=\mu=\nu$.
\item If $x$ is simply-periodic, then
    \begin{equation}
    \label{anslemeq:1}
    x=\sum_{j=1}^l\frac{x_0\delta \tau\chi_{0}}{\tau-\tau_{j}}+\sum_{j=1}^p\frac{\hat{x}_0\delta \tau\chi_{3}}{\tau-\hat{\tau}_{j}}+\sum_{j=1}^n\frac{\check{x}_0\delta \tau\chi_{2}}{\tau-\check{\tau}_{j}}+d_x,\quad \tau=\e^{\delta t}.
    \end{equation}
    When $x$ has no zeros, then 
    \begin{equation}
    \label{anslemeq:01}
    x=\frac{\acute{x}_0\tau^{m_x}}{\prod_{j=1}^l(\tau-\tau_{j})\prod_{j=1}^p(\tau-\hat{\tau}_{j})\prod_{j=1}^n(\tau-\check{\tau}_{j})},
    \end{equation}
    where $0\leq m_x\leq M_x=l+n+p$. Thus,
    \begin{enumerate}[label=(\roman*)]
        \item when $x$ takes min-form, then $m_x=0,d_x=-\delta V_x$. 
        \item when $x$ takes max-form, then $d_x=0,m_x=M_x$ and $\acute{x}_0=\delta V_x$.
        \item when $x$ takes mid-form, then $V_x=0,d_x=0,m_x\in(0,M_x)$.
    \end{enumerate}
    When $x$ has zeros, then
    \begin{equation}
    \label{anslemeq:001}
    x=\frac{\acute{x}_0\tau^{m_x}\prod_{j=1}^m(\tau-\breve{\tau}_{j})^\alpha}{\prod_{j=1}^l(\tau-\tau_{j})\prod_{j=1}^p(\tau-\hat{\tau}_{j})\prod_{j=1}^n(\tau-\check{\tau}_{j})},
    \end{equation}
    where $0\leq m_x+\alpha m\leq l+n+p$.
\item If $x$ is rational, then 
    \begin{equation}
    \label{anslemeq:2}
    x=\sum_{j=1}^l\frac{x_0\chi_{0}}{t-t_{j}}+\sum_{j=1}^p\frac{\hat{x}_0\chi_{3}}{t-\hat{t}_{j}}+\sum_{j=1}^n\frac{\check{x}_0\chi_{2}}{t-\check{t}_{j}}+d_x.
    \end{equation}
    \begin{enumerate}[label=(\roman*)]
        \item If $x$ has no zeros, then $d_x=0,V_x=0$.
        \item If $x,y$ have no zeros, then $d_z+\lambda=Ad_z+\mu=\nu=0$ where $d_z$ is an analogue of $d_x$ for $z$. Moreover, if $d_z\neq0$, then $z=P/Q$ with $\deg P=\deg Q$.
    \end{enumerate}
\end{enumerate}
In the above formulas, if there is no singularity of type $\mathbf{p_0}$, we set $l=0$ and remove the factor and summation containing these poles from the formulas (similarly for other poles).
\end{lemma}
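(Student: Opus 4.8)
The plan is to settle, in order, the type of $x,y,z$ (part (a)), the coincidence $\lambda=\mu=\nu$ when they are elliptic (part (b)), and then to reconstruct the global form of $x$ from its local data in the simply-periodic and rational cases (parts (c), (d)). Throughout I would move freely between $S(r,x)$, $S(r,y)$, $S(r,z)$ via Lemma~\ref{lem2}, and use the facts from Section~\ref{painleveana} and Remark~\ref{zerosxyz} that every pole of $x$ is simple and sits at a $\mathbf{p_0}$, $\mathbf{p_z}$ or $\mathbf{p_y}$ pole, while every zero of $x$ sits at a $\mathbf{p_x}$ pole and has the single fixed order $\alpha$ (cyclically for $y,z$). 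I would first dispose of the degenerate sub-cases where one of $x'-\lambda x$, $y'-\mu y$, $z'-\nu z$ vanishes identically: using $D\ne0$ these force the system into one of the rows of Table~\ref{tab1} or into a lower-dimensional system. For (a): at least one of $x,y,z$ has a pole, so Lemma~\ref{lem2} gives $T(r,x)\asymp T(r,y)\asymp T(r,z)$ (if $z'=\nu z$ then $z=K\e^{\nu t}$, Lemma~\ref{lem2}(b) still gives $T(r,x)\asymp T(r,y)$, and $z'=\nu z$ is incompatible with $x$ elliptic once $D\ne0$). A Class~$W$ function has $T(r,\cdot)$ comparable to $\log r$, to $r$, or to $r^2$ according as it is rational, simply-periodic, or elliptic, and these rates are mutually incompatible under $\asymp$; hence $x,y,z$ are all of the same type.

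For (b), by (a) the functions $x,y,z$ are all elliptic, hence all non-constant and each has poles together with the same number of zeros. Since $x$ has a zero, Remark~\ref{zerosxyz} forces a $\mathbf{p_x}$ pole to exist with $\alpha\ge1$; the resonance condition at a $\mathbf{p_x}$ pole of positive order — the image under the cyclic maps (\ref{cyclemaps1})--(\ref{cyclemaps2}) of (\ref{eq:13.1}) — then equates two of $\lambda,\mu,\nu$. Running the same argument with the zeros of $y$ and of $z$ (so that $\mathbf{p_y}$ and $\mathbf{p_z}$ poles of positive order also occur) equates the remaining pair, giving $\lambda=\mu=\nu$.

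For (c) and (d) the heart of the matter is to rebuild $x$ from its poles, zeros and residues, after the preliminary step of showing $x,y,z$ share a period, so that $\mathfrak D$ and $l,m,n,p$ are well defined and the type of a pole is preserved under the period. For $\pi$ a period of $x$, set $u:=y(\cdot+\pi)-y(\cdot)$; eliminating $z$ and $y'$ from (\ref{eq:1}) and using $x(\cdot+\pi)=x(\cdot)$ shows $u$ satisfies $u'=u\cdot(\text{meromorphic})$, so $u\equiv0$ or $u$ is zero-free, and a zero-free $u\in W$ of the form $y(\cdot+\pi)-y(\cdot)$ either forces $u\equiv0$ after all or leaves a lone exponential term of an incommensurate frequency, which is incompatible with $z\in W$; hence $\pi$ is a period of $y,z$ as well. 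Granting this, in the simply-periodic case write $x=R_x(\tau)$, $\tau=\e^{\delta t}$, with $R_x$ rational; since $m(r,x)=S(r,x)$ (Lemma~\ref{lem2}), $x$ cannot blow up as $\operatorname{Re}(\delta t)\to\pm\infty$, so $R_x$ has no pole at $\tau=0$ or $\tau=\infty$; its poles in $\mathbb C^{*}$ are exactly the images $\e^{\delta t_0}$ of the $\mathbf{p_0},\mathbf{p_y},\mathbf{p_z}$ poles (all simple) and its zeros in $\mathbb C^{*}$ the images of the $\mathbf{p_x}$ poles (order $\alpha$), which gives (\ref{anslemeq:01}) and (\ref{anslemeq:001}), while partial fractions together with $\tau-\tau_j\sim\delta\tau_j(t-t_0)$ identify the residue at $\tau_j$ as $\delta\tau_j$ times the matching leading coefficient $x_0$, $\hat x_0$ or $\check x_0$, giving (\ref{anslemeq:1}). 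Using Remark~\ref{remarkminmaxmid} we may assume $\deg P\le\deg Q$, so $0\le m_x\le M_x$, and then evaluating the resulting expression at $\tau=\infty$ (where $R_x$ vanishes unless $\deg P=\deg Q$) and at $\tau=0$ (where it vanishes unless $m_x=0$) yields exactly the min-/mid-/max-form relations between $m_x$, $d_x$, $\acute x_0$ and $V_x$.

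The rational case is identical with $t$ in place of $\tau$: $m(r,x)=S(r,x)$ forces no pole of $x$ at $\infty$, so $x$ is a proper rational function with simple poles at the $\mathbf{p_0},\mathbf{p_y},\mathbf{p_z}$ poles, i.e. (\ref{anslemeq:2}); if $x$ has no zeros it is a constant over a product of distinct linear factors, so $x\to0$ (hence $d_x=0$) and $V_x$, the coefficient of $1/t$ in the expansion of $x$ at infinity, vanishes, which is (d)(i); and (d)(ii) follows by letting $t\to\infty$ in the three logarithmic-derivative forms of (\ref{eq:1}): with $x,y\to0$ there, $z'/z\to0$ gives $\nu=0$, $y'/y\to0$ gives $Ad_z+\mu=0$, $x'/x\to0$ gives $d_z+\lambda=0$, and $d_z\ne0$ means $z$ does not vanish at $\infty$, forcing $\deg P=\deg Q$ for $z$. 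The single genuinely global obstacle I expect is the common-period step: reducing it to the vanishing of $u=y(\cdot+\pi)-y(\cdot)$ is easy, but excluding a surviving exponential term of an incommensurate frequency — and the analogous lattice-commensurability statement in the elliptic case — needs care; once that is in hand, the remainder of (c), (d) is a careful but routine partial-fraction bookkeeping against the local data of Section~\ref{painleveana}.
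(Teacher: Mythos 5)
The local-to-global bookkeeping in your parts (b), (c) and (d) --- identifying residues via $\tau-\tau_j\sim\delta\tau_j(t-t_0)$, evaluating at $\tau=0$ and $\tau=\infty$ to extract the min-/mid-/max-form relations, and taking $t\to\infty$ in the logarithmic derivatives for (d)(ii) --- is exactly the ``straightforward substitution'' the paper performs, and your use of $m(r,x)=S(r,x)$ to exclude poles of $R_x$ at $\tau=0,\infty$ is precisely the content of Lemma \ref{subanslem}$(c)$. The genuine gap is the one you flag yourself: the common-period (and, in the elliptic case, common-lattice) step. Your argument via $u=y(\cdot+\pi)-y(\cdot)$ gets as far as $u'=u\cdot(\text{meromorphic})$ but does not rule out a nonzero $u$; without this, $x$, $y$ and $z$ need not be rational in one and the same $\tau=\e^{\delta t}$, so (\ref{anslemeq:1})--(\ref{anslemeq:001}) --- and everything built on them in sections \ref{neq0x-yy-z3} onward --- are not justified. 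The paper sidesteps this entirely: Lemma \ref{subanslem}$(a)$ works with the Laurent data of the \emph{triple} $(x,y,z)$. Because the expansions at a given pole type are determined by finitely many resonance parameters taking finitely many values (the ``known expansions'' hypothesis), two poles carry identical local data for all three functions simultaneously; uniqueness for the autonomous system then makes translation by their difference a period of $x$, $y$ and $z$ at once, and three non-collinear such points give joint ellipticity. That is the missing idea: periodicity is extracted jointly, not function by function and reconciled afterwards.

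A secondary point on your part (a): the growth-rate trichotomy $T\asymp\log r$, $\asymp r$, $\asymp r^2$ is a clean alternative to the paper's route, but it rests on $T(r,x)\asymp T(r,y)\asymp T(r,z)$, which Lemma \ref{lem2} only supplies when $(x'-\lambda x)(y'-\mu y)(z'-\nu z)\not\equiv0$; in the case $z'=\nu z$ you obtain only $T(r,x)\asymp T(r,y)$ with $T(r,z)=S(r,x)$, and your assertion that this is incompatible with $x$ elliptic is stated without proof. Your part (b) is correct and matches the paper, provided (as you do) one first notes that a zero of $x$ forces a $\mathbf{p_x}$ pole with $\alpha\ge1$, so that the relevant resonance condition is the $\alpha>0$ analogue of (\ref{eq:13.1}) rather than of (\ref{eq:13.2}).
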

When $x,y,z\in W$ and are not elliptic, the formulas are given in (\ref{anslemeq:1}) and (\ref{anslemeq:2}). We use these formulas to find all undetermined variables $\tau_{i_j}$. However, to do so, we need to know upper bounds for $l,m,n,p$, which is equivalent to knowing the number of choices that the resonance parameters can admit. This can be determined using the fact that the constructed functions are constant and by expanding the Laurent series about each type of the singularity of the constructed functions.   

The proof of Lemma \ref{anslem} relies on Lemma \ref{subanslem}, which itself is used in some cases in section \ref{abc+1not0} to conclude that $x,y,z\in W$.
\begin{lemma}
\label{subanslem}
Suppose that $D\neq0$ and there exists at least two types of singularities, say $\mathbf{p_0},~\mathbf{p_z}$ poles.
\begin{enumerate}[label=(\alph*)]
    \item If $x,y,z$ have known expansions about one of them, say $\mathbf{p_0}$, then $x,y,z$ are either all elliptic, all simply-periodic or all have unique expansions about each point $t_0\in S_0$. 
    \item If $N_0(r),N_3(r)\neq S(r,y)$, then $\#S_0,\#S_3$ are both finite or both infinite in $\mathfrak{D}$. 
    \item If $x,y,z$ are simply-periodic and the number of poles are finite in $\mathfrak{D}$, then $x,y,z\in W$. Moreover, $T(r,x)\asymp r,m(r,x)=O(1),$ and $x(\tau)=P(\tau)/Q(\tau),\tau=\e^{\delta t}$ such that $\deg P\leq \deg Q$. This similarly holds for $y,z$.
    \item If $\ln r\asymp N_0(r)\neq S(r,y)$, then $x,y,z$ are rational and have unique Laurent series about each point.
\end{enumerate}
All the above conclusions are similar if the types of singularities are changed.
\end{lemma}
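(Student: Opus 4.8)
The plan is to reduce every assertion to Nevanlinna estimates on the proximity and counting functions of $x,y,z$, using the autonomy of (\ref{eq:1}) for the periodicity statement. As a preliminary common to all parts, whenever at least two of the pole types in (\ref{orderpole}) occur, each of $x,y,z$ genuinely has a pole (a short check on the leading orders), so $(x'-\lambda x)(y'-\mu y)(z'-\nu z)\not\equiv0$ and Lemma \ref{lem2}(a) gives $T(r,x)\asymp T(r,y)\asymp T(r,z)$ and $m(r,u)=S(r,v)$ for all $u,v\in\{x,y,z\}$; the three notions of smallness then coincide, written $S(r)$, and (taking the two types to be $\mathbf{p_0},\mathbf{p_z}$) $T(r,y)=N(r,y)+m(r,y)=N_0(r)+N_3(r)+S(r)$. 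Because exactly two leading-order types are present, one can also form a nonzero linear (or low-degree polynomial) combination of $x,y,z$ that is regular at one prescribed type with a controlled principal part at the other; this is the tool for the counting estimates.

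For part (a): since (\ref{eq:1}) is autonomous, two solutions agreeing as Laurent series about a common pole location coincide everywhere. As the expansion of $(x,y,z)$ about a $\mathbf{p_0}$ pole is fully determined by hypothesis, any two points $t_1\ne t_2$ of $S_0$ give $(x,y,z)(t+t_1)\equiv(x,y,z)(t+t_2)$, so $t_2-t_1$ is a period. The period group of the non-constant triple is a discrete subgroup of $\mathbb{C}$, hence trivial, of rank one, or of rank two; these are respectively the ``unique expansion'' case (in particular if $\#S_0\le1$), ``all simply periodic'', and ``all elliptic''. For part (d): $N_0(r)\asymp\ln r$ just says $\#S_0<\infty$, and $N_0(r)\ne S(r,y)$ forces $\ln r\ne o(T(r,y))$, so Theorem \ref{th:rational} makes $y$ rational, hence $x,z$ too; a rational function is its own unique Laurent expansion at each point, so any resonance parameter is pinned.

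For part (c): if $x,y,z$ are simply periodic with common period $\omega$ and have finitely many poles in a period strip, then $N(r,x)\asymp r$, and with $m(r,x)=S(r)=o(T(r,x))$ this yields $T(r,x)\asymp r$, so $x$ has finite order. Writing $\tau=\e^{\delta t}$, $\delta=2\pi\mathrm{i}/\omega$, we have $x=\Phi(\tau)$ with $\Phi$ meromorphic on $\mathbb{C}^\ast$ with only finitely many poles; factoring these into a rational $R_0(\tau)$ leaves $\Phi/R_0$ holomorphic on $\mathbb{C}^\ast$, hence a Laurent series in $\tau$, which must terminate for $x$ to have finite order (an infinite Laurent series would make $x$ grow faster than any power of $|t|$ in the imaginary direction). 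Thus $\Phi$ is rational and $x\in W$; its poles lie in $\mathbb{C}^\ast$, so $\Phi=P/Q$ with $Q(0)\ne0$; a pole of $\Phi$ at $\tau=\infty$ would give $m(r,x)\asymp r$ against $m(r,x)=o(r)$, so $\deg P\le\deg Q$ and $m(r,x)=O(1)$. The same applies to $y,z$.

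For part (b): using the uniquely determined leading coefficients (\ref{eq:4}) at $\mathbf{p_0}$ poles and (\ref{eq:13})--(\ref{eq:13.2}) at $\mathbf{p_z}$ poles, construct $g_0\in\mathbb{C}[x,y,z]$ regular at every $\mathbf{p_0}$ pole, with simple poles only at the $\mathbf{p_z}$ poles, and with $g_0=g_0^0+O(t-t_0)$ for a fixed constant $g_0^0$ at each $\mathbf{p_0}$ pole, and symmetrically $g_3$ regular at the $\mathbf{p_z}$ poles; then $g_0-g_0^0$ vanishes on $S_0$, so by the First Main Theorem $N_0(r)\le T(r,g_0)+O(1)=m(r,g_0)+N(r,g_0)+O(1)\le N_3(r)+S(r)+O(1)$, and likewise $N_3(r)\le N_0(r)+S(r)+O(1)$. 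Feeding these into $T(r,y)=N_0(r)+N_3(r)+S(r)$ shows that if either $N_0$ or $N_3$ is $O(r)$ (that type finite in the fundamental region), then $T(r,y)=O(r)+S(r)$, which absorbs the $S(r)$ term and forces both to be $O(r)$; hence the two types are simultaneously finite or infinite. The step I expect to be the main obstacle is precisely this construction: making $g_0$ and $g_3$ regular at one type while taking a \emph{single} fixed value there. One leans on $D\ne0$ so that the leading coefficients are determined, but when $\gamma>0$ the $\mathbf{p_z}$ data $\hat z_0,\hat x_1$ are free parameters, so $g_3$ must be chosen to kill their contribution to the value at a $\mathbf{p_z}$ pole, which needs the explicit $k=1$ resonance vector; likewise in (a) one must be sure that ``known expansions'' really fixes the whole Laurent tail, including any resonance at $s\ge2$, which is why the lemma is only invoked after that parameter has been pinned down. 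The finite-order argument in (c) and the rationality argument in (d) are then routine.
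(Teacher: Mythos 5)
Parts (a), (c) and (d) of your proposal are sound and essentially track the paper's own route. Part (a) is the same argument: autonomy plus agreement of Laurent data at two pole locations gives a period, and the discreteness of the period group yields the trichotomy. For (d), your direct appeal to Theorem \ref{th:rational} (from $N_0(r)\asymp\ln r$ and $N_0(r)\neq S(r,y)$ one gets $T(r,y)=O(\ln r)$ along a suitable unbounded set, hence $y$ is rational, hence $x,z$ are too by Lemma \ref{lem2}) is a mild shortcut over the paper, which first uses (b) to bound the full counting function. For (c), you replace the paper's citation of Eremenko's argument by a self-contained proof (descend to $\mathbb{C}^{\ast}$ via $\tau=\e^{\delta t}$, strip off the finitely many poles, and force the remaining Laurent series in $\tau$ to terminate because $T(r,x)\asymp r$); this is correct and arguably more transparent, and your derivation of $\deg P\le\deg Q$ from $m(r,x)=o(r)$ matches the paper.

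The genuine gap is part (b). You reduce it to constructing polynomials $g_0,g_3\in\mathbb{C}[x,y,z]$ that are regular at one pole type, take a \emph{single} fixed value there, and have poles only at the other type --- and you yourself flag this construction as the main obstacle without carrying it out. When $\gamma>0$ the free data $\hat{z}_0,\hat{x}_1$ at a $\mathbf{p_z}$ pole make it far from clear that such a $g_3$ exists in the generality the lemma demands (it is invoked across many parameter regimes), so as written the proof of (b) is incomplete. Moreover, even granting the construction, your final step only delivers ``both $N_0$ and $N_3$ are $O(r)$,'' which is not the stated dichotomy ``both finite or both infinite.'' The paper needs none of this machinery: if, say, $\#S_0<\infty$ while $\#S_3=\infty$, then $N_0(r)=O(\ln r)$ whereas $N_3(r)/\ln r\to\infty$; since $N_3(r)\le N(r,y)\le T(r,y)$, this forces $\ln r=S(r,y)$ and hence $N_0(r)=O(\ln r)=S(r,y)$, contradicting the hypothesis $N_0(r)\neq S(r,y)$. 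The symmetric case is identical. You should replace your auxiliary-function argument for (b) by this two-line contradiction, which uses only the hypothesis that neither counting function is small.
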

\begin{proof}
If $x$ has the same Laurent series about two points, then $x,y,z$ are periodic. If there are three non-collinear such points, then $x,y,z$ are elliptic. Thus, $(a)$ is proved. If $N_0(r)=O(\ln r)$ and $\#S_3=\infty$ i.e. $\ln r=o(N_3(r))=S(r,y)$, then $N_0(r)=S(r,y)$ and so $(b)$ is proved. If $\ln r\asymp N_0(r)\neq S(r,y)$ i.e. there are finitely many $\mathbf{p_0}$ poles, then $(b)$ shows that there are finitely many poles. The uniqueness of the Laurent expansion follows from the proof of $(a)$. Then Theorem \ref{th:rational} and Lemma \ref{lem2} show that $x,y,z$ are rational, proving $(c)$. Now we prove $(d)$. Lemma \ref{lem2} yields $m(r,x)=S(r,x)$. Since there are finitely many poles, then $N(r,x)=O(r)$ and so $m(r,x)=O(r)$. By Theorem \ref{firstfund}, we have $N(r,1/(x-a))=O(r)$ for all $a\in\mathbb{C}$. Arguments in \cite[Th.3 first case]{eremenko1986meromorphic} show that $x\in W$ and $T(r,x)\asymp r$. Here if $\deg P>\deg Q$, then $m(r,x)\asymp r$; otherwise $m(r,x)=O(1)$ \cite[Ch.1]{hayman1964meromorphic}. As $m(r,x)=S(r,x)$, then $m(r,x)=O(1)$ and $\deg P\leq\deg Q$.
\end{proof}
\begin{proof}[Proof of Lemma \ref{anslem}]
Here $(a)$ follows from Lemma \ref{subanslem}$(a)$. If $x$ is elliptic, then $x$ has zeros, which shows that the $\mathbf{p_x}$ poles exist (Remark \ref{zerosxyz}), and $\mu=\nu$ (analogue of (\ref{eq:13.1})). Similarly for $y,z$, so $\lambda=\mu=\nu$, proving $(b)$. Part $(c)$ follows from a straightforward substitution and Lemma \ref{subanslem}$(c)$. Similarly, $(\ref{anslemeq:2})$ and $(d)(i)$ follow from a straightforward substitution and Lemma \ref{subanslem}$(d)$. To see $(d)(ii)$, we use (\ref{eq:1}) and substitute the expressions for $x,y,z$ in (\ref{anslemeq:2}) and its analogues into $Cy+z+\lambda,Az+x+\mu,Bx+y+\nu$ to obtain
\begin{equation}
\label{proanslemeq:1}
{\begin{aligned}
\frac{x'}{x}=\sum_{j}\frac{-1}{t-t_{j}}+d_z+\lambda,~\frac{y'}{y}=\sum_{j}\frac{-1}{t-t_{j}}+Ad_z+\mu,~\frac{z'}{z}=\sum_{j}\frac{k_{j}}{t-t_{j}}+\nu,    
\end{aligned}}    
\end{equation}
where $k_{j}\in\mathbb{Z}$. In (\ref{proanslemeq:1}), there are no $d_x,d_y$ terms since $d_x=d_y=0$ follows from $(d)(i)$. Integrating (\ref{proanslemeq:1}) shows that $d_z+\lambda=Ad_z+\mu=\nu=0$. If $d_z\neq0$, then the analogue of (\ref{anslemeq:2}) for $z$ shows that $\deg P=\deg Q$, proving $(d)(ii)$.
\end{proof}
Lemma \ref{subanslem}$(b)$ states that if there are only finitely many Laurent series of $x,y,z$ corresponding to one type of pole, then $x,y,z$ have only finitely many Laurent series overall (Eremenko's finiteness property). In the proof of Lemma \ref{subanslem}$(a),(c),(d)$, we showed that the slow-growth condition (Lemma \ref{lem2}) and Eremenko's finiteness property of $x,y,z$ imply that $x,y,z\in W$ of the same type. For a comparison with \cite{eremenko2005,halburdwang14}, see section \ref{discussion}.

\section{Holomorphic solutions}
\label{holosolu}
If $x,y,z$ are all constant, then we set $x'=y'=z'=0$ in (\ref{eq:1}) and solve for $x,y,z$. Now we assume that at least one of $x,y,z$ is non-constant. Here, $T(r,u)=m(r,u)$ for all $u\in\{x,y,z\}$.
If $x=y=0$, then (\ref{eq:1,1}) yields $z=K\e^{\nu t},$ where $K\in\mathbb{C}$.
If $yz\not\equiv0,x=0$, then (\ref{eq:1}) becomes $y'/y=Az+\mu,z'/z=y+\nu$ which yields holomorphic solutions when $A=0$.
If $A\neq0$, then $Az=y'/y-\mu,~y=z'/z-\nu$, which, by Theorems \ref{th:1} and \ref{th:2} imply $T(r,z)=S(r,y), T(r,y)=S(r,z)$.
Thus, $T(r,y)=S(r,y),T(r,z)=S(r,z)$ i.e. $y,z$ are constant. So, we assume that $xyz\neq0$. Remark \ref{zerosxyz} asserts that $x,y,z$ have no zeros. If $C=0$, then (\ref{eq:1,1}) implies $T(r,z)=S(r,x)$, and (\ref{eq:1,2}) yields $T(r,x)=S(r,y)$. Thus, $T(r,z)=S(r,y)$. In (\ref{eq:1,3}), we obtain $T(r,y)=S(r,z)=S(r,y)$ i.e. $y\in\mathbb{C}$. Then, (\ref{eq:1,1}) and (\ref{eq:1,2}) yield $Ax'=x(A\lambda-x-\mu)$, which implies that $x$ has poles, unless $x$ is constant. Hence, $z$ is also constant. Similarly when $AB=0$, we only obtain $x,y,z\in\mathbb{C}$. Thus, we may assume that $ABC\neq0$. Let $f=x'/x,g=y'/y$, then $T(r,f)=S(r,x),T(r,g)=S(r,y)$. Lemma \ref{lem1} shows that $T(r,z)\asymp T(r,x)\asymp T(r,y)$ and so $T(r,f)=S(r,y)$. Expressing $f$, $f'$ and $g$ in terms of $x$, $y$ and $z$ and eliminating $x$ and $z$ yields $CDy^2+a_1y+a_0=0$ where $a_1=(2D-1)(f-\lambda)+C\{(1-B)g+B\mu-\nu\}$ and $a_0=f'+(f-\lambda)\{AB(f-\lambda)-Bg+B\mu-\nu\}$. It follows that $D=a_1=a_0=0$, which implies that
\begin{equation}
\label{holoeq2}
g'(AC+1)-g^2(AC+1)+g(\mu+AC\nu)=0.
\end{equation}
If $g=0$, then $y\in\mathbb{C}$. Here, $a_1=0$ implies $f\in\mathbb{C}$. So $z,x\in\mathbb{C}$.
If $AC+1=0$, then $B=1$. Simplifying (\ref{holoeq2}) and $a_1=0$ yields $\mu=\nu$ and $f=\lambda$. Then $x=K\e^{\lambda t},z=-Cy,K\in\mathbb{C}$ and (\ref{eq:1,2}) implies $y'/y=y+K\e^{\lambda t}+\mu$. Let $h:=y+K\e^{\lambda t}$, then $T(r,h)=m(r,y'/y)=S(r,y)$. Moreover, $y(h-\lambda+\mu)+\lambda h-h'=0$. So, Theorem \ref{co:7} yields $\lambda=\mu,h=0$. Thus we obtain the solution $y=-K\e^{\lambda t},x=K\e^{\lambda t},z=CK\e^{\lambda t}$. Finally, if $g(AC+1)\not\equiv0$, then (\ref{holoeq2}) shows that $g$ is entire only if $g=(AC\nu+\mu)(AC+1)^{-1}$.
Then, $a_1=0$ reduces to $f=\lambda$. So $x=K\e^{\lambda t},z=-Cy$ where $K\in\mathbb{C}$. Since $g=y'/y$, then $y=L\e^{gt},z=-CL\e^{gt}$ where $L\in\mathbb{C}$. Substituting $x,y,z$ into $Az+x=g-\mu$ yields $g=\lambda=\mu=\nu,L=-BK$. Therefore, we obtain the solution: $y=-BK\e^{\lambda t}$, $x=K\e^{\lambda t}$, $z=BCK\e^{\lambda t}$ with $\lambda=\mu=\nu$.

\section{Case $D=0$}
\label{abc+1=0}
Recall that when $D=0$, if there is at least one $\mathbf{p_0}$ pole, then there can be no other type of pole. Each subsection heading below specifies the exact types of poles that must be present in the case considered, with no other poles permitted.

\subsection{$\mathbf{p_0}$ poles}
\label{0-3pole}
The resonance condition (\ref{eq:8}) implies that $C\ne 0$, $C\ne 1$, $B=(C-1)/C$ and $A=1/(1-C)$. Using $N(r,x)=N(r,y)=N(r,z)$ and Lemma \ref{lem1}, we obtain $T(r,x)\asymp T(r,y) \asymp T(r,z)$. Let 
\begin{equation}
\label{0-x-y=0eq:1}
f=x'x^{-1}-y'y^{-1}=-x+Cy+(1-A)z+\lambda-\mu.
\end{equation}
Consequently, $T(r,f)=m(r,f)=S(r,y)$. Assume that $\lambda,\mu$ and $\nu$ are pairwise distinct.
Using equation (\ref{eq:1}), we can express $f'$ and $f''$ in terms of $x$, $y$ and $z$. Eliminating $x$ and $z$ from these expressions and the expression (\ref{0-x-y=0eq:1}) for $f$, we obtain an equation of the form $Ca_1y+a_0=0$, where $a_1=f'b_0+b_1f+b_0$ and $a_0\in\mathbb{C}[C,f,f',f'',\lambda,\mu,\nu]$. Here, $b_0=(C-1) \mu -C \nu +\lambda $ and $b_1,b_2\in\mathbb{C}[C,\lambda,\mu,\nu]$.
It follows that $a_0=a_1=0$. If $b_0=0$, then $a_0=a_1=0$ reduces to $f=0$. Since $f'-\lambda f=(Cy+z+\lambda)(\mu-\lambda)=x'(\mu-\lambda)/x$, then $f=0$ implies $x'=0$; a contradiction. Hence, $b_0\neq0$, and we may express $f',f''$ in terms of $f$ from $a_1=0$.
Eliminating $f',f''$ from $a_1=a_0=0$, we can determine $f$ and the relations between $\lambda,\mu,\nu$. Using the fact that $f,f'$ are polynomials in $x,y,z$, we express $x,y$ in terms of $z$ and substitute these into (\ref{eq:1}) to verify whether (\ref{eq:1}) holds. Following this procedure, we find the following solutions: $b_0=BC\mu-C\nu+\lambda\neq0$ and either
\begin{subequations}
\begin{align}
\label{0-3poleeq:01}
&x=B^{-1}\nu\lambda^{-1} z,~y=A\lambda^{-1}(z+\lambda)(\lambda-\nu),~z'=A\lambda^{-1}(\lambda-C\nu)z(z+\lambda),~\mu=0,\\   
&\ \mbox{or}\nonumber
\\
\label{0-3poleeq:02}
&x=C\mu^{-1}(\nu-\mu)(Az+\mu),~y=A\nu\mu^{-1} z,~z'=C\mu^{-1}(\nu-B\mu)z(Az+\mu),~\lambda=0.
\end{align}
\end{subequations} 
If $\lambda=\mu\neq\nu$, then $f'-\lambda f=AC(\lambda-\nu)z$. This implies $T(r,z)=S(r,z)$ i.e. $z$ is constant, which is a contradiction. Using (\ref{transmaps}), it is enough just to consider the case $\lambda=\mu=\nu$. We may assume that $\lambda=\mu=\nu=0$ by using the following transformation \cite{bountis83},\cite[p.272]{bountis84}  
\begin{equation}
\label{eq:22}
    T=\e^{\lambda t},~
    X(T)=\lambda^{-1}x(t)\e^{-\lambda t},~ Y(T)=\lambda^{-1}y(t)\e^{-\lambda t},~ Z(T)=\lambda^{-1}z(t)\e^{-\lambda t}.
\end{equation}
Let $\dot{}$ be the derivative with respect to $T$, then (\ref{eq:1}) becomes
\begin{subequations}
    \label{eq:23}
    {\begin{align}
    \label{eq:23.1}
    &\dot{X}=X(CY+Z),\\
    \label{eq:23.2}
    &\dot{Y}=Y(AZ+X),\\
    \label{eq:23.3}
    &\dot{Z}=Z(BX+Y).
\end{align}}
\end{subequations}
The transformation (\ref{eq:22}) preserves the types of singularities of $x,y,z$ i.e. $X,Y,Z$ have the same types of poles as $x,y,z$ with the exception at $T=0$.
However, $X,Y,Z$ may not be meromorphic at $T=0$.
Let $F=\dot{X}X^{-1}-\dot{Y}Y^{-1},G=\dot{Y}Y^{-1}-\dot{Z}Z^{-1}$, then (\ref{eq:23}) implies $\dot{F}=\dot{G}=0$ and $F=-CG$. Hence, $F,G\in\mathbb{C}$, and $Y=H_2Z\e^{GT},X=H_1Y\e^{FT}=H_1H_2\e^{(1-C)GT}Z$ for some $H_1,H_2\in\mathbb{C}$. Substituting these into (\ref{eq:23}), we obtain the solution:
\begin{equation}
\label{0-3poleeq:1}
\begin{aligned}
&X=H_1H_2\e^{(1-C)GT}Z,~Y=H_2\e^{Gt}Z,~\dot{Z}=Z^2(BH_1H_2\e^{(1-C)GT}+H_2\e^{GT}),\\&G=Z(A-ABH_1H_2\e^{(1-C)GT}-H_2\e^{GT}).    
\end{aligned}
\end{equation}
As $X,Y,Z$ in (\ref{0-3poleeq:1}) are meromorphic, inverting (\ref{eq:22}) yields $x,y,z$ are also meromorphic.

\subsection{$\mathbf{p_z}$ poles}
\label{0-x-y=0}
We have the resonance conditions $B=-1/C-\gamma,\lambda=\mu$.
Since $N(r,x)=N(r,y)$, Lemma \ref{lem1} shows that $T(r,x)\asymp T(r,y)$. Here, the function $f$ defined in (\ref{0-x-y=0eq:1}) remains entire.

If $\gamma=0$, then $A=1$ and $z$ has no zeros. Here, $f'=(z+\lambda)f$. If $f=0$, then $x=Cy$ and $z'=z\nu$ i.e. $z=L\e^{\nu t}$ where $L\in\mathbb{C}$. In terms of $\xi=1/x$, (\ref{eq:1}) becomes $\xi'+(L\e^{\nu t}+\lambda)\xi+1=0$. If $f\neq0$, then $z=f'/f-\lambda$, and $T(r,z)=m(r,z)=S(r,f)$.
From (\ref{eq:1,3}), we obtain $f=Cz'/z-C\nu$ which implies $ T(r,f)=S(r,z)=S(r,f)$ i.e. $f$ is constant and $f'=0$. Therefore, $z=-\lambda$ and $f=-C\nu$. Thus, $x=K\e^{-C\nu t}y$, where $K\in\mathbb{C}$. Using (\ref{0-x-y=0eq:1}), we obtain the solution $y=C\nu/(K\e^{-C\nu t}-C),x=K\e^{-C\nu t}y,z=-\lambda$. 

If $\gamma\geq1$, then $A\neq1$. Assume that $z\not\equiv0$. We eliminate $y,x'$ in (\ref{eq:1}) to obtain
\begin{equation*}
    \label{abc1x-y>0eq02}
    (1-A)x^2=(A-1)x\left(z-C\nu+Cz'z^{-1}\right)+(\lambda+Az)\left(z'z^{-1}-\nu\right)+\left(z'z^{-1}\right)'.
\end{equation*}
Arguing similarly as in Lemma \ref{lem2}, we conclude that $T(r,x)= O(T(r,z)),T(r,y)= O(T(r,z))$. Expressing $f$, $f'$ and $f''$ in terms of $x$, $y$ and $z$ and eliminating $x$ and $y$ yields $(A-1)A^2dC^3z(a_3z^3+a_2z^2+a_1z+a_0)=0$ where $d=AC-A+1,~a_3=(A-1)d^2f$. Here, $d\neq0$ following Remark \ref{coexistsabc1=0}. It follows that that $a_3=0$. So $f=0$. Using (\ref{0-x-y=0eq:1}), we obtain $x=Ky$ for some $K\in\mathbb{C}$, and $(1-A)z=x-Cy=(K-C)y$. Since $N(r,z)=S(r,y)$, then we must have $A=1,K=C$; a contradiction. Hence, $z=0$ and (\ref{eq:1}) reduces to the 2 dimensional case. Here, $f'=\lambda f$. Using (\ref{eq:1}) and (\ref{0-x-y=0eq:1}), we obtain the solution $x=1/\xi$, where $\xi$ solves $\xi'+(K\e^{\lambda t}+\lambda)\xi+1=0$ and $y=(K\e^{\lambda t}+x)/C$, where $K\in\mathbb{C}$.

\subsection{$\mathbf{p_z},\mathbf{p_x}$ poles}
\label{0-x-y-y-z==0}
The resonance conditions are $\lambda=\mu=\nu, -B-1/C=\gamma,$ $-C-1/A=\alpha$. Let
\begin{equation}
\label{firstintegrals}
{\begin{aligned}
f=-x+Cy-ACz\mbox{\ \ and\ \ } h=x^Ay^{-1}z^{-AC}.    
\end{aligned}}
\end{equation}
In \cite{bountis84}, it was shown that 
\begin{equation}
\label{firstintegrals1}
f=K\e^{\lambda t}\mbox{\ \ and\ \ }h=L\e^{(A-1-AC)\lambda t},
\end{equation}
where $K,L\in\mathbb{C}$. When $\lambda=0$, both $f$ and $h$ are first integrals.

If $\alpha=0$ and $\gamma\geq0$, then $B=1$ and $C=-1/A,A=\gamma+1$. Using (\ref{firstintegrals}),(\ref{firstintegrals1}) and (\ref{eq:1}), we obtain the solution $x'=x(x+K\e^{\lambda t}+\lambda),z=(K\e^{\lambda t}+x)(CL\e^{-A\lambda t}x^{A}+1)^{-1},y=L\e^{-A\lambda t}x^{A}z$. By using (\ref{swapmaps3}), it is unnecessary to discuss the case where $\gamma=0,\alpha\geq0$.

If $\alpha\gamma>0$, then we use $f$ to eliminate $y$ in (\ref{eq:1,1}) and (\ref{eq:1,3}). Thus, we obtain
\begin{equation}
\label{eq:25''}
{\begin{aligned}     
x'=x\left(x+\frac{\alpha(\gamma+1)}{\alpha\gamma-1}z+f+\lambda\right)
\mbox{\ \ and\ \ }
z'=z\left(\frac{(\gamma+1)z}{1-\alpha\gamma}-\gamma x-\frac{(\gamma+1)f}{\alpha+1}+\lambda\right).
\end{aligned}}
\end{equation}
Differentiating (\ref{eq:25''}), we obtain the following second-order ODEs:
\begin{subequations}
    \label{eq:25'}
    {\begin{align}
    \label{eq:25'.1}
      x''=&\frac{\alpha-1}{\alpha}\frac{(x')^2}{x}+\frac{\alpha\gamma-1}{\alpha}x^3+c_1xx'+c_2x'+c_3x^2+c_4x,\\
    \label{eq:25'.2}  
      z''=&\frac{\gamma-1}{\gamma}\frac{(z')^2}{z}+\frac{(\gamma+1)^2}{\gamma(\alpha\gamma-1)}z^3+c_5zz'+c_6z'+c_7z^2+c_8z,
    \end{align}}
\end{subequations}
where the coefficients $c_i$ depend only on $f,\lambda,\alpha,\gamma$. 

Now we reduce to the case $\lambda=0$ by using (\ref{eq:22}).
Meromorphic solutions of (\ref{eq:1}) correspond to solutions of (\ref{eq:23}) which are meromorphic on the covering surface of $\mathbb{C}-\{0\}$. Using (\ref{firstintegrals1}), we obtain $X-CY+ACZ=K$ and $X^{\gamma+1}Y^{(\alpha-1)(\gamma-1)}Z^{\alpha+1}=L^{\alpha+1}$. Using these first integrals, we eliminate $X,Z$ in (\ref{eq:23.2}) to obtain $P(Y,\dot{Y})=0$, where $P\in\mathbb{C}[Y,\dot{Y}]$. It was shown in \cite{painleve1902} that the solutions of $P(Y,\dot{Y})=0$ will at most have algebraic branch points. So $Y$ is meromorphic on the covering surface of $\mathbb{C}-\{0\}$ with a possible algebraic branch at $T=0$. So are $X,Z$. Hence, there exists $k\in\mathbb{N}$ such that $X(T^k),Y(T^k),Z(T^k)$ are meromorphic. The task is to find meromorphic functions $\tilde{x}=x\circ t^k,\tilde{y}=y\circ t^k,\tilde{z}=z\circ t^k$ where $x,y,z$ are solutions of (\ref{eq:1}) when $\lambda=0$. Here, $f$ is constant and $h^{\alpha+1}=\Tilde{x}^{\gamma+1}\Tilde{y}^{(\alpha-1)(\gamma-1)}\Tilde{z}^{\alpha+1}=L^{\alpha+1}$. In (\ref{eq:25'.1}), $x,x',x''$ are replaced by $\Tilde{x},\Tilde{x}'k^{-1}t^{1-k},\{\Tilde{x}''-(k-1)\Tilde{x}'t^{-1}\}k^{-2}t^{2-2k}$ respectively. The coefficient of $x^3$ in (\ref{eq:25'.1}) vanish iff $\alpha=\gamma=1$ which is equivalent to $ABC=1$. So this coefficient does not vanish and since $m(r,1/t)=O(1)$, we can apply Theorem \ref{th:4} to (\ref{eq:25'}) to obtain $m(r,\tilde{x})=S(r,\tilde{x})$. Arguing similarly, we obtain $m(r,\tilde{z})=S(r,\tilde{z})$. If $\tilde{x}$ is rational, then (\ref{eq:25''}) shows that $\tilde{z}$ is also rational. Using $f$ in (\ref{firstintegrals}) yields that $\Tilde{y}$ is rational as well. Similarly, when $\tilde{z}$ is rational. If $\Tilde{y}$ is rational but $\tilde{x},\tilde{z}$ are transcendental, then $T(r,\Tilde{y})=S(r,\Tilde{z})$ which implies $N(r,\Tilde{y})=S(r,\Tilde{z})$ and thus $T(r,\Tilde{z})\leq N(r,\Tilde{y})+m(r,\Tilde{z})=S(r,\Tilde{z})$ which is impossible. If $\Tilde{x},\Tilde{y},\Tilde{z}$ are rational, then $T(r,\Tilde{x})\asymp T(r,\Tilde{y})\asymp T(r,\Tilde{z})$. Using $C\Tilde{y}=f+\Tilde{x}+AC\Tilde{z}$ implies $m(r,\Tilde{y})=S(r,\Tilde{y})$. Moreover,
\begin{equation}
\label{0-x-y-y-z>>0eq:1}
\begin{aligned}
\Tilde{x}=\frac{\acute{x}_0t^{m_x}\prod_j(t-\breve{t}_{j})^\alpha}{\prod_{j}(t-\hat{t}_{j})},~\Tilde{y}=\frac{\acute{y}_0t^{m_y}}{\prod_j(t-\hat{t}_{j})\prod_j(t-\breve{t}_{j})},~\Tilde{z}=\frac{\acute{z}_0t^{m_z}\prod_j(t-\hat{t}_{j})^\gamma}{\prod_j(t-\breve{t}_{j})},    
\end{aligned}
\end{equation}
where $m_x,m_y,m_z\in\mathbb{Z}$. Since $m(r,u)=S(r,u)$ for $u\in\{\Tilde{x},\Tilde{y},\Tilde{z}\}$, then $m_x+m\alpha-p\leq0,m_y-m-p\leq0,m_z-m+p\gamma\leq0$. This implies $Q+(m+p)(\alpha+\gamma-2)\leq0$, where $Q:=(\gamma+1)m_x+(\alpha-1)(\gamma-1)m_y+(\alpha+1)m_z$, and $m,p$ are defined as in Definition \ref{deflmnp}.
Substituting (\ref{0-x-y-y-z>>0eq:1}) into $\Tilde{x}^{\gamma+1}\Tilde{y}^{(\alpha-1)(\gamma-1)}\Tilde{z}^{\alpha+1}=L^{\alpha+1}$, the factor $t$ cancels out i.e. $Q=0$. Hence, $(m+p)(\alpha+\gamma-2)\leq0$, which implies $\alpha=\gamma=1$. So $\Tilde{x},\Tilde{y},\Tilde{z}$ are transcendental and $\ln r=S(r,\tilde{y})$. There could possibly be poles of $\Tilde{x},\Tilde{y},\Tilde{z}$ at $0$. Following the proof of Lemma \ref{lem1} and using $m(r,1/t)=O(1)$, we can show that $m(r,u)+S(r,u)\leq m(r,\Tilde{y})+S(r,\Tilde{y})$ for $u\in\{\Tilde{x},\Tilde{z}\}$. Since $N(r,\Tilde{u})\leq N(r,\Tilde{y})+O(\ln r)=N(r,\Tilde{y})+S(r,\Tilde{y})$, then $T(r,\Tilde{u})=O(T(r,\Tilde{y}))$ for $u\in\{\Tilde{x},\Tilde{z}\}$. Again, from $C\Tilde{y}=f+\Tilde{x}+AC\Tilde{z}$, we have $m(r,u)=S(r,\Tilde{y})$ for $u\in\{\Tilde{x},\Tilde{y},\Tilde{z}\}$. Let $g=\Tilde{x}\Tilde{z}$ which has poles at most at $t=0$ and thus $T(r,g)=S(r,\Tilde{y})$. If either $\alpha>1$ or $\gamma>1$, then $g$ has zeros at those singularities and Lemma \ref{lem3} shows that $g=0$ or $N_3(r)=S(r,\Tilde{y})$ or $N_1(r)=S(r,\Tilde{y})$. However, these imply either $x=0$ or $z=0$ or a reduction of the discussion to section \ref{0-x-y=0}. So $\alpha=\gamma=1$, which is again false.

\subsection{$\mathbf{p_z},\mathbf{p_x},\mathbf{p_y}$ poles}
\label{0-2pole} 
This case is integrable \cite{bountis83,bountis84,lvsystem90,maier2013integration,goriely} and the general solution is meromorphic. Using (\ref{eq:22}), we may assume that $\lambda=0$. It was shown in \cite{bountis84,lvsystem90} that $D=0$ implies $\frac{1}{\alpha+1}+\frac{1}{\beta+1}+\frac{1}{\gamma+1}=1$. Using (\ref{conjugatemap}), we can assume that $\alpha\leq\gamma\leq\beta$ leading to $(\alpha,\beta,\gamma)=(1,5,2),(1,3,3),(2,2,2)$. 
Here, $x,z,y$ can be found by solving (\ref{eq:25''}),(\ref{eq:25'}) and (\ref{eq:1}). If $(\alpha,\gamma)=(1,2)$, then (\ref{eq:25'.1}) is equivalent to $v''=v^3-vv'-12v$ \cite[p.334, $X$ form]{ince56}. If $(\alpha,\gamma)=(1,3)$, then (\ref{eq:25'.1}) can be transformed to $v''=2v^3+f^2v/2$ which can be solved in terms of Jacobi elliptic functions \cite{armitage2006elliptic}. If $(\alpha,\gamma)=(2,2)$, then (\ref{eq:25'.2}) is $z''=(z')^2/(2z)+3z^3/2+2fz^2+f^2z/2$ \cite[p.339, $XXX$ form]{ince56}. In all cases, $x,y,z$ are meromorphic, and so by inverting (\ref{eq:22}), $x,y,z$ remain meromorphic when $\lambda\neq0$.

\section{Case $D\neq0$}
\label{abc+1not0}

\subsection{$\mathbf{p_0}$ poles}
\label{neq03*}
Let $f=x_0y-y_0x$ and $g=y_0z-z_0y$, which are entire. Lemmas \ref{lem1} and \ref{lem2} show that $T(r,\Tilde{v})=S(r,y),$ for $\Tilde{v}\in\{f,g\}$. Using $f,g$, we express $x,z$ in terms of $y$ and substitute into (\ref{eq:1}). We then eliminate $x,z,y'$ giving $a_1y+a_2=0$ and $a_3y+a_4=0$ where $a_i\in\mathbb{C}[f,g,f',g']$. So $a_i=0$, which gives
\begin{subequations}
\label{eq:28}
\begin{align}
\label{eq:28.1}
&(Az_0+1)g+(B-1)z_0f=z_0y_0(\nu-\mu),~(A-1)x_0g-(x_0+1)f=y_0x_0(\lambda-\mu),\\
\label{eq:28.2}
&y_0f'-f(g+y_0\lambda)=0,~y_0g'+g(Bf-y_0\nu)=0.
\end{align}
\end{subequations}
In (\ref{eq:28.1}), since $(Az_0+1)(x_0+1)+(A-1)(B-1)z_0x_0=Dx_0y_0z_0\neq0$, then $f,g\in\mathbb{C}$. So (\ref{eq:28.2}) implies $f(g+y_0\lambda)=g(Bf-y_0\nu)=0$. If $f=g=0$, then (\ref{eq:28.1}) and (\ref{eq:1,2}) yield
\begin{equation}
\label{neq03*eq:1}
\lambda=\mu=\nu,\quad y'= -y(y_0^{-1}y-\lambda),\quad x_0^{-1}x=y_0^{-1}y=z_0^{-1}z.
\end{equation}
Arguing similarly for other cases where $f(g+y_0\lambda)=g(Bf-y_0\nu)=0$, we obtain 
\begin{subequations}
\label{neq03*eq:2}
\begin{align}
\label{neq03*eq:2.1}
& \nu=0,x_0\lambda=Cy_0\mu, & &y'=-y(y_0^{-1}y+x_0^{-1}\mu), & &x_0^{-1}x=y_0^{-1}y=z_0^{-1}z-\mu x_0^{-1},\\
\label{neq03*eq:2.2}
& \lambda=0,y_0\mu=Az_0\nu, & &y'=-y_0^{-1}y(y+\nu), & &z_0^{-1}z=y_0^{-1}y=x_0^{-1}x-\nu y_0^{-1},\\
\label{neq03*eq:2.3}
& \mu=0,z_0\nu=Bx_0\lambda, & &y'=-y(y_0^{-1}y-z_0^{-1}\lambda), & &z_0^{-1}z=x_0^{-1}x=y_0^{-1}y-\lambda z_0^{-1}.
\end{align}
\end{subequations}

\subsection{$\mathbf{p_z}$ poles}
\label{neq0x-y}
In this case, $x,y$ have no zeros. If $\lambda=\mu$, then $z$ has zeros at every point $t_0\in S_3$. Thus, Lemma \ref{lem2} implies that $T(r,z)=S(r,y)$. Lemma \ref{lem3} shows that $N_3(r)=S(r,y)$ or $z=0$, reducing the problem to sections \ref{holosolu} and \ref{0-x-y=0} respectively. If $\lambda\neq\mu$, then $B=-1/C$ and $z$ takes value $\hat{z}_0=(\mu-\lambda)/(1-A)$ at every point $t_0\in S_3$. Lemma \ref{lem3} again yields $z=\hat{z}_0$. Then (\ref{eq:1,3}) implies $x=C(y+\nu)$. Substituting these $x$ and $z$ into (\ref{eq:1}), we obtain the following solution: $y'=y(A\hat{z}_0+Cy+C\nu+\mu),~x=C(y+\nu),~z=\hat{z}_0,\nu(\mu-A\lambda)=0$.

\subsection{$\mathbf{p_z}$ and $\mathbf{p_0}$ poles}
\label{neq0x-y3}
The resonance condition is $B=-1/C-\gamma,\gamma\in\mathbb{Z}_0^+$. Here, $x,y$ have no zeros. Let
\begin{subequations}
\label{eq:30}
\begin{align}
\label{eq:30.1}
&f=x'x^{-1}-y'y^{-1}=-x+Cy+(1-A)z+\lambda-\mu,\\
\label{eq:30.2}
&g=z'z_0+z^2-2z_1z=z_0z(Bx+y+\nu)+z^2-2z_1z,
\end{align}    
\end{subequations}
which are both entire.
About $t_0\in S_0$, we obtain $f=f_0+O(t-t_0)$ where $f_0=-x_1+Cy_1+(1-A)z_1+\lambda-\mu=2(1-A)z_1-C\nu+\lambda-\mu$.
Imposing $N_0(r)=S(r,y)$ in section \ref{neq0x-y} will not alter its proof. Thus, Lemma \ref{lem3} implies $f=f_0$.

If $\lambda\neq\mu$, then (\ref{eq:13.1}) and (\ref{eq:13.2}) imply that $\gamma=0,A\neq1$. For the $\mathbf{p_z}$ poles, we express $\hat{x}_k,\hat{y}_k,\hat{z}_k,$ for $k\geq0$ in terms of $\hat{x}_1$. Substituting (\ref{3x-y3eq:3}) into (\ref{eq:30.1}), we obtain an expression from the constant term, which can be written as $2(A-1)\hat{x}_1=f_0(1-A)-A\lambda+\mu$.
Thus, $\hat{x}_k,\hat{y}_k,\hat{z}_k,k\geq0$ are known. So $x,y,z$, and $g$ have unique expansions about $t_0\in S_3$. Then Lemma \ref{lem3} shows that $g\in\mathbb{C}$, and $z\in W$. Solutions of (\ref{eq:30.2}) have zeros unless $g=-z_1^2$ or $0$. If $g=-z_1^2$, then $z$ is a rational function, and Lemma \ref{anslem}$(d)$ shows that $\lambda=\mu=\nu=0$. Thus, $g=0$, and (\ref{eq:30.2}) which yields 
\[
z=2z_1\e^{2z_1t/z_0}(\e^{2z_1t/z_0}-\e_0)^{-1},\quad \e_0\in\mathbb{C}.
\]
So, $z,x,y$ are simply-periodic in Class $W$. Using (\ref{eq:30.1}), we obtain $x=L\e^{ft}y$ where $L\in\mathbb{C}$, and $ y=\{f-\lambda+\mu-(1-A)z\}/(C-L\e^{ft})=\{(2z_1-z)(1-A)-C\nu\}/(C-L\e^{ft})$.
This shows that $y$ has zeros unless $\nu=0$ or $f=\lambda-\mu$. Therefore, 
\begin{subequations}
\label{3x-y3eq:06}
{\begin{align}
\label{3x-y3eq:6}
    &y={-2z_1(1-A)\e_0}\{(C-L\e^{ft})(\e^{2z_1t/z_0}-\e_0)\}^{-1},&\text{if}~\nu=0,\\
\label{3x-y3eq:7}
    &y={-2z_1(1-A)\e^{2z_1t/z_0}}\{(C-L\e^{ft})(\e^{2z_1t/z_0}-\e_0)\}^{-1},&\text{if}~f=\lambda-\mu.    
\end{align}}    
\end{subequations}
We now show that $f=\pm2z_1/z_0$. If (\ref{eq:7}) has no roots in $\mathbb{N}$, then $x,y,z$ have unique expansions about any $t_0\in S_0$. Assuming the contrary, let $N\in\mathbb{N}$ solve (\ref{eq:7}), and we express $x_N,y_N$ in terms of $z_N$. Substituting (\ref{3x-y3eq:1}) into (\ref{eq:30.2}) yields an expression obtained from the coefficient of $(t-t_0)^{N-2}$, giving $z_0z_N=P$ where $P$ is a polynomial in $x_i,y_i,z_i,$ for $i<N$ and each $x_i,y_i,z_i$ are known. So $z_N=P/z_0$ and $z,x,y$ have a unique expansion about any points in $S_0$. Then $l=p=1$, and (\ref{3x-y3eq:06}) shows that $f=\pm2z_1/z_0$. 
Consider $x,y,z$ as rational functions in $\tau$, then (\ref{eq:1}) implies
\begin{subequations}
    \label{3x-yy-z3eq:7}
    {\begin{align}
    \label{3x-yy-z3eq:7.1}
    \delta\tau\dv{x}{\tau}=x(Cy+z+\lambda),\\
    \label{3x-yy-z3eq:7.2}
    \delta\tau\dv{y}{\tau}=y(Az+x+\mu),\\
    \label{3x-yy-z3eq:7.3}
    \delta\tau\dv{z}{\tau}=z(Bx+y+\nu).
    \end{align}}
\end{subequations} 
If $f=2z_1/z_0,\nu=0$, then using (\ref{3x-y3eq:6}) and $l=1$ we have $\delta=f={2z_1}{z_0^{-1}}$ and 
\begin{equation}
\label{3x-y3eq:20}
y=\frac{-2z_1(1-A)e_0}{(C-L\e^{\delta t})(\e^{\delta t}-e_0)},x=\frac{-2z_1(1-A)e_0L\e^{\delta t}}{(C-L\e^{\delta t})(\e^{\delta t}-e_0)},z=\frac{2z_1\e^{\delta t}}{\e^{\delta t}-e_0}
\end{equation}
where $x$ takes mid-form. By Lemma \ref{anslem}$(c)$, we obtain $V_x=0$ which yields $A=2/(2-C)$. Substituting (\ref{3x-y3eq:20}) into (\ref{3x-yy-z3eq:7.1}),(\ref{3x-yy-z3eq:7.2}), we obtain $\mu=0$. In this case, we obtain the solution: $\mu=\nu=0,\delta=\lambda(1-C)^{-1},A=2(2-C)^{-1}$ and
\begin{equation}
\label{3x-y3eq:22}
\begin{aligned}
x=\frac{ACz(Az+2\delta)}{2\lambda-ACz},y=\frac{\lambda(Az+2\delta)^2}{2\delta(2\lambda-ACz)},z=\frac{-2\delta\e^{\delta t}}{A\lambda(\e^{\delta t}-e_0)}.    
\end{aligned}
\end{equation}
Arguing similarly in the case $f=-2z_1/z_0,\nu=0$ yields the solution: $\lambda=\nu=0$ and
\begin{equation}
\label{3x-y3eq:25}
\begin{aligned}
x=\frac{C\mu(2\delta-z)^2}{2\delta(Cz+2\mu)},y=\frac{z(2\delta-z)}{2(Cz+2\mu)},z=\frac{2\mu}{1-\e^{\delta t}e_0}, \delta=\frac{-\mu}{C+1},A=\frac{2+C}{2}.
\end{aligned}
\end{equation} 
Similarly when $f=\lambda-\mu=\pm2z_1/z_0$, we obtain the following solutions:
\begin{subequations}
\label{3x-y3eq:28,31}
\begin{align}
\label{3x-y3eq:28}
&x=\frac{C (C+1) z^2}{4 \mu -2 C z},y=\frac{z (2 \mu +z)}{4 \mu -2 C z},z=\frac{-2\mu\e^{\mu t}}{\e^{\mu t}-e_0},\mu=\nu=\frac{\lambda}{2},A=\frac{C+2}{2},\\
\label{3x-y3eq:31}
&x=\frac{ACz(Az+2\lambda)}{2ACz+4\lambda},y=\frac{(C-1)A^2z^2}{2ACz+4\lambda},z=\frac{2\lambda\e^{\lambda t}}{A(e_0-\e^{\lambda t})},\lambda=\nu=\frac{\mu}{2},A=\frac{2}{2-C}.
\end{align}    
\end{subequations}

If $\lambda=\mu$, then $\gamma\geq1$.
If $\gamma\geq 2$, it follows that $g$ has zeros on $S_3$ and Lemma \ref{lem3} shows that $g=0$. If $g=0$, then $z'z_0+z^2-2z_1z=0$ and $z$ has no zeros unless $z=0$. Therefore, $\gamma=1$. Eliminating $y$ and $z$ from expressions for $f$, $g$ and $g'$ in terms of $x$, $y$ and $z$ shows that $a_1z^2+a_0z=0$, where $a_1=2C(\mu -\nu )(A C^2 \nu +(A-1)^2 (C+1) \mu),a_0=c_1g'+c_2g$. Here, $c_1\in\mathbb{C}[A,C],c_2\in\mathbb{C}[A,C,\mu,\nu]$. It follows from Theorem \ref{co:7} that $a_1=a_0=0$. If $A=0,C=-1,\mu\neq\nu$, then $a_0=0$ implies $g'=2g\nu$. Similarly, on eliminating variables between $f$, $f'$ and $g$, we obtain $g=\mu(\mu-\nu)$. Using (\ref{eq:30}), we express $x,y$ in terms of $z$, and substitute these into (\ref{eq:1,2}) yields $\mu\nu(\mu-\nu)=0$. Here, $\mu=0$ implies $g=0$. We discuss the case $\mu=\nu$ separately. When $\nu=0$, we obtain the solution: 
\begin{equation}
\label{l=mx-y3eq:1}
2z'=z^2-\mu^2, x=(z-\mu)^2(2z)^{-1}, y=z'z^{-1},\lambda=\mu,\nu=A=f=0,C=-1,g=\mu^2.
\end{equation}
If $A\neq0,\mu\neq\nu$, then $a_1=0$ implies $\nu=-(A-1)^2(C+1)A^{-1}C^{-2}\mu$.
Arguing similarly yields the following solution: $f=0,g=-\mu^2$ and,
\begin{equation}
\label{l=mx-y3eq:4}
2z'=(z+\mu)^2,x=-(\mu +z)^2(2z)^{-1},y=-(\mu +z)^2(2Cz)^{-1},A=1,\lambda=\mu,\nu=0.
\end{equation}
If $\lambda=\mu=\nu$, then $a_0=0$ shows that $g'=2g\mu$. Expressing $f$ and $f'$ in terms of $x$, $y$ and $z$ and eliminating $x$, we obtain $(A-1)z\{2C^2y+(1-A+2C-AC)z\}=0$.
So $A=1$; otherwise, we obtain $T(r,y)=T(r,z)$, which implies $N_3(r)=S(r,y)$, reducing the problem to section \ref{neq03*}. When $A=1$, we find that $f=f_0=0$. Using (\ref{eq:30}), we express $x,y$ in terms of $z$ and substituting into (\ref{eq:1}) shows that we have the solution: 
\begin{equation}
\label{l=mx-y3eq:6}
2z'=z(z+2\mu)-g,g'=2g\mu,x=(g-z^2)(2z)^{-1},y=C^{-1}x,\lambda=\mu=\nu,A=1.
\end{equation}
Explicit forms of (\ref{l=mx-yy-zz-xeq:2}) can be obtained by using (\ref{eq:22}).

From the above arguments, when $\gamma\neq1$, we conclude that $g\in\mathbb{C}$ which implies $z\in W$. Imposing $N_i(r)=S(r,z)$ for $i=1,2$ will not affect the proof. This is because when $z\in W$, then $N_i(r)=S(r,z)$ implies $N_i(r)=0,$ for $i\in\{0,1,2\}$ (see \cite{bank1984value} and \cite[Ch.1]{hayman1964meromorphic} for proof).

\subsection{$\mathbf{p_z},\mathbf{p_x}$ poles}
\label{neq0x-yy-z}
We find an entire function
\begin{equation}
\label{neq0x-yy-z3eq:1}
f=-x+Cy-ACz.
\end{equation}
Eliminating $x$ from $f$ and $f'$ results in a polynomial equation $R_{12}=0$, where $R_{12}$ is a polynomial in $z$ with coefficients polynomial in $y$, $f$ and $f'$. $R_{13}$ is defined similarly starting from $f$ and $f''$.
 Consider a polynomial division of $R_{13}$ by $R_{12}$ in $z$, i.e., $R_{13}=R_{12}Q+\mathfrak{r}$, where $Q$ is a polynomial in $z$ with coefficients containing $y$, and $\mathfrak{r}=b_1y+b_2z+b_3$. Here, $b_i\in\mathbb{C}[\lambda,\mu,\nu,C,f,f',f'']$. Since $R_{13}=R_{12}=0$, then $\mathfrak{r}=0$, which shows that $T(r,y)=T(r,z)+S(r,y)$. This implies $N_1(r)=S(r,y)$, unless $b_i=0$. When $\lambda$, $\mu$ and $\nu$ are pairwise distinct, the resonance condition is $A=B=-1/C$. The relation $b_1=0$ implies $C(C+1)f'=b_4f+b_5$, where $b_4,b_5\in\mathbb{C}[\lambda,\mu,\nu,C]$. Thus, we express $f',f''$ in terms of $f$ and substitute these into $b_3=0$ yielding $f^2+c_1f+c_0=0$ where $c_i$ are constants. This shows that $f$ is constant and $f'=f''=0$. Then, we can determine the value of $f$ and the relations between $\lambda,\mu,\nu$ from $b_i=0$.
Using $f,f'$, we express $y,z$ in terms of $x$. 
Moreover, (\ref{eq:1,1}) yields $x'=x(x+f+\lambda)$. Hence, we find the explicit forms of $x,y,z$. Substituting these $x,y,z$ into (\ref{eq:1}) allows us to verify whether the equations hold. Following this procedure, we find exactly one solution:
\begin{equation}
\label{l=my-zz-zeq:5}
x'=x(x+\nu),z=x(x-\breve{x}_0)^{-1},y=A(z-x),\mu=2\nu=2\lambda,B=A=-1/C.
\end{equation}

If $\lambda=\mu\neq\nu$, the resonance conditions are $A=-1/C,B=-1/C-\gamma$. Let, $g=z(x-\breve{x}_0)$, which is an entire function.
If $\gamma\geq2$, then $g=0$ by Lemma \ref{lem3}. Therefore, $\gamma=1$. Arguing similarly, we find that $\mathfrak{r}= b_2z+b_3$ where $b_i\in\mathbb{C}[C,\lambda,\nu,f,f']$. Thus, we obtain $b_i=0$. From $b_2=0$, we find that $(2C+1)f'=b_4f+b_5$ where $b_i\in\mathbb{C}[\lambda,\nu,C]$. Here, $C\neq-1/2$, since otherwise $D=0$. Then, $b_3=0$ yields $(2C+1)f^2(C+1)+c_1f+c_0=0$, where $c_i$ are constants. Hence, we conclude that $f$ is constant. From there, we argue similarly as in the case $\lambda\neq\mu\neq\nu$ and find the following solutions:
\begin{subequations}
\label{l=my-zz-zsubeq:5}
\begin{align}
\label{l=my-zz-zsubeq:5.1}
&x'=x\left(x+\frac{(C+1)\lambda}{2C+1}\right),y=\frac{-A^2(2 C+1) x^2}{(A-2)x-\lambda},z=\frac{C\lambda ^2(2 C+1)^{-1}}{(A-2)x-\lambda},\lambda=\mu,\nu=0,\\
\label{l=my-zz-zsubeq:5.2}
&x'=x\left(x+\frac{\nu}{B}\right),y=\frac{(Bx+\nu)^2}{C(B^2C x-\nu)},z=\frac{-C \nu ^2}{B^2 \left(B^2Cx- \nu \right)},\lambda=\mu=\frac{\nu }{B^2}.
\end{align}    
\end{subequations}

If $\lambda=\mu=\nu$, then we obtain $\gamma,\alpha\geq1$. If either $\gamma\geq2$ or $\alpha\geq2$, then Lemma \ref{lem3} shows that $xz=0$. Hence, $\alpha=\gamma=1,B=-(C+1)/C,$ $A=-1/(C+1)$. We may assume that $\lambda=0$ by using (\ref{eq:22}). There is a first integral \cite{lvsystem90}
\begin{equation}
\label{l=m=nx-yy-zeq:1}
h=x^2+C^2y^2+C^2(C+1)^{-2}z^2-2Cxy+2C(C+1)^{-1}zx+2C^2(C+1)^{-1}yz,
\end{equation}
which can be written as $2f'+(C+1)C^{-1}h-(C+1)C^{-1}f^2=0$. Hence, we can solve for $f$. We also have $f=-x+Cy-ACz,~ f'=2xz $, and, with (\ref{eq:1,1}), we obtain $(x+f/2)'-(x+f/2)^2+h/4=0$. Thus, $x,y,z$ can be found. Here, $x,y,z$ are meromorphic and belong to the case \ref{l=m=nx-yy-zz-x} (the explicit form has been shown in \cite{lvsystem90}). When $f=0$, then $f'=2xz=0$ and we have the 2D L-V case (section \ref{0-x-y=0}). The only solutions having just  $\mathbf{p_z},\mathbf{p_x}$ poles can be obtained as follows. We set $h=0,f\not\equiv0$ which implies $2f'-(C+1)C^{-1}f^2=0$ i.e. $f=-(C+1)\{2C(t-t_0)\}^{-1}$. Thus, we may find the closed forms of $x,y,z$. Setting $t_0=0$ and applying the inverse map of (\ref{eq:22}) yield 
\begin{equation}
\label{l=m=nx-yy-zeq:5}
{ 
\begin{aligned}
&x=\frac{-\lambda(M\e^{\lambda t}+C)}{(C+1)(M\e^{\lambda t}-1)},~z=\frac{\lambda C(M\e^{\lambda t}-1)}{M\e^{\lambda t}+C},~y=\frac{-(C+1)\lambda M^2\e^{2\lambda t}}{C(M\e^{\lambda t}-1)(M\e^{\lambda t}+C)},
\end{aligned}
}
\end{equation}
where $M\in\mathbb{C}$. From now on, we may dismiss the case when $\alpha=\gamma=1$.

\subsection{$\mathbf{p_z},\mathbf{p_x},\mathbf{p_0}$ poles and $\lambda\neq\mu,\mu\neq\nu$}
\label{neq0x-yy-z3}
The resonance condition is $A=B=-1/C$, so $C\neq-1$. The function $f$ defined in (\ref{neq0x-yy-z3eq:1}) satisfies $N(r,f)=N_0(r)$ and its Laurent series about any $t_0\in S_0$ is $f=f_0/(t-t_0)+f_1+\cdots$, where $f_i=-x_i+Cy_i-ACz_i$. Here, $x_i,y_i,z_i,i\leq1$ are known. Let
\begin{equation}
\label{neq0x-yy-z3eq:2}
g=f_0f'+f^2-2f_1f,
\end{equation}
which is entire, so $T(r,g)=S(r,y)$. If $2$ is a root of (\ref{eq:7}), then (\ref{eq:7}) yields $C^2+3C+1=0$ (vice versa).
If $C^2+3C+1\neq0$, then $x_2,y_2,z_2$ in (\ref{3x-y3eq:1}) are known. About $t_0\in S_0$, we have $g=3f_0f_2-f_1^2+O(t-t_0)$. So Lemma \ref{lem3} implies $g=3f_0f_2-f_1^2$; otherwise, $N_0(r)=S(r,y)$, which reduces the discussion to section \ref{neq0x-yy-z}. 
We first show that $x,y,z$ have a unique expansion about the $\mathbf{p_0}$ poles for $A=-1/C,B=-1/C-\gamma,\gamma\in\mathbb{Z}_0^+$. If (\ref{eq:7}) has no integer roots, we are done. If $N\in\mathbb{N}$ is a root of (\ref{eq:7}), then substituting (\ref{3x-y3eq:1}) into (\ref{neq0x-yy-z3eq:2}) yields the coefficient $t^{N-2}$, where $(x_0+N)x_N=P$, and $P$ can be expressed in terms of $x_i,y_i,z_i$ for $i<N$. If $N=-x_0$, then (\ref{eq:7}) shows that $C(\gamma+1)=0$ which is false. So $N\neq-x_0$ and $x_N$ is known. Since $x,y,z$ have no zeros, Lemma \ref{anslem}$(a),(b)$ shows that $x,y,z$ are non-elliptic in Class W and $l=1$.
About any $t_0\in S_1$, we express $\Breve{x}_i,\Breve{y}_i,\Breve{z}_i$ for $i\geq1$ in terms of $\Breve{z}_1$. We find that $g=\breve{g}_{0}+2\breve{g}_{1}(C\gamma+C+1)^{-1}(t-t_0)+\cdots$, where $\breve{g}_{0}=4\breve{z}_1^2+\Breve{a}_{01}\Breve{z}_1+\Breve{a}_{00},\breve{g}_{1}=\Breve{a}_{21}\breve{z}_1^2+\Breve{a}_{11}\Breve{z}_1+\Breve{a}_{10}$.
Here, $\breve{a}_{ij}$ are expressed in terms of $C,\gamma,\lambda,\mu,\nu$ only. Since $g$ is constant, then
\begin{equation}
    \label{l=mx-yy-z3eq:5}
    {\begin{aligned}
    4\breve{z}_1^2+\Breve{a}_{01}\Breve{z}_1+\Breve{a}_{00}=3f_0f_2-f_1^2,\quad \breve{g}_{1}=0
    \end{aligned}}
\end{equation}
which shows that $\Breve{z}_1$ has at most two choices or $m\leq2$. Arguing similarly yields $p\leq2$. In Lemma \ref{anslem}$(c),(d)$, the relations $V_x=0,V_y=0,V_z=0$ imply $p(C+1)+1=0,C^2+(C+1)(p+Cm+1)=0,C^2+C(C+1)m=0$. None of these two equations hold simultaneously for $m,p\leq2$. This shows that $x,y,z$ are simply-periodic and at most one of $x,y,z$ can take mid-form.
As $g=f_0f'+f^2-2f_1f$, then
\begin{equation}
    \label{3x-yy-z3eq:5}
    f=f_1-\sqrt{f_1^2+g}+2\sqrt{f_1^2+g}\e^{2f_0^{-1}\sqrt{f_1^2+g} t}\left(\e^{2f_0^{-1}\sqrt{f_1^2+g} t}-\e_0\right)^{-1},
\end{equation}
which gives $\delta=\pm2f_0^{-1}\sqrt{f_1^2+g}$ ($x,y,z$ are rational functions in $\tau=\e^{\delta t}$), and so
\begin{equation}
    \label{3x-yy-z3eq:6}
    \delta^2f_0^2=4(f_1^2+g).
\end{equation}
Lemma \ref{lemmamaxmin} below shows that $x,y,z$ cannot all take max-form or all take min-form.
So we only need to discuss the cases $x,y$ and $x,z$ both take min-form or max-form. Substituting (\ref{anslemeq:1}) into (\ref{3x-yy-z3eq:7}), the lowest degree terms from the expansion in $1/\tau$ yield
\begin{subequations}
    \label{3x-yy-z3eq:8}
    {\begin{align}
    &\lambda=\delta(m_x-M_x)-C\delta V_y\omega_{\infty,y}-\delta V_z\omega_{\infty,z},\\
    &\mu=\delta(m_y-M_y)+\delta V_z\omega_{\infty,z}C^{-1}-\delta V_x\omega_{\infty,x},\\
    &\nu=\delta(m_z-M_z)+\delta V_x\omega_{\infty,x}C^{-1}+\delta V_y\omega_{\infty,y},  
    \end{align}}
\end{subequations}
where $\omega_{\infty,u}=\max\{0,m_u-M_u+1\}$ for $u=x,y,z$. For each $m,p\in\{1,2\}$, we choose $m_x,m_y,m_z$ such that either $x,y$ or $x,z$ both take min-form or max-form. Substituting (\ref{3x-yy-z3eq:8}) into (\ref{3x-yy-z3eq:6}) yields $P_1(C)=0$ where $P_1\in\mathbb{C}[C]$. Using (\ref{l=mx-yy-z3eq:5}), we obtain $R_{\Breve{z}_1}(\breve{g}_{0}-3f_0f_2-f_1^2,\breve{g}_{1})=0$, leading to another relation $P_2(C)=0$ where $P_2\in\mathbb{C}[C]$. We cancel out the factors $C,(C+1)$ from $P_1,P_2$ which yields $\Tilde{P}_1(C)=\Tilde{P}_2(C)=0$. This checks which $C(C+1)\neq0$ allows $\breve{z}_1$ to satisfy (\ref{l=mx-yy-z3eq:5}). Here $C=-1$ implies $D=0$. Then we compute $R_{C}(\Tilde{P}_1,\Tilde{P}_2)$ to check whether it is $0$, which verifies if $\Tilde{P}_1,\Tilde{P}_2$ have common roots. However, for each $m,p\in\{1,2\}$, we obtain $R_{C}(\Tilde{P}_1,\Tilde{P}_2)\neq0$. So there are no solutions. We remark here even if $C^2+3C+1=0$, we still have $l=1,m,p\leq2$ as long as $g$ is constant. 

If $C^2+3C+1=0$, then $x_i,y_i,z_i,$ for $i\geq2$ in (\ref{3x-y3eq:1}) can be expressed in terms of $x_2$. First, consider the case $C=(-3-\sqrt{5})/2$. By (\ref{3x-y3eq:1}), we have, about each point $t_0\in S_0$, $g=g_{0}/2+(a_{11}x_2+a_{10})(t-t_0)/16-(9x_2^2+a_{21}x_2+a_{20})(t-t_0)^2/10+\cdots$ where $g_{0}=-3(5+\sqrt{5})x_2+a_{01}$ with $a_{ij}\in\mathbb{C}[\lambda,\mu,\nu]$.
If $g$ is constant, the above remark implies $x,y,z$ are simply-periodic in Class W with $l=1$ and $m,p\leq2$. Substituting (\ref{3x-yy-z3eq:8}) into (\ref{3x-yy-z3eq:6}), we determine $g$ from (\ref{3x-yy-z3eq:6}). However, we find that $R_{x_2}(g_0/2-g,a_{11}x_2+a_{10})\neq0$. This is false as $g=g_0/2,a_{11}x_2+a_{10}=0$. Thus, there are no solutions. If $g$ is not constant, Lemma \ref{lemmafurtherfunction} below shows that $N_0(r)=S(r,y)$, reducing the problem to section \ref{neq0x-yy-z}. A similar argument for $C=(\sqrt{5}-3)/2$ also leads to a contradiction. 
\counterwithin{lemma}{subsection}
\begin{lemma} 
\label{lemmamaxmin}
Suppose that $D\neq0$ and $x,y,z$ have no zeros. If $x,y,z$ all take max-form or all take min-form, then there are at most two types of singularities.     
\end{lemma}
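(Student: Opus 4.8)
I would argue by contradiction: assume at least three of the four pole types $\mathbf p_0,\mathbf p_x,\mathbf p_y,\mathbf p_z$ occur, and use the explicit rational form of $x,y,z$ to force a contradiction. By Remark \ref{remarkminmaxmid} it is enough to treat the case in which $x,y,z$ \emph{all} take max-form, since replacing $\delta$ by $-\delta$ converts the all-min-form case to this one. Because $x,y,z$ have no zeros, any $\mathbf p_x$, $\mathbf p_y$, $\mathbf p_z$ pole present must have leading exponent $\alpha=0$, $\beta=0$, $\gamma=0$, so the corresponding resonance relation (the analogue of $-B-1/C=\gamma$ in (\ref{eq:13})) forces respectively $AC=-1$, $AB=-1$, $BC=-1$; hence at least two of these three relations hold. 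Using Lemma \ref{anslem}(c) I would write $x,y,z$ as rational functions of $\tau=\e^{\delta t}$, the ``no zeros $+$ max-form'' hypothesis making the numerators pure monomials $\acute x_0\tau^{M_x}$, $\acute y_0\tau^{M_y}$, $\acute z_0\tau^{M_z}$. Since at least three types are present all of $M_x,M_y,M_z$ are positive, so $x,y,z\to 0$ as $\tau\to 0$; inserting this and the limit $\tau\to\infty$ into the Lotka--Volterra system written in $\tau$ (equation (\ref{3x-yy-z3eq:7})) yields $\lambda=\delta M_x$, $\mu=\delta M_y$, $\nu=\delta M_z$, together with $C\acute y_0+\acute z_0+\lambda=0$ and its two cyclic partners.

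The heart of the proof is then the following substitution. Write $P_0,P_x,P_y,P_z$ for the monic polynomials whose roots are the $\mathbf p_0,\mathbf p_x,\mathbf p_y,\mathbf p_z$ poles (so the denominator of $x$ is $P_0P_yP_z$, etc.). Feeding the monomial numerators into (\ref{3x-yy-z3eq:7}) and cancelling the identity $\lambda=\delta M_x$, the first equation collapses to a polynomial identity of the shape $\tfrac{D\acute x_0}{\delta}\,\tau^{M_x-1}=(\text{an explicit expression bilinear in }P_0,P_x,P_y,P_z\text{ and their derivatives})$, and likewise for $y$ and $z$; the cancellation of the $P_x'/P_x$-type term in the first identity is exactly the relation $AC=-1$, which is why these identities are polynomial at all. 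In any configuration that contains a $\mathbf p_0$ pole, say $\{\mathbf p_0,\mathbf p_x,\mathbf p_z\}$ (the general such case after applying the cyclic maps (\ref{cyclemaps1})--(\ref{cyclemaps2}), which preserve max-form, $D$ and ``no zeros''), one has $A=B=-1/C$, and taking the $\mathbb C$-linear combination of the three identities with polynomial weights $P_x$, $P_z$ makes the right-hand sides telescope so as to isolate a single product $P_0'P_xP_z$. Comparing divisibility by $P_x$ and by $P_z$ gives $P_x\mid(\acute z_0P_z+C\acute y_0\tau^{p})$ and $P_z\mid(\acute x_0P_x-C\acute y_0\tau^{m})$, where $p=\deg P_z$, $m=\deg P_x$; the two target polynomials have degree exactly $p$ and $m$ because their leading coefficients are nonzero multiples of $\lambda$ and $\nu$ (and $\lambda,\nu$ are nonzero since $M_x,M_z>0$). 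This pinches $m=p$, and eliminating $P_x$ between the two relations forces $P_z$ to be a constant times $\tau^{p}$ — i.e. all $\mathbf p_z$ poles would lie at $\tau=0$, which is impossible since $\tau=0$ corresponds to $t=\pm\infty$. (In the one degenerate branch where the coefficient that would give $P_z\propto\tau^{p}$ itself vanishes, the companion coefficient must vanish too, which forces $\acute x_0=-\lambda$, hence $x_0=-1$ at the $\mathbf p_0$ poles, contradicting the value of $x_0$ in (\ref{eq:4}) since $C\ne 0$.)

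The remaining configurations are $\{\mathbf p_x,\mathbf p_y,\mathbf p_z\}$ and the full $\{\mathbf p_0,\mathbf p_x,\mathbf p_y,\mathbf p_z\}$, in which $AC=AB=BC=-1$, so $A=B=C$ with $A^2=-1$. Here the naive degree bookkeeping only produces triangle inequalities among $l,m,n,p$, so one must look deeper at the three monomial identities, which take the symmetric form $\tfrac{D\acute x_0}{\delta}\tau^{M_x-1}=(1+A)P_y'P_z-(1-A)P_yP_z'$ and its cyclic partners. Evaluating their constant terms gives $(1+A)\sigma_y=(1-A)\sigma_z$ and cyclic analogues for the reciprocal-root sums $\sigma_x,\sigma_y,\sigma_z$ of $P_x,P_y,P_z$; since $(1+A)^3\ne(1-A)^3$ this forces $\sigma_x=\sigma_y=\sigma_z=0$, and one then iterates down the remaining low-order coefficients to conclude successively that each of $P_x,P_y,P_z$ is a pure power of $\tau$, again impossible. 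I expect this last step — extracting enough from the subleading coefficients in the fully symmetric case — to be the main technical obstacle; the cases containing $\mathbf p_0$, by contrast, close cleanly through the single telescoping combination described above, and it is those cases that are needed in Section \ref{neq0x-yy-z3}.
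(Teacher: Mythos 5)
Your reduction to the all-max-form case, the observation that ``no zeros'' forces $\alpha=\beta=\gamma=0$ and hence the relations $AC=-1$, $AB=-1$, $BC=-1$ for whichever of $\mathbf{p_x},\mathbf{p_y},\mathbf{p_z}$ are present, the limits $\lambda=\delta M_x$, $C\acute{y}_0+\acute{z}_0+\lambda=0$ (and cyclic partners), and the three polynomial identities obtained by inverting the coefficient matrix (using $D\neq0$) are all correct, and the divisibility argument in the configurations containing $\mathbf{p_0}$ looks workable in outline. But there is a genuine gap exactly where you flag it: in the symmetric configurations $\{\mathbf{p_x},\mathbf{p_y},\mathbf{p_z}\}$ and $\{\mathbf{p_0},\mathbf{p_x},\mathbf{p_y},\mathbf{p_z}\}$ the single identity $(1+A)P_y'P_z-(1-A)P_yP_z'=c\,\tau^{M_x-1}$ does \emph{not} by itself force $P_y,P_z$ to be powers of $\tau$ (already for $n=p=1$ it only yields one linear relation between the two roots), so everything rests on the cyclic coupling of the three identities. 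Your constant-term step is fine, but ``iterating down the remaining low-order coefficients'' requires, at every level $k$ and for every degree pattern $(m,n,p)$ with no a priori bound, that a cyclic $3\times3$ system in the $k$-th reciprocal-root symmetric functions be nondegenerate (i.e.\ that $(1+A)^3\neq(1-A)^3$ continues to be the only obstruction once the cross terms from lower levels are cleared), and that the chain terminates at the top symmetric function $\sigma^{(n)}=1/\prod\check{\tau}_j\neq0$ even when $m,n,p$ differ. None of this is carried out, so as written the lemma is not proved in precisely the cases needed for sections \ref{neq0x-yy-zz-x} and \ref{neq0x-yy-zz-x3}.

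The paper's proof avoids all of this with a degree count on a single auxiliary function. Assume (after relabelling) that $x$ has the smallest degree $M_x=l+n+p$ as a rational function of $\tau$. The combination $f=-x+Cy-ACz$ has residue $-\hat{x}_0+C\hat{y}_0=1-1=0$ at every $\mathbf{p_z}$ pole and $C\check{y}_0-AC\check{z}_0=-C+C=0$ at every $\mathbf{p_x}$ pole (these cancellations follow from the leading-order relations alone, with no case analysis), so $f$ has only $\mathbf{p_0}$ and $\mathbf{p_y}$ poles and, being bounded at $\tau=\infty$, is a rational function of $\tau$ of degree at most $n+l$. On the other hand, in max-form with no zeros each of $x,y,z$ is a monomial over its denominator and so vanishes at $\tau=0$ to order $M_x,M_y,M_z$ respectively; hence $f$ vanishes at $\tau=0$ to order at least $M_x=n+p+l$. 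A rational function of degree $\le n+l$ cannot vanish to order $n+p+l$ at a point unless $p=0$ or $f\equiv0$, and $f\equiv0$ contradicts Remark \ref{remarklinearatleast3}. The same argument applied to $-y+Az-ABx$ (which, once $p=0$, has only $\mathbf{p_0}$ poles) gives $n=0$, leaving at most the two types $\mathbf{p_0},\mathbf{p_x}$. You may want to compare this with your construction: your polynomial identities are essentially the logarithmic-derivative form of the same information, but the paper extracts the contradiction from the order of vanishing at $\tau=0$ versus the degree, rather than from the full coefficient list, which is what makes the symmetric case collapse for free.
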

\begin{proof}
By Remark \ref{remarkminmaxmid}, we may assume that $x,y,z$ all take max-form and $x$ has the least degree. Lemma \ref{anslem}$(c)$ shows that $x=c_x\tau^{M_x}\Pi_{i,j}(\tau-\tau_{i_j})^{-1}$ (similarly for $y,z$). Then $x=O(\tau^{n+p+l}),y=O(\tau^{m+p+l}),z=O(\tau^{m+n+l})$. Since $x$ has the least degree, then $f=O(\tau^{p+n+l})$. The function $f=-x+Cy-ACz$ has only the $\mathbf{p_y},\mathbf{p_0}$ poles (all poles are simple). Since $m(r,f)=S(r,y)$, then $f$ is a rational function in $\tau$ of degree $n+l$. This implies $n+p+l\leq n+l$ or $p=0$; otherwise $f$ is constant which contradicts Remark \ref{remarklinearatleast3}. Arguing similarly for $-y+Az-BAx$ shows that $n=0$.  
\end{proof}
\begin{lemma}
\label{lemmafurtherfunction}
Given an entire function $h$ such that $T(r,h)=S(r,x)$. Let
\begin{equation}
\label{furtherfunctioneq:1}
    H=a_1h^2+a_2h+a_3h''+a_4h',
\end{equation}
where $a_i$ are to be determined. Assume that for a fixed $i$ we have, about any $t_0\in S_i$,
\begin{equation}
\label{furtherfunctioneq:2}
    h=\sum_{k=0}^1a_{0k}\zeta^k+\sum_{k=0}^1a_{1k}\zeta^k(t-t_0)+\sum_{k=0}^2a_{2k}\zeta^k(t-t_0)^2+O((t-t_0)^3),
\end{equation}
where $a_{22}\neq0$ and $\zeta$ is a parameter. Then, there exists $a_{j}$ which are not all zero such that $h^k$ for some positive integer $k$ is the only dominant term in $H$ (Definition \ref{defdiffpoly}), and about any points in $S_i$, $H=H_0+O(t-t_0)$ where $H_0$ does not depend on $\zeta$. Also, if none of $x,y,z$ are entire, then either $N_i(r)=S(r,x)$ or $H,h\in\mathbb{C}$.
\end{lemma}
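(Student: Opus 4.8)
The plan is to choose the coefficients $a_1,\dots,a_4$ in (\ref{furtherfunctioneq:1}) so that the $\zeta$-dependence of the constant term of $H$ at a point of $S_i$ disappears, and then to run a Nevanlinna argument on $H$ together with Theorem~\ref{th:11}.

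First I would expand about an arbitrary $t_0\in S_i$. Writing $c_0=a_{00}+a_{01}\zeta$, $c_1=a_{10}+a_{11}\zeta$ and $c_2=a_{20}+a_{21}\zeta+a_{22}\zeta^2$ as in (\ref{furtherfunctioneq:2}), the Taylor coefficients of the entire function $h$ at $t_0$ are $c_0$, $c_1$, $c_2$, so the constant term of $H$ is $H_0=a_1c_0^2+a_2c_0+2a_3c_2+a_4c_1$. Collecting powers of $\zeta$, the coefficient of $\zeta^2$ is $a_1a_{01}^2+2a_3a_{22}$ and that of $\zeta$ is $2a_1a_{00}a_{01}+a_2a_{01}+2a_3a_{21}+a_4a_{11}$. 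Taking $a_1=1$ and $a_3=-a_{01}^2/(2a_{22})$, which is legitimate because $a_{22}\neq0$, kills the $\zeta^2$ coefficient; the vanishing of the $\zeta$ coefficient is then a single linear condition on $a_2,a_4$, which can be satisfied. For this choice $(a_1,\dots,a_4)\neq0$, the value $H_0=a_{00}^2+a_2a_{00}+2a_3a_{20}+a_4a_{10}$ depends on neither $\zeta$ nor $t_0$, and since $a_1\neq0$ the term $a_1h^2$ is the unique term of maximal total degree in $H$; thus $h^k$ with $k=2$ is the only dominant term.

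Now assume none of $x$, $y$, $z$ is entire. Since $h$ is entire, so is $H$, and Theorems~\ref{th:1}, \ref{th:2} and~\ref{th:5} give $T(r,H)=O(T(r,h))=S(r,x)$. Because $H-H_0$ vanishes at every $t_0\in S_i$, we have $N(r,1/(H-H_0))\geq N_i(r)$; hence if $H\not\equiv H_0$ then Theorem~\ref{firstfund} yields $N_i(r)\leq T(r,1/(H-H_0))=T(r,H)+O(1)=S(r,x)$, the first alternative. If instead $H\equiv H_0$, then $a_1h^2+a_2h+a_3h''+a_4h'-H_0=0$ is a differential-polynomial identity with constant (hence small) coefficients whose only dominant term is $a_1h^2$, and $N(r,h)=0=S(r,h)$; Theorem~\ref{th:11} then forces $h$ to be rational, so, being entire, $h$ is a polynomial.

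It remains to rule out nonconstant polynomial $h$. If $x$ is rational, then (as $x$ is nonconstant by hypothesis) $T(r,h)=S(r,x)=o(\ln r)$, which already forces $h\in\mathbb{C}$; so assume $x$ transcendental, i.e. $\ln r=S(r,x)$. If $S_i$ is finite then $N_i(r)=O(\ln r)=S(r,x)$, the first alternative again. If $S_i$ is infinite, I would compare the Taylor expansion of the polynomial $h$ at $t_0\in S_i$ with (\ref{furtherfunctioneq:2}): the relations $h(t_0)=a_{00}+a_{01}\zeta(t_0)$, $h'(t_0)=a_{10}+a_{11}\zeta(t_0)$ and $\tfrac{1}{2}h''(t_0)=a_{20}+a_{21}\zeta(t_0)+a_{22}\zeta(t_0)^2$ hold for infinitely many $t_0$. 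If $a_{01}=0$ the first relation forces $h\equiv a_{00}$; if $a_{01}\neq0$, eliminating $\zeta(t_0)$ between the first two gives the polynomial identity $h'=a_{10}+(a_{11}/a_{01})(h-a_{00})$, whose only polynomial solutions are constants unless $a_{11}=0$ and $h$ is a nonconstant affine function, in which case $h''\equiv0$ while $\zeta(t_0)$ takes infinitely many values, contradicting $a_{22}\neq0$ in the third relation. Hence $h\in\mathbb{C}$ and $H\equiv H_0\in\mathbb{C}$. The main obstacle is exactly this last step: Theorem~\ref{th:11} only delivers rationality of $h$, and excluding nonconstant polynomials requires exploiting the genuine quadratic dependence $a_{22}\neq0$ of the second-order Taylor coefficient on the free parameter $\zeta$.
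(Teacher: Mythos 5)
Your proposal is correct and follows the same strategy as the paper: choose the $a_j$ so that the $\zeta$ and $\zeta^2$ coefficients of the constant term of $H$ vanish, conclude $H\in\mathbb{C}$ or $N_i(r)=S(r,x)$ by the value-distribution argument (this is exactly Lemma \ref{lem3}), and then invoke Theorem \ref{th:11} to force $h$ to be a polynomial. Two remarks on where you diverge. First, your uniform normalisation $a_1=1$, $a_3=-a_{01}^2/(2a_{22})$ is a genuine simplification: the paper instead splits into the cases $a_{01}=0$ (where it takes $H=a_2h$, so the dominant term is $h^1$) and $a_{01}\neq0$ with a further discussion of whether a solution of $b_1=b_2=0$ can have $a_1=0$; your choice handles everything at once with $k=2$. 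Second, the step you flag as ``the main obstacle'' --- excluding a nonconstant polynomial $h$ --- is done in the paper in one line: if $\deg h=d\geq1$, then in the identity $a_1h^2+a_2h+a_3h''+a_4h'=H_0$ the term $a_1h^2$ has degree $2d$, strictly larger than the degrees $d$, $d-2$, $d-1$ of the remaining terms, so the identity is impossible. Your alternative route through the relations $h(t_0)=a_{00}+a_{01}\zeta(t_0)$, $h'(t_0)=a_{10}+a_{11}\zeta(t_0)$, $\tfrac12h''(t_0)=a_{20}+a_{21}\zeta(t_0)+a_{22}\zeta(t_0)^2$ on an infinite $S_i$, together with the case analysis on rationality of $x$ and finiteness of $S_i$, is valid but considerably longer than necessary; the degree comparison in (\ref{furtherfunctioneq:1}) does not require $S_i$ to be infinite and avoids the elimination argument entirely.
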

\begin{proof}
Assume that $N_i(r)\neq S(r,x)$.
If $a_{01}=0$, then we choose $H=a_2h$. Since all of $x,y,z$ have poles, then Lemmas \ref{lem2} and \ref{lem3} show that $H,h\in\mathbb{C}$. If $a_{01}\neq0$, then about each point $t_0$ in $S_i$, $H=\sum_{k=0}^2b_k\zeta^k+O(t-t_0)$ where $b_k$ are linear combinations of $a_k$. Setting $b_1=b_2=0$, we can find $a_k$ which are not all zero. As $T(r,h)=S(r,x)$, then $T(r,H)=S(r,x)$ and $H\in\mathbb{C}$ by Lemma \ref{lem3}. If $a_1\neq0$, then $h^2$ is dominant in $H$ and Theorem \ref{th:11} shows that $h$ is rational, so it is a polynomial as $h$ is entire. By balancing the degrees of $h,h',h''$ in (\ref{furtherfunctioneq:1}), we conclude $h\in\mathbb{C}$. If $a_1=0$, then $a_3=0$ since $a_{22}\neq0$. However, $b_1=b_2=0$ is a linear system in $a_1,\dots,a_4$, so at least there has to be two free parameters i.e. $a_2,a_4$ are free, and we may set $a_4=0$.
\end{proof}
\begin{remark}
\label{remarkfurtherfunction}    
We will also apply Lemma \ref{lemmafurtherfunction} with the analogues of (\ref{furtherfunctioneq:1}),(\ref{furtherfunctioneq:2}) whose proofs are similar to Lemma \ref{lemmafurtherfunction}. We will apply these analogues of Lemma \ref{lemmafurtherfunction}. 
\end{remark}

\subsection{$\mathbf{p_z},\mathbf{p_x},\mathbf{p_y}$ poles and $\lambda,\mu,\nu$ pairwise distinct}
\label{neq0x-yy-zz-x}
We have the resonance conditions $BC=AC=AB=-1$. By using (\ref{conjugatemap}), we choose $A=B=C=\mathrm{i}$. We denote $\sum_{cyc}G(x,y,z,\lambda,\mu,\nu)=G(x,y,z,\lambda,\mu,\nu)+G(y,z,x,\mu,\nu,\lambda)+G(z,x,y,\nu,\lambda,\mu).$
We find an entire function
\begin{equation}
\label{eq:38}
{\begin{aligned}
f=xyz+\sum_{cyc}\left\{a_{xy}xy+\frac{\mathrm{i}+1}{2}\left[(2-\mathrm{i})\lambda\mu+\mu^2+(2\mathrm{i}-1)\nu\lambda+\mathrm{i}\nu^2\right]x+2\mathrm{i}\mu\nu x\right\},
\end{aligned}}
\end{equation}
where $a_{xy}=\lambda-\mathrm{i}\mu+(\mathrm{i}-1)\nu$. Expressing $\hat{x}_i$, $\hat{y}_i$ and $\hat{z}_i$ in (\ref{3x-y3eq:3}) in terms of $\hat{x}_1$, and substituting (\ref{3x-y3eq:3}) into (\ref{eq:38}) yields
\begin{equation}
\label{3x-yy-zz-xeq:1} 
{\begin{aligned}
f=2a_{xy}\hat{x}_1^2+a_{01}\hat{x}_1+a_{00}+(2+2\mathrm{i})(\lambda-\mu)\sum_{i=1}^2\sum_{j=0}^{i+2}a_{ij}\hat{x}_1^j(t-t_0)^i+\cdots
\end{aligned}
}
\end{equation}
where $a_{13}=2/3$, $a_{24}=\mathrm{i}$, and the rest of $a_{ij}$ depend only on $\lambda,\mu,\nu$. If $a_{xy}\neq0$, we define $F=a_1f^3+a_2f''+a_3(f')^2+a_4ff'+a_5f^2+a_6f'+a_7f$. Then, $F$ and (\ref{3x-yy-zz-xeq:1}) are analogues of Lemma \ref{lemmafurtherfunction}. Thus, by Remark \ref{remarkfurtherfunction}, $F,f\in\mathbb{C}$. Since $a_{13}=2/3$, then $\hat{x}_1$ has at most three choices. Thus, Lemma \ref{anslem}$(a),(d)(i)$ and \ref{subanslem}$(b)$ deduce that $x,y,z$ are simply-periodic in Class $W$ and $p\leq3$.
Similar arguments show that $m,n\leq3$. We consider $x,y,z$ as rational functions in $\tau=\e^{\delta t}$. As $V_xV_yV_z=(\mathrm{i}n-p)(\mathrm{i}p-m)(\mathrm{i}m-n)\neq0$, then $x,y,z$ cannot take mid-form by Lemma \ref{anslem}$(c)(i)$. By Lemma \ref{lemmamaxmin} and Remark \ref{remarkminmaxmid}, we only need to discuss when $x$ takes min-form and $y,z$ take max-form. 
The forms of $x,y,z$ are given in Lemma \ref{anslem}$(c)$.
Using the information in Lemma \ref{anslem}$(c)$, the expansions about infinity of $x,y,z$ are $x=O(\tau^{-(n+p+l)}),y=\acute{y}_0+\acute{y}_1\tau^{-1}+\cdots,z=\acute{z}_0+\acute{z}_1\tau^{-1}+\cdots$, where $\acute{y}_0=\delta V_y,\acute{z}_0=\delta V_z$. Of course, in this case $l=0$. We substitute these into (\ref{3x-yy-z3eq:7.1}) and (\ref{3x-yy-z3eq:7.3}),
and equate the coefficients of $\tau^{-j}$ for $0\leq j\leq n+p+l-1$. For $j=0$, we obtain $\mu=-A\acute{z}_0,\nu=-\acute{y}_0$.
For $1\leq j\leq n+p+l-1$, we obtain    
    \begin{subequations}
    \label{3x-yy-zz-xeq:12}
        \begin{align}
        \label{3x-yy-zz-xeq:12.1}
        &-\delta j\acute{y}_j-A\acute{y}_0\acute{z}_j=Q_j,\\
        \label{3x-yy-zz-xeq:12.2}
        &-\acute{z}_0\acute{y}_j-\delta j\acute{z}_j=R_j,
    \end{align}
    \end{subequations}
where $Q_j,R_j$ are sums of monomials consisting of $\acute{y}_k,\acute{z}_k,$ for $k=1,\dots,j-1$, without constant terms, and $Q_1=R_1=0$. Consider (\ref{3x-yy-zz-xeq:12}) as a linear system in $\acute{y}_j,\acute{z}_j$, then the determinant of the coefficient matrix is $D(j)=\delta^2(j^2-m^2-np+\mathrm{i}m(p-n))$. We observe that for each $m,n,p\in[1,3]$, $D(j)\neq0$. 
Thus, $\acute{y}_j=\acute{z}_j=0$ for $j=1,\dots,n+p+l-1$ and $x=O(\tau^{n+p+l}),y=\acute{y}_0+O(\tau^{n+p+l}),z=\acute{z}_0+O(\tau^{n+p+l})$. We note that the function
\begin{equation}
\label{3x-yy-zz-xeq:0012}    
I:=-A^{-1}y+z-Bx
\end{equation}
has only $\mathbf{p_z},\mathbf{p_0}$ poles (all poles are simple). Since $m(r,I)=S(r,y)$, then it is a rational function in $\tau$ of degree $p+l$. However, $I=\acute{z}_0-A^{-1}\acute{y}_0+O(\tau^{n+p+l})$ which implies $n+p+l\leq p+l$ or $n=0$. Therefore, there are no solutions here.
    
\subsection{$\mathbf{p_z},\mathbf{p_x},\mathbf{p_y},\mathbf{p_0}$ poles and $\lambda,\mu,\nu$ pairwise distinct}
\label{neq0x-yy-zz-x3}
Similar to section \ref{neq0x-yy-zz-x}, we choose $A=B=C=\mathrm{i}$. We find that (\ref{eq:7}) has no roots in $\mathbb{N}$, so $x,y,z$ have a unique expansion about any $t_0\in S_0$. Thus, Lemmas \ref{anslem}$(a),(d)(i)$ and \ref{subanslem}$(b)$ show that $x,y,z$ are simply-periodic in Class $W$ with $l=1$. Since $V_x=\mathrm{i}n-p+(\mathrm{i}-1)/2\neq0$, Lemma \ref{anslem}$(c)$ shows that $x$ cannot take mid-form. The same reasoning applies to $y$ and $z$. Similar to section \ref{neq0x-yy-zz-x}, we only need to consider the case where $x$ takes min-form and $y,z$ take max-form. The determinant of (\ref{3x-yy-zz-xeq:12}) is $D(j)=\delta^2(j^2-pn-m^2-m-(n+p+1)/2+\mathrm{i}(p-n)(m+1/2))$. We see that $D(j)=0$ implies $n=p,\ j^2=p^2+m^2+p+m+1/2$, which is false. Thus, $D(j)\neq0$ and the argument in section \ref{neq0x-yy-zz-x} shows that there are no solutions here.

\subsection{$\mathbf{p_z},\mathbf{p_x},\mathbf{p_0}$ poles and $\lambda=\mu\neq\nu$}
\label{l=mx-yy-z3}
The resonance conditions are $A=-1/C,B=-1/C-\gamma$, so $B\neq1$. When $2$ is a root of (\ref{eq:7}), then (\ref{eq:7}) yields $\psi:=C(3-2\gamma-\gamma^2)+C^2(1+\gamma)^2-\gamma+1=0$ (vice versa). 

First, we discuss the case $\psi\neq0$. Let $f,g$ be defined as in (\ref{neq0x-yy-z3eq:1}) and (\ref{neq0x-yy-z3eq:2}) in section \ref{neq0x-yy-z3}. Arguing similarly yields $g=3f_0f_2-f_1^2,$ $f$ is non-elliptic in Class W and $l=1$. The same arguments there imply $m\leq2$ where we also obtain (\ref{l=mx-yy-z3eq:5}) with $\breve{a}_{ij}$ depend only on $C,\gamma,\lambda,\nu$. The argument for $p\leq2$ will not work here as there are two parameters for the $\mathbf{p_z}$ poles. We discuss the case where $f$ is rational at the end. Then $f,x,y,z$ are simply-periodic in Class W. We also have (\ref{3x-yy-z3eq:5}), then $l=1$ shows that $\delta=\pm2\sqrt{f_1^2+g}/f_0$ and (\ref{3x-yy-z3eq:6}).
Lemma \ref{anslem}$(c)$ shows that $\gamma p+m_z\leq m+1$. So we may find all possible values of $m,p,\gamma,m_z$. Ans\"atze of $x,y,z$ are given in (\ref{anslemeq:01}) and (\ref{anslemeq:001}).
Let $\omega_{\infty,z}=\max\{0,\gamma p+m_z-M_z+1\},\omega_{\infty,u}=\max\{0,m_u-M_u+1\}$ where $u=x,y$.
Substituting the ans\"atze of $x,y,z$ into (\ref{3x-yy-z3eq:7}), then the lowest degree terms from the expansion in $1/\tau$ yield
\begin{subequations}
\label{l=mx-yy-zz-xeq:4}
    {\begin{align}
    \label{l=mx-yy-zz-xeq:4.1}
    &\delta(m_x-M_x)=C\delta V_y\omega_{\infty,y}+\acute{z}_0\omega_{\infty,z}+\lambda,\\
    \label{l=mx-yy-zz-xeq:4.2}
    &\delta(m_y-M_y)=-C^{-1}\acute{z}_0\omega_{\infty,z}+\delta V_x\omega_{\infty,x}+\lambda,\\
    \label{l=mx-yy-zz-xeq:4.3}
    &\delta(\gamma p+m_z-M_z)=-(\gamma C+1)C^{-1}\delta V_x\omega_{\infty,x} +\delta V_y\omega_{\infty,y}+\nu.    
    \end{align}}
\end{subequations}
If $C=-1$, then from (\ref{l=mx-yy-zz-xeq:4.1}),(\ref{l=mx-yy-zz-xeq:4.2}), we have
\begin{equation}
    \label{l=mx-yy-z3eq:17}
    m_x+V_y\omega_y=(m_y-m)-V_x\omega_x.
\end{equation}
For each $m,p,\gamma$, we find all the possible values of $m_x,m_y$ from (\ref{l=mx-yy-z3eq:17}). However, the forms of $x,y,z$ show that at least one of $\tau_0,\tau_{i_j}$ vanishes.  
Hence, $C\neq-1$, then (\ref{l=mx-yy-zz-xeq:4}) implies
\begin{equation}
\label{l=mx-yy-z3eq:8}
    {\begin{aligned}
    &\lambda=\delta(m_x-M_x+Cm_y-CM_y-CV_y\omega_{\infty,y}-CV_x\omega_{\infty,x})(C+1)^{-1},\\
    &\nu=\delta\{\gamma p+m_z-M_z-(C\gamma+1)C^{-1}V_x\omega_{\infty,x}-V_y\omega_{\infty,y}\} .
    \end{aligned}}
\end{equation}
If $x,y$ do not take the mid-form,we find $m_x,m_y$ for each $m,p,\gamma,m_z$ corresponding to whether $x,y$ take min-form or max-form. We argue similarly as in section \ref{neq0x-yy-z3} to obtain polynomial relations $P_1(C)=P_2(C)=0$. In $P_1,P_2$, we cancel out the factors $C,C(\gamma+1)+1,(C\gamma+1)(C+1)+C^2,(1-\gamma)(3C+1+C\gamma)+C^2(1+\gamma)^2$, as one of these vanishing implies $C(1-B)y_0\psi=0$. This yields the relations $\Tilde{P}_1(C)=\Tilde{P}_2(C)=0$. We then compute $R_{C}(\Tilde{P}_1,\Tilde{P}_2)$ to see whether $\Tilde{P}_1,\Tilde{P}_2$ have common roots. We check the ans\"atze of the cases where $R_{C}(\Tilde{P}_1,\Tilde{P}_2)=0$. We find six cases of $\mathbf{w}=(m_x,m_y,m,p,\gamma,m_z,C)$ which yield the solutions (express in terms of $F_i$) such that $\lambda\neq\nu$ as follows:
\begin{enumerate}
    \item $\mathbf{w}=(2,0,1,1,2,0,C_1)$, where $27C_1^3+33C_1^2+12C_1+1=0$ yields $F_1=(3C_1+1)\tau+\hat{\tau}_{1},~F_2=(3C_1+1)^2\tau+3(2C_1+1)\hat{\tau}_{1},\hat{\tau}_{1}\in\mathbb{C}$ and
    \begin{equation}
    \label{l=mx-yy-z3eq:11}
    {\begin{aligned}
    &y=\frac{-3(3C_1+2)(2C_1+1)^2\delta\hat{\tau}_{1}^3}{C_1(3C_1+1)^2(\tau-\hat{\tau}_{1})F_1F_2},~z=\frac{-C_1(33C_1^2+21C_1+1)\delta(\tau-\hat{\tau}_{1})^2}{3(15C_1^2+15C_1+4)F_1F_2},\\&x=\frac{-(3C_1+2)\delta\tau^2}{(\tau-\hat{\tau}_{1})F_1},~(\lambda,\nu)=\left(\frac{-C_1(6C_1+1)\delta}{3C_1^2+4C_1+1},\frac{-(6C_1^2+7C_1+2)\delta}{C_1(3C_1+1)}\right).
    \end{aligned}}   
    \end{equation}
    \item  $\mathbf{w}=(3,0,1,2,1,0,C_2)$, where $6C_2^2+6C_2+1=0$, yields $F_1=(\tau-\hat{\tau}_{1})(\tau-\hat{\tau}_{2}),~F_2=(2C_2+1)^2\tau-(2C_2^2-1)(\hat{\tau}_{1}+\hat{\tau}_{2}),F_3=(2C_2+1)\tau+(C_2+1)(\hat{\tau}_{1}+\hat{\tau}_{2}),$ $(3C_2+2)(\hat{\tau}_{1}^2+\hat{\tau}_{2}^2)+\hat{\tau}_{1}\hat{\tau}_{2}=0,\hat{\tau}_{j}\in\mathbb{C}$, $\lambda=(24C_2^2-1)\delta,\nu=(15-24C_2^2)\delta$, and
    \begin{equation}
    \label{l=mx-yy-z3eq:12}
    {
    \begin{aligned}
        y=\frac{-(33C_2+26)\delta\hat{\tau}_{1}^2\hat{\tau}_{2}^2}{3(3C_2+1)^2F_1F_2F_3},~
        z=\frac{-(6C_2+2)\delta F_1}{18(5C_2+4)F_2F_3},~x=\frac{-(4C_2+3)\delta\tau^3}{F_1F_3}.
    \end{aligned}}
    \end{equation}
    \item $\mathbf{w}=(2,0,2,1,3,0,-3/8)$, yields $\lambda=9\nu=3\delta,F_1=3\tau^2-6\tau\hat{\tau}_{1}+4\hat{\tau}_{1}^2,\hat{\tau}_{1}\in\mathbb{C}$ and
    \begin{equation}
    \label{l=mx-yy-z3eq:13}
    {\begin{aligned}
     x=\frac{\delta\tau^2}{(\tau-\hat{\tau}_{1})(\tau-2\hat{\tau}_{1})},~y=\frac{8\delta\hat{\tau}_{1}^4}{3(\tau-\hat{\tau}_{1})(\tau-2\hat{\tau}_{1})F_1},~z=\frac{9\delta(\tau-\hat{\tau}_{1})^3}{(\tau-2\hat{\tau}_{1})F_1}.    
    \end{aligned}}    
    \end{equation}
    \item $\mathbf{w}=(0,3,1,1,2,0,C_1),(0,4,1,2,1,0,C_2),(0,4,2,1,3,0,-3/8)$; solution can be obtained by changing $\delta\to-\delta$ in (\ref{l=mx-yy-z3eq:11}),(\ref{l=mx-yy-z3eq:12}),(\ref{l=mx-yy-z3eq:12}) respectively.
\end{enumerate}
Next we discuss when either $x$ or $y$ takes mid-form. By Lemma \ref{anslem}(c), we obtain $V_x=0$ or $V_y=0$ which can be simplified to $Cp(\gamma+1)+p+1=0,$ or $C^2(m+1)(\gamma+1)+Cm+(p+1)(C\gamma+C+1)=0$. Here $V_x=V_y=0$ cannot hold simultaneously for each value of $m,p,\gamma,m_z$. So only one of $x,y$ can take the mid-form. If $x$ takes mid-form, we find $C$ from $V_x=0$. Then we substitute (\ref{l=mx-yy-z3eq:8}), $C$ into (\ref{3x-yy-z3eq:6}). However, the identity fails. In the case where $y$ takes the mid-form, we argue similarly and obtain $(m,p,\gamma,m_z)=(1,1,1,0),(1,1,1,1)$ which satisfy (\ref{3x-yy-z3eq:6}).
If $(m,p,\gamma,m_z)=(1,1,1,0)$, then $m_x=0,m_y=1$ and $4C^2+5C+2=0$. Using (\ref{conjugatemap}), we choose $C=\mathrm{i}(5\mathrm{i}-\sqrt{7})/8$. Then, we obtain: $\lambda=2\nu=-2\delta,F_1=4\mathrm{i}\tau+(\sqrt{7}-\mathrm{i})\hat{\tau}_{1},F_2=4\tau+(\mathrm{i}\sqrt{7}-3)\hat{\tau}_{1},\hat{\tau}_{1}\in\mathbb{C}$, and
    \begin{equation}
    \label{l=mx-yy-z3eq:16}
    {\begin{aligned} 
    z=\frac{4(\sqrt{7}-5\mathrm{i})\delta(\tau-\hat{\tau}_{1})\hat{\tau}_{1}}{F_1F_2},~x=\frac{-(3\mathrm{i}+\sqrt{7})\delta\hat{\tau}_{1}^2}{(\tau-\hat{\tau}_{1})F_1},~y=\frac{16\mathrm{i}\delta\tau\hat{\tau}_{1}^2}{(\tau-\hat{\tau}_{1})F_1F_2}.    
    \end{aligned}}
    \end{equation}
The case $(m,p,\gamma,m_z)=(1,1,1,1)$ can be obtained from (\ref{l=mx-yy-z3eq:16}) by changing $\delta\to-\delta$. We remark here even if $\psi=0$, we still obtain $l=1,m\leq2$ as long as $g$ is constant.

Now we assume that $\psi=0$. Here, $\gamma=1$ iff $C=-1$. We can express $x_i,y_i,z_i,$ for $i\geq2$ in terms of $x_2$. We now show that $g$ is constant and $x_2$ is unique. If $C\neq-1$, then about each point $t_0\in S_0$ (we may assume $t_0=0$),
\begin{equation}
    \label{l=mx-yy-z3eq:18}
    {\begin{aligned}
    g=\frac{3C^3(\gamma+1)^2x_2}{16P_2^2P_{3}}+a_{00}+\sum_{i=0}^1a_{1i}x_2^it+\left(\frac{P_1x_2^2}{(\gamma-1)^4P_2^6P_{3}^2}+\sum_{i=0}^1a_{2i}x_2^i\right)t^2+O(t^3),
    \end{aligned}}
\end{equation}
where $a_{ij}$ depends only on $C,\gamma,\lambda,\nu$ and $P_1,P_2,P_3\in\mathbb{C}[C,\gamma]$.
We find that $R_C(P_1,\psi)=0$ has no roots in $\mathbb{N}$, so $P_1\neq0$. Applying Lemma \ref{lemmafurtherfunction} with $h=g$ shows that $g\in\mathbb{C}$. If $C=-1$, the analogue of (\ref{l=mx-yy-z3eq:18}) is $g=\left((\lambda-2\nu)^2/2-6x_2\right)\left(1-96\nu t-1440(5\nu-\lambda)t^2\right)+O(t^3)$.
Assume that $12x_2\neq(\lambda-2\nu)^2$. Using Lemma \ref{lem3} again yields $g'+96\nu g=0$. Then, the expansion becomes $g'+96\nu g=\left((\lambda-2\nu)^2/2-6x_2\right)(g_0+g_1t)+O(t^2)$ where $g_i$ are expressed in term of $\lambda,\nu$ only. So $g_0=g_1=0$. Simplifying these yields $\lambda=\nu=0$; a contradiction. Therefore, $g\in\mathbb{C}$ in any cases. Hence, as in the case $\psi\neq0$, we conclude that $x,y,z$ are in Class W with $l=1$ and $m\leq2$. The case where $x,y,z$ are rational is discussed at the end of this section. If $C\neq-1$, we substitute (\ref{l=mx-yy-z3eq:8}) into (\ref{3x-yy-z3eq:6}) to get the value of $g$. Using (\ref{l=mx-yy-z3eq:18}), we determine the value of $x_2$, and consequently, $f_2$ is also known. Then using (\ref{l=mx-yy-z3eq:5}), we obtain 
\begin{equation}
\label{l=mx-yy-z3eq:19}
R_{\breve{z}_1}(4\breve{z}_1^2+\breve{a}_{01}\breve{z}_1+\breve{a}_{00}-3f_0f_2-f_1^2,\breve{g}_1)=0.    
\end{equation}
For each $\gamma,m,p$, we determine $C$ from $\psi=0$. However, we find that (\ref{l=mx-yy-z3eq:19}) fails to hold for those values $\gamma,m,p,C$. When $C=-1$, we find that (\ref{l=mx-yy-z3eq:17}) fails to hold for each $m,p,\gamma$. Therefore, no solutions exist. 

If $f$ is rational, then $x,y,z$ are also rational. In both cases $\psi\neq0$ and $\psi=0$, the above discussions show that $l=1$ and $m\leq2$. Lemma \ref{anslem}$(d)$ yields $\nu=0,d_z=-\lambda,C=-1,\gamma p=m+1,V_x=V_y=0$. However, none of $m,p,\gamma$ make $V_x=V_y=0$ hold.

\subsection{$\mathbf{p_z},\mathbf{p_x},\mathbf{p_y}$ poles and $\lambda=\mu\neq\nu$}
\label{l=mx-yy-zz-x}

We have the resonance conditions $B=C,A=-1/C,C^2+\gamma C+1=0$. Then $C=-1$ iff $C\in\mathbb{Q}$ which is equivalent to $\gamma=2$. If $\gamma\geq2$, we find an entire function
\begin{equation}
    \label{l=mx-yy-zz-xeq:1}
    {\begin{aligned}
    f=&x^2+C^2y^2+(C+1)\{(2\lambda-\nu)(Cy-x)(C-1)^{-1}-C^{-1}xz+yz\}-2Cxy\\&+{\{(C+1)^2\lambda-(C^2+1)\nu\}}\{C(C-1)\}^{-1}z-{(C-1)(\lambda-\nu)^{-1}xyz}.  
    \end{aligned}}
\end{equation}
Assume that $C\neq-1$. By expressing $\Breve{x}_i,\Breve{y}_i,\Breve{z}_i$ in terms of $\Breve{z}_1$, we have about $t_0\in S_1$  
\begin{equation}
\label{l=mx-yy-zz-xeq:2}
    {\begin{aligned}
    f=2(C+1)\breve{z}_1^2+\sum_{i=0}^1a_{0i}\breve{z}_1^i+\frac{2(C^2-1)}{3C}\sum_{j=1}^2\sum_{i=0}^{j+2}a_{ji}\breve{z}_1^i(t-t_0)^j +O((t-t_0)^3),    
    \end{aligned}}
\end{equation}
where $a_{13}=2,a_{24}=1$ and other $a_{ji}$ are expressed in terms of $C,\lambda,\nu$ only. Applying Remark \ref{remarkfurtherfunction} on $F=a_1f^3+a_2f''+a_3(f')^2+a_4ff'+a_5f^2+a_6f'+a_7f$ and (\ref{l=mx-yy-zz-xeq:2}) asserts that both $F,f$ are constant. Then, (\ref{l=mx-yy-zz-xeq:2}) shows that $\breve{z}_1$ has at most two choices. 
When $C=-1$, equation (\ref{l=mx-yy-zz-xeq:2}) becomes
\begin{equation}
    \label{l=mx-yy-zz-xeq:17}
    {\begin{aligned}
    f=\sum_{j=0}^2\sum_{i=0}^{j+1}b_{ji}\breve{z}_1^i(t-t_0)^j+O((t-t_0)^3) ,
    \end{aligned}}
\end{equation}
where $b_{01}=-(\lambda+5\nu)/2,b_{12}=-(3\lambda+\nu)/3,b_{23}=(3\nu-11\lambda)/12$ and others $b_{ji}$ depend only on $a_{ji}$.
If $b_{01}=0$, then $f$ is constant. If $b_{12}=0$, then applying Remark \ref{remarkfurtherfunction} on $F=a_1f''+a_2f^3+a_3f^2+a_4f$ and (\ref{l=mx-yy-zz-xeq:17}) yields that $f$ is constant. If $b_{23}=0$, then $F=a_1f^2+a_2f+a_3f'+a_4f''$ also imples that $f$ is constant. Additionally, $b_{01}=b_{12}=0$ iff $\lambda=\nu=0$, so either $b_{01}$ or $b_{12}$ is non-zero i.e. $\Breve{z}_1$ has at most two choices. If $\gamma=1$, then by using (\ref{conjugatemap}), we choose $C=-(1+\mathrm{i}\sqrt{3})/2$. We find an entire function 
\begin{equation}
\label{l=mx-yy-zz-xeqsub:1}
    {\begin{aligned}
    g&=2x^3-2y^3+(3+3\mathrm{i}\sqrt{3})(x^2y-xz^2+\mathrm{i}xy^2+yz^2)+(3-3\mathrm{i}\sqrt{3})\{x^2z+xyz\\&+y^2z-(\lambda-\nu)(2yz+\mathrm{i}\sqrt{3}z^2-6xy)\}-6\mathrm{i}\sqrt{3}(\lambda-\nu)x^2
    -(3\mathrm{i}\sqrt{3}+3)(\lambda-\nu)\{2xz\\&-3y^2+\mathrm{i}(\lambda-2\nu)z\}
    -3(2\lambda^2-3\lambda\nu+\nu^2)\{2x+(\mathrm{i}\sqrt{3}+1)y\}.
    \end{aligned}}
\end{equation}
The expansion of $g$ about each point $t_0\in S_1$ is 
\begin{equation}
    \label{l=mx-yy-zz-xeqsub:2}
    {\begin{aligned}
    g=g_0+g_1(t-t_0)+\frac{2}{5}(\mathrm{i}\sqrt{3}-1)(\lambda-\nu)\sum_{j=2}^4\sum_{i=0}^{j+2}c_{ji}\Breve{z}_1^i(t-t_0)^j+\cdots,
    \end{aligned}}
\end{equation}
where $g_0=-16\breve{z}_1^3+\sum_{i=0}^2c_{0i}\Breve{z}_1^i,~
    g_1=\sum_{i=0}^3c_{1i}\Breve{z}_1^i,$ with $c_{24}=15,c_{35}=2,$ $c_{46}=1+\mathrm{i}\sqrt{3}$ and other $c_{ji}$ depend only on $\lambda,\nu$.
Applying Remark \ref{remarkfurtherfunction} on $G=a_1g'''g''+a_2(g'')^2+a_3g''g+a_4g^2+a_5g'''+a_6g''+a_7g'+a_8g+a_{9}g^{(4)}+a_{10}g^3$ and (\ref{l=mx-yy-zz-xeqsub:2}) shows that $g$ is constant. By considering $g_1,g_0-g$ as polynomials in $\Breve{z}_1$, we find that the remainder of the division of $g_1$ by $g_0-g$ is $\mathfrak{r}=c_2 \Breve{z}_1^2+c_1\Breve{z}_1+c_0$ (expressions of $c_i$ are not given here). Since $g_1=g-g_0=0$, then $\mathfrak{r}=0$. We find that both $c_2,c_1$ cannot be zero otherwise $\lambda=\nu=0$. 
Therefore, $\breve{z}_1$ has at most two choices in any cases. Arguing similarly shows that $\check{x}_1$ also has at most two choices. Hence, $x,y,z$ are non-elliptic in Class W by Lemma \ref{anslem}$(a),(b)$, and $m,n\leq2$. We first consider the case when $x,y,z$ are simply-periodic. Since $V_x=-(p+n/C),V_y=-(m+p/C)$, Lemma \ref{anslem}$(c)$ shows that $x$ takes mid-form iff $C=-1,n=p$, and $y$ takes mid-form iff $C=-1,m=p$.
Ans\"atze for $x,y,z$ are given in (\ref{anslemeq:01}) and (\ref{anslemeq:001}) where $l=0,$ and $\gamma p+m_z\leq m+n$. 

Assume $C\neq-1$. 
Using Lemma \ref{anslem}$(c)$, then the expansions of $x,y,z$ at $0$ are 
\begin{equation}
\label{l=mx-yy-zz-xeq:3,1}
x=\tau^{m_x}(\ddot{x}_0+\ddot{x}_1\tau+\cdots),y=\tau^{m_y}(\ddot{y}_0+\ddot{y}_1\tau+\cdots),z=\tau^{m_z}(\ddot{z}_0+\ddot{z}_1\tau+\cdots).
\end{equation}
The expansions of $x,y,z$ at infinity are
\begin{equation}
\label{l=mx-yy-zz-xeq:3,2}
x=\tau^{m_x-M_x}(\acute{x}_0+\acute{x}_1\tau+\cdots),y=\tau^{m_y-M_y}(\acute{y}_0+\acute{y}_1\tau+\cdots),z=\tau^{\gamma p+m_z-M_z}(\acute{z}_0+\acute{z}_1\tau+\cdots).
\end{equation}
Similar argument as in section \ref{l=mx-yy-z3} yields (\ref{l=mx-yy-zz-xeq:4}). Let $\omega_{0,u}=\max\{0,1-m_u\}$ where $u=x,y,z$.
The analogues of (\ref{l=mx-yy-zz-xeq:4}) for (\ref{l=mx-yy-zz-xeq:3,1}) are 
    \begin{subequations}
    \label{l=mx-yy-zz-xeq:04}
    \begin{align}
    \label{l=mx-yy-zz-xeq:04.1}
    &\delta m_x=C\ddot{y}_0\omega_{0,y}+\ddot{z}_0\omega_{0,z}+\lambda,\\
    \label{l=mx-yy-zz-xeq:04.2}
    &\delta m_y=-C^{-1}\ddot{z}_0\omega_{0,z}+\ddot{x}_{0}\omega_{0,x}+\lambda,\\
    \label{l=mx-yy-zz-xeq:04.3}
    &\delta m_z=C\ddot{x}_0\omega_{0,x}+\ddot{y}_0\omega_{0,y}+\nu.
    \end{align}
    \end{subequations}
If $x,y$ both take the min-form, then $m_x=m_y=0,\ddot{x}_0=-\delta V_x,\ddot{y}_0=-\delta V_y$.
If $m_z\neq0$, then (\ref{l=mx-yy-zz-xeq:04.1}) and (\ref{l=mx-yy-zz-xeq:04.2}) imply $C\ddot{y}_0=\ddot{x}_0$ which can be simplified as $C\gamma m+m+n=0$ i.e. $m=n=0$ which is false. Thus, $m_z=0$ and (\ref{l=mx-yy-zz-xeq:04}) yields
\begin{equation}
\label{l=mx-yy-zz-xeq:004}
    \lambda=-C(\ddot{x}_0+\ddot{y}_0)(C+1)^{-1},~\ddot{z}_0=C(\ddot{x}_0-C\ddot{y}_0)(C+1)^{-1},~\nu=-C\ddot{x}_0-\ddot{y}_0.
\end{equation}
Substituting (\ref{l=mx-yy-zz-xeq:3,1}) into (\ref{3x-yy-z3eq:7}) and equating the coefficients of $\tau^j$ yield
    \begin{subequations}
    \label{l=mx-yy-zz-xeq:9}
    {\begin{align}
    \label{l=mx-yy-zz-xeq:9.1}
    &j\delta\ddot{x}_j-C\ddot{x}_0\ddot{y}_j-\ddot{x}_0\ddot{z}_j=P_j,\\
    \label{l=mx-yy-zz-xeq:9.2}
    &\ddot{y}_0\ddot{x}_0-j\delta\ddot{y}_j+C^{-1}\ddot{y}_0\ddot{z}_j=Q_j,\\
    \label{l=mx-yy-zz-xeq:9.3}
    &-C\ddot{z}_0\ddot{x}_j-\ddot{z}_0\ddot{y}_0+j\delta\ddot{z}_j=R_j,
    \end{align}}
    \end{subequations}
    where $P_j,Q_j,R_j$ are the sum of the monomials in $\ddot{x}_1,\ddot{y}_1,\ddot{z}_1,\dots,\ddot{x}_{j-1},\ddot{y}_{j-1},\ddot{z}_{j-1}$, without constant terms, and $P_1=Q_1=R_1=0$. The linear system (\ref{l=mx-yy-zz-xeq:9}) has the determinant $\tilde{D}(j)=(Cd_{1j}+d_{2j})\delta^3B^{-3}$ where $d_{ij}=\sum_{s=0}^3e_{si}j^s$ and $e_{si}$ are polynomials in $m,n,\gamma,p$ with integer coefficients (full expressions of $d_{ij}$ are not given). Here $\tilde{D}(j)=0$ iff $d_{1j}=d_{2j}=0$. If $\tilde{D}(j)\neq0$ for all $j$, then $x_j=y_j=z_j=0$ and $x,y,z\in\mathbb{C}$. So $\tilde{D}(j)=0$ for some $j$. Since $m,n,p\leq2$ and $\gamma p+m_z\leq m+n$, we test through all possible cases of $(m,n,\gamma,p)$. We find that only $(1,1,1,1),$ $(2,2,1,2), (2,2,3,1)$ satisfy $d_{1j}=d_{2j}=0$. When $(m,n,\gamma,p)=(2,2,3,1)$, we find that one of $\tau_{i_j}=0$. The cases $(1,1,1,1),(2,2,1,2)$ yield the same solutions $C^2+C+1=0,\lambda=2\nu,F_1=\hat{\tau}_{1}\e^{\nu t}+C,F_2=C\hat{\tau}_{1}\e^{\nu t}+1,F_3=\hat{\tau}_{1}\e^{\nu t}-1,\hat{\tau}_{1}\in\mathbb{C}$,
    \begin{equation}
    \label{l=mx-yy-zz-xeq:11}
    \begin{aligned}
    x=C^2\nu\hat{\tau}_{1}\e^{2\nu t}F_2^{-1}F_3^{-1},y=C\nu\hat{\tau}_{1}^2\e^{2\nu t}F_3^{-1}F_1^{-1},z=(C^2-C)\nu\hat{\tau}_{1}\e^{\nu t}F_3F_1^{-1}F_2^{-1}.
    \end{aligned}
    \end{equation}
If $x,y$ have min-form and max-form respectively, then $m_x=0,m_y=M_y=m+p,\ddot{x}_0=-\delta V_x,\acute{y}_0=\delta V_y$.
If $m_z\neq0$, then (\ref{l=mx-yy-zz-xeq:04.1}) and (\ref{l=mx-yy-zz-xeq:04.2}) imply $\delta M_y=\ddot{x}_0$ which yields $Cm-n=0$; a contradiction. Thus, $m_z=0$. If $m+n> \gamma p$, then $\omega_{\infty,z}=0$. Equations (\ref{l=mx-yy-zz-xeq:4.1}) and (\ref{l=mx-yy-zz-xeq:4.2}) imply $-\delta(n+p)=C\acute{y}_0$ which yields $Cm-n=0$. So, $m+n=\gamma p$ and (\ref{l=mx-yy-zz-xeq:4}) yields
\begin{equation}
\label{l=mx-yy-zz-xeq:011}
\lambda=-(\delta M_x+C\acute{y}_0)(C+1)^{-1},~\acute{z}_0=-C(\delta M_x+C\acute{y}_0)(C+1)^{-1},~\nu=-\acute{y}_0.
\end{equation}
Substituting (\ref{l=mx-yy-zz-xeq:4}) into (\ref{3x-yy-z3eq:7.2}),(\ref{3x-yy-z3eq:7.3}) yields the system (\ref{3x-yy-zz-xeq:12}) which has the determinant $D(j)$ (full expression is not given).
By going through all possible cases of $(m,n,\gamma,p)$, we find $D(j)\neq0$ for $j\leq n+p-1$. So we obtain $\acute{y}_j=\acute{z}_j=0$ for $1\leq j\leq n+p-1$ and, $y=\acute{y}_0+O(\tau^{-(n+p)}),z=\acute{z}_0+O(\tau^{-(n+p)})$. 
We recall that $I$ in (\ref{3x-yy-zz-xeq:0012}) has degree $p$ and here, $I=C\acute{y}_0+\acute{z}_0+O(\tau^{-(n+p)})$. This shows that $n+p\leq p$ and so $n=0$.

Now we discuss when $C=-1$. Assume that neither $x$ nor $y$ takes the max-form. If both $x,y$ take min-form, then (\ref{l=mx-yy-zz-xeq:4.1}),( \ref{l=mx-yy-zz-xeq:4.2}) shows that $m=n$. Since $\gamma p+m_z\leq m+n,m\neq p$, we have $m=n=2,p=1$. If only $x$ takes mid-form, then $n=p=2,m=1$. If only $y$ takes mid-form, then $m=p=1,n=2$. If $x,y$ both take the mid-form, then $m_z=0,m=n=p=1,2$. For each case, we check the ans\"atze in Lemma \ref{anslem}$(c)$ and find that one of $\hat{\tau}_{j},\breve{\tau}_{j},\check{\tau}_{j}$ vanishes, except when $m=n=p=1$. If $m=n=p=1$, we obtain: $\lambda=2\delta\hat{\tau}_{1}(\breve{\tau}_{1}-2\hat{\tau}_{1}+\check{\tau}_{1})^{-1},\nu=0,~(\hat{\tau}_{1}^2+\breve{\tau}_1\check{\tau}_{1})(\breve{\tau}_1+\check{\tau}_{1})-4\hat{\tau}_{1}\breve{\tau}_1\check{\tau}_{1}=0$, $\hat{\tau}_{1},\breve{\tau}_1,\check{\tau}_{1}\in\mathbb{C},$ and
    \begin{equation}
    \label{l=mx-yy-zz-xeq:20}
    {\begin{aligned}
    x=\frac{(\check{\tau}_{1}-\hat{\tau}_{1})\delta\tau}{(\tau-\hat{\tau}_{1})(\tau-\check{\tau}_{1})},~y=\frac{(\hat{\tau}_{1}-\breve{\tau}_{1})\delta\tau}{(\tau-\hat{\tau}_{1})(\tau-\breve{\tau}_1)},~z=\frac{-\delta(\breve{\tau}_1+\check{\tau}_{1})(\tau-\hat{\tau}_{1})^2}{(\breve{\tau}_1-2\hat{\tau}_{1}+\check{\tau}_{1})(\tau-\breve{\tau}_1)(\tau-\check{\tau}_{1})}.    
    \end{aligned}}
    \end{equation}
    Assume that only $y$ takes max-form. When $x$ takes either min-form or mid-form, we check the ans\"atze for each $m,n,p$. However, we find that one of $\hat{\tau}_{j},\breve{\tau}_{j},\check{\tau}_{j}$ vanishes. Other forms of $x,y$ can be obtained from Remark \ref{remarkminmaxmid}.

If $x,y,z$ are rational, then Lemma \ref{anslem}$(d)(i),(ii)$ shows that $V_x=V_y=0$, which yields $C=-1,\gamma=2,m=n=p\leq2$ and $d_z=-\lambda,\nu=0$. If $m=n=p=2$, checking the ans\"atze (\ref{anslemeq:2}) shows that either $\hat{t}_{j}=\check{t}_{k}$ or $\hat{t}_{j}=\breve{t}_{k}$. If $m=n=p=1$, then we obtain the solution expressing in terms of $F_1=\lambda F_3-(1+\mathrm{i}),~F_2=\lambda F_3+\mathrm{i}-1,~F_3=t-\hat{t}_1,\hat{t}_1\in\mathbb{C}$,
\begin{equation}
\label{l=mx-yy-zz-xeq:24}
{\begin{aligned}
(x,y,z)=\left(\frac{(1+\mathrm{i})}{F_3F_1},\frac{(\mathrm{i}-1)}{F_3F_2},\frac{-\lambda^2F_3^2}{F_1F_2}\right),\left(\frac{(\mathrm{i}-1)}{F_3F_2},\frac{-(1+\mathrm{i})}{F_3F_1},\frac{-\lambda^2F_3^2}{F_1F_2}\right).
\end{aligned}}    
\end{equation}

\subsection{$\mathbf{p_z},\mathbf{p_x},\mathbf{p_y},\mathbf{p_0}$ poles and $\lambda=\mu\neq\nu$}
\label{l=mx-yy-zz-x3}
The resonance conditions are $B=C,A=-1/C,C^2+\gamma C+1=0$. So $C=-1$ iff $C\in\mathbb{Q}$ which is equivalent to $\gamma=2$. As (\ref{eq:7}) becomes $k^2-k+(\gamma+1)/(\gamma+2)=0$ and has no roots in $\mathbb{N}$, then $x,y,z$ have unique expansions about any $t_0\in S_0$. Thus, Lemmas \ref{anslem}$(a)$ and \ref{subanslem} show that $x,y,z$ are non-elliptic in Class W. Since $C(C-1)V_x=C(p-n+p\gamma)+1+n+p$, we note that $V_x\neq0$. Similarly, $V_y\neq0$. Then Lemma \ref{anslem}$(c),(d)$ shows that $x,y,z$ are simply-periodic and cannot take mid-form. Ans\"atze for $x,y,z$ are given in (\ref{anslemeq:01}) and (\ref{anslemeq:001}). 

Assume that $C\neq-1$. If $x,y$ both take min-form, then $m_x=m_y=0$. If $m_z\neq0$, then (\ref{l=mx-yy-zz-xeq:04.1}) and (\ref{l=mx-yy-zz-xeq:04.2}) imply $C\ddot{y}_0=\ddot{x}_0$ which yields $Cb+(\gamma+1)(m+1)+n=0$ for some integer $b$; a contradiction. So we obtain $m_z=0$ and (\ref{l=mx-yy-zz-xeq:004}). If $m+n+1>\gamma p$, then $\omega_{\infty,z}=0$. Equations (\ref{l=mx-yy-zz-xeq:4.1}) and (\ref{l=mx-yy-zz-xeq:4.2}) yield $\delta M_x=\delta M_y$ which implies $n=m$. We recall that $I$ in (\ref{3x-yy-zz-xeq:0012}) has degree $p+1$.
If $2m+1-\gamma p\geq m+p+1$, then $I=O(\tau^{-(m+p+1)})$, which implies $m+p+1\leq p+1$ or $m=0$. So $2m+1-\gamma p< m+p+1$. Let 
\begin{equation}
\label{l=mx-yy-zz-x3eq:1}
h=f \text{~in (\ref{l=mx-yy-zz-xeq:1}) if~}\gamma=1,\text{~ otherwise~} h=g \text{~in (\ref{l=mx-yy-zz-xeqsub:1})}.   
\end{equation}
The function $h$ has only the $\mathbf{p}_0$ poles of third order with a unique Laurent expansion. Since $m(r,f)=S(r,y)$, then $h$ is a rational function in $\tau$ with degree $3$ and we have $h=h_0+O(\tau^{-(2m+1-\gamma p)})$ for some constant $h_0$. So $1\leq 2m+1-\gamma p\leq3$ or $m=(\gamma p+\ell)/2$ where $\ell=0,1,2$.
The determinant of (\ref{l=mx-yy-zz-xeq:9}) is $-(C(\gamma^2-1)+\gamma)C^{-2}(C-1)^{-2}\tilde{D}(j)$ where 
\[
\tilde{D}(j)=j^3(\gamma+2)+jd_1/4+\{(\gamma +2)( \ell+\gamma p)+2\gamma+2\}d_0/8.
\]
Here $d_1= (\gamma +2)^3(\gamma-1)p^2+2 \left(\gamma ^3-8 \gamma -8\right) p-\left(\gamma ^2+5 \gamma +6\right)\ell^2-2 (\gamma +2) \ell (\gamma+\gamma p+2p+4)-8\gamma-12$ and $d_0=(\gamma +2) (\ell^2+2\ell)-(\gamma^2-4)p(\gamma p+2p+2)+4$.
If $\gamma\geq3$, we see that $D(0)\leq0$. Since $0$ is an inflection point of $\tilde{D}(j)$, then there is at most one positive integer root of $\tilde{D}(j)$. We find that 
\[
64\tilde{D}\{(m+p)/2+1\}\leq-17p\gamma~\text{if}~\ell=0,\quad 64\tilde{D}\{(m+p)/2+1\}\leq-56p\gamma^3~\text{if}~\ell=1,2.
\]
This shows that $\tilde{D}(j)\neq0$ and $\ddot{x}_j=\ddot{y}_j=\ddot{z}_j=0$ for $j=1,\dots,\lfloor(m+p)/2+1\rfloor$. Thus, $h=\tilde{h}_0+O(\tau^{\lfloor(m+p)/2+1\rfloor+1})$
for some constant $\tilde{h}_0$, and we obtain $\lfloor(m+p)/2+1\rfloor+1\leq3$ which shows that $m=p=1$. By checking ans\"atze, we find that one of $\hat{\tau}_{j}=0$; a contradiction. When $\gamma=1$, we obtain from above that $2m+1-p\leq3$ or $2m-p\leq2$. By substituting (\ref{l=mx-yy-zz-xeq:3,2}) into (\ref{3x-yy-z3eq:7.3}), we have $\acute{z}_1=\cdots=\acute{z}_{m+p}=0$ and $z=\acute{z}_0+O(\tau^{-(m+p+1)})$. These imply $m+p+1\leq2m+1$ or $p\leq m$. Hence $2p-p\leq2m-p\leq2$ or $p\leq2$. By checking ans\"atze, we find the same contradiction. Hence, $m+n+1=\gamma p$. Arguing similarly yields $z=\acute{z}_0+O(\tau^{-(n+p+1)})$. So $I=\acute{z}_0+O(\tau^{-(n+p+1)})$ which implies $n+p+1\leq \deg I=p+1$ i.e. $n=0$. If $x,y$ take min-form and max-form respectively, then $m_x=0,m_y=M_y$. We argue similarly as in section \ref{l=mx-yy-zz-x} and find that $m_z=0,m+n+1=\gamma p$. So we also have (\ref{l=mx-yy-zz-xeq:011}) and the determinant of (\ref{3x-yy-zz-xeq:12}) is $(C+1)^{-2}C^{-1}(C+1)^{-1}D(j)$ for $j=1,\dots,n+p$, where $D(j)=C\{(\gamma+2)b_1-4-3m-3p\}+(\gamma+2)b_2+1+m+p$. Here $b_i$ are polynomials in $j,m,p,\gamma$ with integer coefficients. If $D(j)=0$, then we see that $-4-3m-3p,1+m+p$ are divisible by $\gamma+2$ which implies $1$ is also divisible by $\gamma+2$; a contradiction. So $D(j)\neq0$ and we obtain $\acute{z}_j=\acute{y}_j=0$ for $j=1,\cdots n+p$. This shows that $I=C\acute{y}_0+\acute{z}_0+O(\tau^{-(n+p+1)})$ which implies $n+p+1\leq p+1$ or $n=0$. 

Now we discuss the case $C=-1,\gamma=2$. If $x,y$ take min-form and max-form respectively, equations (\ref{l=mx-yy-zz-xeq:04.1}) and (\ref{l=mx-yy-zz-xeq:04.2}) yield $\delta(m+p+1)=\ddot{x}_0$, which implies $4p+1=0$; a contradiction. So we only need to discuss when $x,y$ both take the min-form. Then (\ref{l=mx-yy-zz-xeq:04.1}),(\ref{l=mx-yy-zz-xeq:04.2}) imply $C\ddot{y}_0=\ddot{x}_0$ which yields $m=n$. If $m_z\neq0,$ and $2m+1>2p+m_z$, then (\ref{l=mx-yy-zz-xeq:04.2}) and (\ref{l=mx-yy-zz-xeq:4.2}) imply $-\delta M_y=\lambda=\delta V_x$ which yields $3+4n=0$. Thus, $2m+1=2p+m_z$. Substituting (\ref{l=mx-yy-zz-xeq:3,2}) into (\ref{3x-yy-z3eq:7.3}) yields $\acute{z}_j=0$ for $j=1,\dots,m+p$. So $I=\acute{z}_0+O(\tau^{-(m+p+1)})$ which yields $m=0$. Hence, $m_z=0,2m+1-2p\geq1$. From (\ref{l=mx-yy-zz-xeq:4}), we have $\lambda=(m+p+1)(2m+1-2p)^{-1}\nu$.
The function $h$ in (\ref{l=mx-yy-zz-x3eq:1}) satisfies $h=O(\tau^{-(2m+1-2p)})$, then $2m+1-2p\leq\deg h=3$. So $m=p,p+1$.
We cannot use the above method here, as we could find $\tilde{D}(j)=0$ for some small $j$. However, we can find $H=a_1h^{(5)}+a_2h^{(4)}+a_3h'''+a_4h''+a_5h'+a_6h+a_7h^2+a_8hh''+a_9(h')^2+a_{10}hh''$ such that $H=H_0/(t-t_0)+H_1+\cdots$ on the $\mathbf{p_0}$ poles and not all $a_i$ are zero. Then $\tilde{H}=H_0H'+H^2-2H_1H$ is entire and has a unique expansion about any $t_0\in S_0$. Lemma \ref{lem3} shows that $\tilde{H}$ is constant. We have $\tilde{H}=\tilde{H}_0+\tilde{H}_1(t-t_0)+\tilde{H}_2(t-t_0)^2+\cdots$. Here $\tilde{H}_1,\tilde{H}_2$ are rational functions in terms of $p,\nu$ with integer coefficients. So $\tilde{H}_1=\tilde{H}_2=0$ yields two Diophantine equations in $\nu,p$. Eliminating $\nu$ gives a Diophantine equation in $p$, which we find has no positive integer roots. Hence, there are no solutions here.

\subsection{$\mathbf{p_z},\mathbf{p_x},\mathbf{p_0}$ poles and $\lambda=\mu=\nu$}
\label{l=m=nx-yy-z3}
Let $f,g$ be defined as in (\ref{neq0x-yy-z3eq:1}) and (\ref{neq0x-yy-z3eq:2}). We also have $g=(x-Cy+ACz)^2-Dzx$. If $2$ is a root of (\ref{eq:7}), then (\ref{eq:7}) implies that $\rho:=(\alpha+1)^2(\gamma-1)+C^2(\alpha-1)(\gamma+1)^2+C(\alpha+1)(\gamma+1)(\alpha\gamma+\alpha+\gamma-3)=0$ (vice versa). Here, $\alpha=\gamma=1$ implies $\rho=0$. If $\rho\neq0$, the same arguments as in section \ref{neq0x-yy-z3} shows that $g=3f_0f_2-f_1^2$. We find that $3f_0f_2-f_1^2=0$, so $g=0$. We apply (\ref{eq:22}) so that $\lambda=0$. Here, $X,Y,Z$ may have an extra singularity at $0$. Moreover, we have $h:=(X-CY+ACZ)^2-DZY=0$. Differentiating $h=0$ yields $DZ(a_1X+a_2Y+a_3Z)=0$, where $a_1=(C+\alpha)(C(2+\gamma)+1),a_2=-C(3C+1)(C+\alpha),a_3=-C(3C+\alpha)$. This implies $a_1X+a_2Y+a_3Z=0$ which contradicts Remark \ref{remarklinearatleast3} since $a_1=a_2=a_3=0$ has no roots.
If $2$ is a root of (\ref{eq:7}), then $\rho=0$.
Moreover, $x_i,y_i,z_i,$ for $i\geq2$ can be expressed in terms of $x_2$. About $t_0\in S_0$ (we assume that $t_0=0$), we have 
\begin{equation}
\label{l=m=nx-yy-z3eq:9}
{\begin{aligned}
g=\frac{3P_1x_2+\lambda^2(\alpha+1)}{P_3^2}\left(\frac{P_2}{P_1}-\frac{\lambda P_4}{2P_5}t\right)+\frac{P_1^2P_6x_2^2+\lambda^2(P_7x_2+P_8)}{8(\alpha+1)^2P_9P_5P_3^2}t^2+O(t^3),
\end{aligned}}    
\end{equation}
where $P_1=4C(\gamma+1)+4\alpha+4,$ and $P_i\in\mathbb{C}[C,\alpha,\gamma]$ (full expressions are not given). We find that $R_C(P_1,\rho),R_C(P_2,\rho),R_C(P_4,\rho)$ vanish iff $\alpha=\gamma=1$ and $R_C(P_6,\rho)=0$ iff $(\alpha-1)(\gamma-1)=0$. If $(\alpha-1)(\gamma-1)\neq0$, then $P_1P_2P_4P_6\neq0$. Thus, Lemma \ref{lemmafurtherfunction} applies here and $g\in\mathbb{C}$. If $\lambda=0$, then (\ref{l=m=nx-yy-z3eq:9}) yields $g'=0$ by Lemma \ref{lem3}. The coefficient of $t^2$ in (\ref{l=m=nx-yy-z3eq:9}) shows that $x_2=0$. Therefore, $x_i=y_i=z_i=0,$ for $i\geq1$ and $x_0^{-1}x=y_0^{-1}y=z_0^{-1}z$. Thus, $\lambda\neq0$. The coefficients of $t,1$ in $(\ref{l=m=nx-yy-z3eq:9})$ show that, by Lemma \ref{lem3}, $g'=g=0$. However, $g=0$ has been dismissed above. Hence, $(\alpha-1)(\gamma-1)=0$. The case $\alpha=\gamma=1$ has been discussed in section \ref{neq0x-yy-z}. By using (\ref{swapmaps3}), we may assume that $\alpha=1,\gamma>1$. Then, $\rho=0$ implies that $C=-1/(\gamma+1)$, and (\ref{l=m=nx-yy-z3eq:9}) becomes $g=-(\lambda^2+6x_2)(1+2\lambda t)+O(t^2)$ which shows that $g'-2\lambda g=O(t)$. So $g'-2\lambda g=0$ by Lemma \ref{lem3}. Thus, $g=Ke^{2\lambda t}$ for $K\in\mathbb{C}$. Using (\ref{eq:22}), we obtain $h=K$ ($h$ defined above). Differentiating both sides yields $Z+Y\gamma+Z\gamma=0$, which contradicts Remark \ref{remarklinearatleast3}. 

\subsection{$\mathbf{p_z},\mathbf{p_x},\mathbf{p_y}$ poles and $\lambda=\mu=\nu$}
\label{l=m=nx-yy-zz-x}
We construct an entire function of the form $x^qy^rz^s$. If $\alpha=\beta=\gamma=2$, then $D=0$. If $\alpha,\beta\geq2,\gamma>2$, then $xyz=0$ by Lemma \ref{lem3}. If $\alpha=\beta=1$, then $ABC=1,\gamma=1$, which is discussed in section \ref{neq0x-yy-z}. Using $\pi_*$, we assume that $\alpha=1,1<\beta\leq\gamma$. Consider the following relations $q-r-s=u,\gamma s-q-r=v,\beta r-q-s=w$ which yield
\begin{equation*}
\label{l=m=nx-yy-zz-xeq}
Er=(\gamma+1)u+2v+(\gamma-1)w,~Es=(\beta+1)u+(\beta-1)v+2w
\end{equation*}
where $E=(\beta-1)(\gamma-1)-4$. We set $u=Ek_1,v=Ek_2,w=Ek_3$ for $k_1,k_2,k_3\in\mathbb{N}$. Thus, $q,r,s\in\mathbb{N}$ if $E>0$, which is true unless $(\alpha,\beta,\gamma)=$ $(1,2,2),$ $(1,2,3),$ $(1,2,4),$ $(1,2,5),(1,3,3)$. Therefore, $u,v,w\geq1$, and $x^qy^rz^s$ has zeros on $S_3$. So $x^qy^rz^s=0$ by Lemma \ref{lem3}. The cases $(1,2,5),(1,3,3)$ correspond to $D=0$.
The case $\alpha=\gamma=1$ is discussed in section \ref{neq0x-yy-z}. We will go through the rest by showing that (\ref{eq:1}) is solvable and describing how to obtain the explicit solutions at the end of this section. In \cite{lvsystem90}, the authors mention the local explicit forms of $x,y,z$; however, using our procedure, we obtain the global explicit forms. We assume $\lambda=0$ via (\ref{eq:22}) and disregard the meromorphicity of $x,y,z$ here. It has been shown that for each $(\alpha,\beta,\gamma)$, there is a first integral \cite[Table II,p.689]{lvsystem90}. We denote $f=a_1x+a_2y+a_3z$ and set $a_1=-1$. 

If $(\alpha,\beta,\gamma)=(1,2,2)$, then, by using (\ref{conjugatemap}), we choose $C=-(2+\mathrm{j})/3,A=-2-\mathrm{j},B=-1-\mathrm{j},\mathrm{j}=-(1+\mathrm{i}\sqrt{3})/2$. Let $P_1(x,y,z)$ be the polynomial first integral in \cite[Table II, No.16]{lvsystem90}, then $(9\mathrm{i}-3\sqrt{3})P_1(x,y,z)=h$ for some $h\in\mathbb{C}$ (we rescale the first integral so that our calculation becomes neater). We express $(9\mathrm{i}-3\sqrt{3})P_1(x,y,z)=h$ as $a_4f''+a_5f^3+a_6h+a_7ff'=0$ and substitute $f=-x+a_2y+a_3z$ to obtain the expression of the polynomial in $x,y,z$. By letting all the coefficients vanish, we obtain a system of equations in $a_i$ to which we solve and obtain the values of $a_2,\dots,a_7$. This method still works for other cases of $\alpha,\beta,\gamma$ as well. Therefore, we obtain the expression 
    \begin{equation}
    \label{exceq:5}
    (9+3\mathrm{i}\sqrt{3})f^3+54ff'+9(3-\mathrm{i}\sqrt{3})f''-\mathrm{i}h=0,    
    \end{equation}
    where $a_2=-\mathrm{i}/\sqrt{3},a_3=-(1+\mathrm{i}\sqrt{3})/2$.
    By using $f=(3-\mathrm{i}\sqrt{3})u'/(2u)$, we find that (\ref{exceq:5}) becomes $u'''-\tilde{h}^3u=0$, where $\tilde{h}^3=(\mathrm{i}-\sqrt{3})h/108$. 

If $(\alpha,\beta,\gamma)=(1,2,3)$, then, by using (\ref{conjugatemap}), we choose $C=(\mathrm{i}-2)/5,A=(\mathrm{i}-3)/2,B=\mathrm{i}-1$. Let $P_2(x,y,z)$ be the first integral in \cite[Table II, No.17]{lvsystem90}, then we obtain
$P_2(x,y,z)=(7-24\mathrm{i})h$ for some $h\in\mathbb{C}$. Arguing similarly, we obtain the expression
    \begin{equation}
    \label{exceq:8}
    {\begin{aligned}
    a_4h+a_5f^4+a_6(f')^2+a_7f'f^2+a_8ff''+a_9f'''=0, 
    \end{aligned}}    
    \end{equation}
    where $a_2=(-1-2\mathrm{i})/5,a_3=(-1-\mathrm{i})/2,a_4=\{(1+\mathrm{i})a_7-\mathrm{i}a_8\}/16,a_5=\{(1+\mathrm{i})a_7-\mathrm{i}a_8\}/4,a_6=\{(5\mathrm{i}-5)a_7+9a_8\}/2,a_9=(1-\mathrm{i})a_8-\mathrm{i}a_7$.
    Using $f=(1-\mathrm{i})u'/u$, (\ref{exceq:8}) becomes
    \begin{subequations}
    \label{exceq:9}
    {\begin{align}
    \label{exceq:9.1}
    &u^{(4)}-\tilde{h}^4u=0,~\tilde{h}=\sqrt[4]{h}/2&\text{~if~}a_8=(2-2\mathrm{i})/3,~a_7=1,\\
    \label{exceq:9.2}
    &2u'''u'-(u'')^2-\tilde{h}^4u^2=0&\text{~if~}a_8=(1-\mathrm{i})/2,~a_7=1.    
    \end{align}}
    \end{subequations}
    These imply $u=A_1v+A_2v^{-1}+A_3w+A_4w^{-1},~A_1A_2+A_3A_4=0$ where $v=\e^{\tilde{h}t},w=\e^{\mathrm{i}\tilde{h}t},A_i\in\mathbb{C}$ (for arbitrary $a_7,a_8$, (\ref{exceq:8}) is a linear combination of (\ref{exceq:9})). 

If $(\alpha,\beta,\gamma)=(1,2,4)$, then, by using (\ref{conjugatemap}), we choose $C=(\mathrm{j}-2)/7,A=(\mathrm{j}-4)/3,B=\mathrm{j}-1,\mathrm{j}=-(1+\mathrm{i}\sqrt{3})/2$. Let $P_3(x,y,z)$ be the first integral in \cite[Table II, No.19]{lvsystem90}, then $P_3(x,y,z)=(143+180\mathrm{i}\sqrt{3})h$ for some $h\in\mathbb{C}$. Arguing similarly, we obtain the expression
    \begin{equation}
    \label{exceq:12}
    {\begin{aligned}
    &0=a_4(f'')^2+a_5f''f'f+a_6f'''f^2+a_7(f')^3+a_8(f')^2f^2+a_9f''f^3+a_{10}f'f^4\\&\textcolor{white}{0=}+a_{11}f^6+a_{12}f^{(4)}f+a_{13}f'''f'+a_{14}h+a_{15}f^{(5)},
    \end{aligned}}
    \end{equation}
    where 
    \begin{equation*}
    \begin{aligned}
    &a_2=(2\mathrm{j}^2-1)/7,~a_3=-(\mathrm{j}+2)/3,~a_4=-\{\mathrm{j}a_6+\mathrm{j}^2(21a_{15}+a_9)\}/2,\\
    &a_7=\{15a_5-11a_6+165\mathrm{j}^2a_{10}-\mathrm{j}(73a_{15}-154a_9)\}/3,~a_{11}=\{a_6+\mathrm{j}(a_9-a_{15})\}/6\\
    &a_8=\{25a_{15}-27\mathrm{j}a_{10}-31a_9-\mathrm{j}^2(2a_5+13a_6)\}/2,~a_{12}=\mathrm{j}^2(a_9-a_15)+a_{10}\\
    &a_{13}=\mathrm{j}(a_6-a_5)-\mathrm{j}^2(7a_{15}+10a_9)-11a_{10},~a_{14}=\{a_6+\mathrm{j}(a_9-a_{15})\}/162.    
    \end{aligned}    
    \end{equation*}
     Setting $f=-\mathrm{j}u'/u$ and $a_{15}=0,a_9=-\mathrm{j},a_{10}=1$ reduces (\ref{exceq:12}) to
    \begin{subequations}
    \label{exceq:13}
    {\begin{align}
    \label{exceq:13.1}
    &hu^3-216(u'')^3+324u'u''u'''-81u(u''')^2=0&\text{~if~}a_5=-\mathrm{j}^2,a_6=0,\\
    \label{exceq:13.2}
    &2(u'')^2+3u'u'''-u^{(4)}u=0&\text{~if~}a_6=\mathrm{j}^2,a_5=0.  
    \end{align}}
    \end{subequations}
    Manipulating (\ref{exceq:13}) yields $hu'+27u^{(7)}=0$. With (\ref{exceq:13}), we obtain $u=A_1\e^{\tilde{h}t}+A_2\e^{-\tilde{h}t}+A_3\e^{\mathrm{j}\tilde{h}t}+A_4\e^{-\mathrm{j}\tilde{h}t}+A_5\e^{\mathrm{j}^2\tilde{h}t}+A_6\e^{-\mathrm{j}^2\tilde{h}t}+A_7\tilde{h}^{-6},\tilde{h}^6=-h/27,A_i\in\mathbb{C}$, where all $A_i$ satisfy
    \begin{equation*}
    \label{exceq:15}
    {\begin{aligned}
    &6A_4A_6\tilde{h}^6+A_1A_7=0,~A_2A_4(36A_5A_6\tilde{h}^{12}+\mathrm{j}A_7^2)-3\mathrm{j}^2A_5^2A_6\tilde{h}A_7=0,\\
    &36A_5A_6\tilde{h}^{12}-A_7^2=0,~A_3A_4(36A_5A_6\tilde{h}^{12}+\mathrm{j}A_7^2)+\mathrm{j}^2A_5A_6A_7^2.   
    \end{aligned}}    
    \end{equation*}
    For arbitrary parameters $a_5,a_6,a_9,a_{10},a_{15}$, (\ref{exceq:12}) is a polynomial in (\ref{exceq:13}) and their derivatives with respect to $t$..

We note from the discussion above that (\ref{exceq:5}) and (\ref{exceq:8}) are linearisable.
Now we describe how to obtain the explicit forms of $x,y,z$. For the above choices of $a_1,a_2,a_3$, it can be checked that $f=A^{-1}B^{-1}y+B^{-1}z$, which yields 
\begin{equation}
\label{exceq:17}
f=-x-A^{-1}B^{-1}y+B^{-1}z,~f'=\omega_1xy,~f''=f'(\omega_2x+\omega_3y+\omega_4z)    
\end{equation}
where $\omega_i\in\mathbb{C}$. By eliminating $x,z$ in (\ref{exceq:17}), we find that $y=v_2/v_1$ where $v_1,v_2$ are polynomials in $f,f',f''$. Then $x,z$ can be found from (\ref{exceq:17}). Here, $x,y,z$ are rational functions with respect to $\e^{\theta_1 t},\e^{\theta_2 t}$ where $\theta_1,\theta_2\in\mathbb{C}$ (see the explicit forms of $u$ in each case). Hence, $x,y,z$ are meromorphic, then by inverting (\ref{eq:22}), we still obtain $x,y,z$ are meromorphic (when $\lambda\neq0$).

\subsection{$\mathbf{p_z},\mathbf{p_x},\mathbf{p_y},\mathbf{p_0}$ poles and $\lambda=\mu=\nu$}
\label{l=m=nx-yy-zz-x3}
If (\ref{eq:7}) has no positive integer roots, then we use may assume that $\lambda=0$ by using (\ref{eq:22}). Solving (\ref{eq:3}) and (\ref{eq:5}) yields $X_i=Y_i=Z_i=0,i\geq1$, and so $X_0^{-1}X=Y_0^{-1}Y=Z_0^{-1}Z$ which shows that only $\mathbf{p_0}$ poles exist. Hence, (\ref{eq:7}) has a positive integer root which has been shown in \cite{lvsystem90} that $(\alpha,\beta,\gamma)$ are the same as in section \ref{l=m=nx-yy-zz-x}.
\section{Discussion}
\label{discussion}
We have found all meromorphic solutions of the three-dimensional Lotka-Volterra system (\ref{eq:1}).
For any meromorphic solution, all poles of $x$, $y$ and $z$ are simple.  Furthermore, if any one of $x$, $y$ and $z$ has a pole at some point $t_0$, then either they all have a pole at $t_0$, in which case $t_0$ is a type ${\bf p}_0$ pole, or exactly one of $x$, $y$ or $z$ is regular at $t_0$, in which case $t_0$ is a pole of type ${\bf p}_x$, ${\bf p}_y$ or ${\bf p}_z$.  For each of these types of pole, the resonances are at $-1$,$0$ and $1$, except for type ${\bf p}_0$ poles in the case $D=ABC+1\ne0$, in which case the resonances are $-1$, $k_1$ and $k_2$, where $k_1+k_2=1$ and $k_1k_2\ne 0$.  We see that in some cases in Table \ref{tab1}, we obtain $x,y,z\in W$ (Lemma \ref{anslem}). In \cite[Th.1-2]{eremenko2005}, Eremenko shows that given an autonomous differential polynomial equation $P(w,w',\dots,$ $w^{(m)})=0$ that has the finiteness property and a unique dominant term (Definition \ref{defdiffpoly}), the meromorphic solution $w\in W$. For the L-V system, the finiteness property is not immediate due to the presence of non-negative integer resonances. 
Moreover, the single-variable ODEs of $x,y,z$ in (\ref{eq:1}) do not necessarily have a unique dominant term, so results analogous to \cite[Th.1-2]{eremenko2005} cannot be applied.

 Also, the fact that at most one of the resonances $k_1$ and $k_2$ can be positive integers in the $D\ne 0$ case, shows that generically, equations with some meromorphic solutions do not possess the Painlev\'e property, so standard Kowalevskaya-Painlev\'e-type arguments are inadequate.  

Different cases arose naturally based on which types of poles were assumed to be present.  Non-negative integer resonances gave rise to resonances conditions and a lack of uniqueness for Laurent series expansions.  Nevanlinna theory played a central role in our analysis. In particular, it enabled us to construct ``small'' functions that ultimately gave rise to lower-order differential equations characterising the meromorphic solutions.  

In \cite{halburdwang14}, the authors use the fact that the resonance parameters occur far enough apart to construct a small auxiliary function. We use this approach to construct the auxiliary functions. The difficulty here is that (\ref{eq:1}) admits four types of singularities, in contrast to the differential equation concerned in \cite{halburdwang14}, which has one type of singularity. Hence, the constructed (algebraic) functions we obtain throughout sections \ref{holosolu}-\ref{abc+1not0} could have high degree. However, showing that all these constructed functions here are small is due to the slow growth of $x,y,z$ stated in Lemma \ref{lem2}. The role of slow growth of $x,y,z$ extends further to the fact that all the small constructed functions are actually constant (Lemmas \ref{lem3} and \ref{lemmafurtherfunction}). This is where we derive the finiteness property, by examining the local series coefficients that have the resonance parameters, as discussed in section \ref{neq0x-y3} and so on. Finally, in Lemma \ref{anslem}, we show that finiteness property and slow growth property imply the Class $W$ solutions (see the end of section \ref{nevanlinna}).
\bibliographystyle{ieeetr}


\end{document}